\newtheorem{theorem}{Theorem}[section]
\newtheorem{lemma}[theorem]{Lemma}
\newtheorem{proposition}[theorem]{Proposition}
\newtheorem{definition}{Definition}
\numberwithin{theorem}{section} \numberwithin{equation}{section}
\newcommand{\nc}{\newcommand}
\nc{\be}{\begin{equation}} \nc{\la}{\label} \nc{\ba}{\begin{array}}
\nc{\ea}{\end{array}} \nc{\bs}{\begin{split}} \nc{\es}{\end{split}}
\newcommand{\R}{\mathbb{R}}
\newcommand{\cA}{{\cal{A}}}
\newcommand{\cH}{{\cal{H}}}
\newcommand{\imsgap}{{\gamma}}
\newcommand{\vphi}{{\varphi}}
\newcommand{\bet}{{\beta}}      
\newcommand{\al}{{\alpha}}
\newcommand{\del}{{\delta}}
\newcommand{\lam}{{\lambda}}           
\newcommand{\Om}{\Omega}                
\newcommand{\s}{{\sigma}}
\renewcommand{\H}{H}
\newcommand{\E}{E}
\newcommand{\Einfty}{E(\infty)}
\nc{\G}{\Gamma} \nc{\g}{\gamma} \nc{\Omt}{\tilde{\Omega}}
\nc{\ta}{\tau} \nc{\w}{\omega} \nc{\io}{\iota} \nc{\h}{\theta}
\nc{\Si}{\Sigma}
\nc{\bP}{\bar{P}} \nc{\bQ}{\bar{Q}} \nc{\bL}{\bar{L}}
\newcommand{\DETAILS}[1]{}
\nc{\ran}{\rangle}
\nc{\lan}{\langle}
\newcommand{\Ran}{\operatorname{Ran}}
\newcommand{\Tr}{\operatorname{Tr}}
\newcommand{\supp}{\operatorname{supp}}
\newcommand{\rank}{\operatorname{rank}}
\newcommand{\one}{\mathbf{1}}
\nc{\bfone}{{\bf 1}}
\newcommand{\ls}{\lesssim}
\newcommand{\dotge}{\dot \ge}
\newcommand{\dotle}{\dot \le}
\newcommand{\n}{\nabla}
\newcommand{\p}{\partial}
\newcommand{\at}{\operatorname{at}}
\begin{document}
\title{Long Range Behaviour of van der Waals Force}

\author{Ioannis Anapolitanos \thanks{Dept.~of Math.,
Univ. of Stuttgart, Stuttgart, Germany; Supported by DFG under Grant
GR 3213/1-1.}\ \qquad \qquad Israel Michael Sigal
\thanks{ Dept.~of Math., Univ. of Toronto, Toronto, Canada; Supported by NSERC Grant No. NA7901}
}
\bigskip

\bigskip

\bigskip

\date{To Elliott Lieb with admiration and  friendship\\ \bigskip \bigskip
August 24, 2013}

\maketitle

\begin{abstract}
   We prove van der Waals - London's law of decay of the van der Waals force for a collection of neutral atoms at large separation.
\end{abstract}


\section{Introduction}
 Van der Waals force between atoms and molecules plays a fundamental role in
quantum chemistry, physics and material sciences. For instance, it
defines the chemical character of many organic compounds and enable
geckos - which can hang on a glass surface using only one toe - to
climb on sheer surfaces (see the entry "Van der Waals force" in Wikipedia). 
 It explains why water  condenses from vapor as well as  many properties of molecular compounds, including crystal structures (e. g. the shapes of snowflakes), melting points, boiling points, heats of fusion and vaporization, surface tension, and densities. It forces gigantic molecules like enzymes, proteins, and DNA into the shapes required for biological activity (see \cite{Sen}).

 A microscopic
explanation of this force was given by F. London soon after the
discovery of quantum mechanics and was one of its early triumphs.
  This heuristic explanation showed that this force has a
universal behavior at large distances - it decays as the inverse
sixth power of the distance between atoms. This behavior was
confirmed in \cite{LT} by proving - through a sophisticated test
function construction - an upper bound. Our goal in this paper is to
provide a complete proof of the  van der Waals - London decay law
for atoms.

Let $-e$ and $m$ denote the electron charge and mass. Consider a
system of $M$ multielectron atoms which we call a molecule though we
do not assume binding between atoms. In the units where  $\hbar=1$
the hamiltonian of the system is
\begin{equation*}
H_{\rm mol}= \sum_{i=1}^{N} (- \frac{1}{2
m}\Delta_{x_i}-\sum_{j=1}^{M} \frac{e^2
Z_j}{|x_i-y_j|})+\sum_{i<j}^{1,N}
\frac{e^2}{|x_{i}-x_{j}|}-\sum_{j=1}^M \frac{1}{2 m_j} \Delta_{y_j}
+\sum_{i<j}^{1,M} \frac{Z_{i} Z_{j}e^2}{|y_{i}-y_{j}|}.
\end{equation*}
  Here $N$ is the total number of electrons, $x_i,y_i\in \R^3$ denote
the coordinates of the electrons and the nuclei, respectively, $e
Z_j$ is the charge of the $j$-th nucleus, $m_j$ is the mass of
the $j$-th nucleus and $\Delta_{x_j}$ is the Laplacian and gradient acting on the coordinate $x_j$. We consider a system of neutral atoms so we must
have $\sum_{j=1}^M Z_j=N$.
 The operator $H_{\rm mol}$ acts on the subspace
$\cH_{\textrm{fermi}}^{\textrm{mol}}$ of the space
$L^2(\R^{3(N+M)})$, which accounts for the fact that the electrons
are identical particles and are fermions and therefore they obey the
Fermi-Dirac statistics. Also, possibly, some of the nuclei are
identical and obey either the Fermi-Dirac or Bose-Einstein
statistics (more details are given below).
 For $M=1$, $H_{\rm mol}$ is the Hamiltonian of the atom with the nucleus of charge $eZ=eZ_1$.

To define the interaction energy, let $a$ be the decomposition of the molecule into neutral atoms and/or ions, each with its own nucleus,  and let $H_a$
be the sum of the corresponding atomic or ionic Hamiltonians (see the next section for precise definitions).  We define the energy  $\Einfty$ of the system with the atoms or ions infinitely far from each other, as
\begin{equation}\label{Einftydef}
\Einfty=\min_a E_a,\ \quad \mbox{where}\ \quad  E_a:=\inf \s(H_a).
\end{equation}
(Of course, one expects that $\Einfty$ is obtained by taking the minimum over the atomic decompositions only, see the discussion below.) 
\DETAILS{Let $E_{m, n}$ denote the  ground state energy
of the ion with a nucleus of charge $e Z_m$ and $Z_m-n$ electrons, so that $E_{m, 0}=E_{m}$ is the ground state energy 
of the atom corresponding to the $m$-th nucleus of charge $Z_m$, and}
Furthermore, let $y=(y_1,...,y_M)$ be the collection of the nuclear
co-ordinates. 
The interaction (or cohesive) energy $W(y)$ between atoms in this
system is defined as
\begin{equation}\label{interene}
W(y):=E(y)-\Einfty,
\end{equation}
where  $E(y)$ is the ground state energy of the system with
positions of the nuclei fixed, i.e.  the ground state energy of the
hamiltonian
\begin{equation}\label{Hy}
H_N(y)=\sum_{i=1}^{N} (- \frac{1}{2m} \Delta_{x_i}-\sum_{j=1}^{M}
\frac{e^2 Z_j}{|x_i-y_j|})+\sum_{i<j}^{1,N}
\frac{e^2}{|x_{i}-x_{j}|}+\sum_{i<j}^{1,M} \frac{Z_{i} Z_{j}
e^2}{|y_{i}-y_{j}|},
\end{equation}
 acting  on the subspace $\cH_{\textrm{fermi}}$ of the space $L^2(\R^{3N})$, which accounts for  the Fermi-Dirac statistics
of electrons.
\DETAILS{, and   $\Einfty$ is the energy of the system with the
atoms infinitely far from each other,
\begin{equation}\label{Einftydef}
\Einfty=\min\{ \sum_{m=1}^M E_{m, n_m}| \sum_{m=1}^M n_m=0\}.
\end{equation}
(Of course, one expects that $\Einfty=\sum_{m=1}^{ M} E_{m}$, see the discussion below.) 
As we will see below  $E(y)$ as well as $E_m$ are isolated eigenvalues of 
finite multiplicities. 
Also}
%

The hamiltonian \eqref{Hy} is called the Born-Oppenheimer hamiltonian. It arises as the central object in the key technique in solving the eigenvalue problem for $H_{\rm mol}$, which is called the Born-Oppenheimer approximation, and which plays an important role in quantum chemistry (for example minima of its ground state energy $E(y)$ 
 determine shapes of molecules). 
 In this approximation the ground state energy  $E(y)$ (or the energy of an excited
state) of $H_N(y)$ is considered as the potential energy of the
nuclear motion, which leads to the Hamiltonian
\[H_{\textrm{nucl}}:= -\sum_{j=1}^M \frac{1}{2m_j}\Delta_{y_j} + E(y).\]
One expects that due to the fact that the ratio of the electron and
nuclear mass being very small, the eigenvalues of
$H_{\textrm{nucl}}$ give a good approximation to the eigenvalues of
$H_{\rm mol}$.
   For rigorous results on the
Born-Oppenheimer approximation see original articles \cite{CDS,
KMSW, LeL, Hag, PST} and the textbook \cite{GS}.

One can define the interaction energy to any order  in the
electron to nuclei mass ratio (see Appendix \ref{sec:beyond BO}),
but this will produce only correspondingly small corrections to our
results.

It is expected, after van der Waals, that $W(y)$ is a sum of pair
interactions, $W_{ij}$, which are attractive and decay at infinity
as $-|y_i-y_j|^{-6}$.
 More precisely, one expects that 
\begin{equation}\label{vdWlaw}
W(y)=-\sum_{i<j}^{1,M}\frac{e^4
\sigma_{ij}}{|y_i-y_j|^6}+ O\big (\frac{e^4}{R^7}\big), 
 \end{equation}
where $\sigma_{ij}$ are positive constants depending only on the
parameters  of the pair of atoms $i,j$ and  
\begin{equation}\label{Rdef}
R=\min\{|y_i-y_j|: 1 \leq i, j \leq M, i \neq j\},
\end{equation} 
provided that $R$ 
 is large enough. Here and in what follows the remainder $O(\dots)$ signifies the behaviour in $y$ and might depend on $N$ and $M$, see however the remark below. 

 Let $E_{m, n},\ n\le Z_m,$ denote the  ground state energy
of the ion with a nucleus of charge $e Z_m$ and $Z_m-n$ electrons and $E_{m}=E_{m, 0}$, the ground state energy 
of the atom corresponding to the $m$-th nucleus of charge $Z_m$. We formulate a property of many body systems playing an important role below:
\begin{itemize}
\item [(E)]  
\qquad $\sum_{i=1}^M E_{i} < \sum_{i=1}^M  E_{i,n_i}, \quad \forall (n_1, \dots n_M): \sum_i n_i=0,\ \sum_i |n_i|> 0.$    
\end{itemize} 

\smallskip
\noindent We discuss this property below. Here we mention only that, since the ground state energies depend on  
the underlying spaces on which the hamiltonians are defined,  Property (E)
 depends on the symmetry type $\s$, defined below, 
and that experiments and numerical computations (see below) show that it holds for all elements for which it was tested, however, theoretically, it proven only for a system of several hydrogen atoms.

\medskip
The simplest symmetry type for fermions of spin $\frac12$ is the one
corresponding to  totally antisymmetric
functions, or, what is the same, to the greatest possible spin.   
We begin with formulating our restlts in this special case.
\begin{theorem}[van der Waals law; highest spin]\label{thm:vdW-maxspin} Assume that the Hamiltonian $H_N(y)$
acts on the space $\mathcal{H}_A=\bigwedge_1^N L^2(\mathbb{R}^3) $ of purely antisymmetric functions
and assume Condition (D) stated below. Then for large distances
between atoms \eqref{vdWlaw} holds for some constants
$\sigma_{ij}>0$ depending on the nature of the atoms $i,j$, if and
only if Property (E) stated above holds.
\end{theorem}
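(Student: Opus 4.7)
The plan is to handle both directions by comparison with a reference operator whose infimum spectrum equals $E_\infty$. Under (E) the natural reference is the decoupled atomic Hamiltonian $H_0:=\sum_{m=1}^M H_m$, which then has ground-state energy $\sum_m E_m=E_\infty$; when (E) fails, the correct reference is instead a decoupled \emph{ionic} Hamiltonian, and the change of reference produces a much slower decay that contradicts (1.5). Writing $H_N(y)=H_0+I(y)$ with $I(y)$ the inter-atomic Coulomb interaction, on states localized near the nuclei $I(y)$ admits a multipole expansion $I(y)=\sum_{k,l\geq 1}I_{kl}(y)$ with $\|I_{kl}(y)\psi\|=O(R^{-k-l-1})$, the leading dipole--dipole piece $I_{11}$ being of size $R^{-3}$. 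Condition (D), interpreted as a non-degeneracy and spectral-gap hypothesis on the atomic ground states, ensures that the projection $P_0$ onto $\ker(H_0-E_\infty)$ is of finite rank, isolated from the rest of $\sigma(H_0)$ by a gap $\gamma>0$, and that atomic ground states carry no permanent dipole moment; consequently $P_0 I_{11}(y)P_0=0$ and the first-order contribution to $E(y)-E_\infty$ vanishes.

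\textbf{Sufficiency (E)$\Rightarrow$(1.5).} For the upper bound I use a Lieb--Thirring trial state $\Psi=\Phi_0-\overline{P}_0(H_0-E_\infty)^{-1}\overline{P}_0 I_{11}(y)\Phi_0$ with $\Phi_0\in\Ran P_0$; a second-order Rayleigh--Ritz computation gives $E(y)\leq E_\infty-\sum_{i<j}e^4\sigma_{ij}|y_i-y_j|^{-6}+O(R^{-7})$, with $\sigma_{ij}>0$ the London coefficient determined by the dipole polarizabilities of atoms $i$ and $j$, positive because the reduced resolvent of a positive gapped operator is positive on the range of the dipole. For the matching lower bound I first deploy an IMS partition of unity adapted to the atomic clusters, which together with Agmon-type exponential decay of atomic bound states replaces $H_N(y)$ by $H_0+I(y)$ modulo errors of order $e^{-cR}$. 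I then apply the Feshbach--Schur map with respect to $P_0$, producing the effective operator
\[
F_{P_0}(z)=P_0 I(y)P_0-P_0 I(y)\overline{P}_0\bigl(\overline{P}_0(H_0+I(y)-z)\overline{P}_0\bigr)^{-1}\overline{P}_0 I(y)P_0
\]
whose infimum spectrum near $E_\infty$ determines $E(y)-E_\infty$. Expanding the reduced resolvent in a Neumann series about $\overline{P}_0(H_0-E_\infty)^{-1}\overline{P}_0$ and grouping contributions by multipole order, the quadratic dipole--dipole term reproduces the same London sum while all higher-multipole and higher-order Neumann contributions are absorbed in $O(R^{-7})$.

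\textbf{Main obstacle.} The delicate point is the uniform-in-$y$ $O(R^{-7})$ control of the Feshbach remainder. This requires (a) dominating the Coulomb singularities in $I(y)$ at the points $x_i^{(k)}=y_l$, $k\neq l$, by the kinetic energy of $H_0$ via Hardy/Kato inequalities, so that $\|I_{kl}(y)(\overline{P}_0(H_0-E_\infty)\overline{P}_0)^{-1/2}\|\ls R^{-k-l-1}$, and (b) using the gap $\gamma$ to invert the full reduced resolvent as a convergent Neumann series about the free one. Additional bookkeeping is needed for antisymmetrization across different clusters, since the total wave function is antisymmetric in all $N$ electron coordinates while $H_0$ only acts antisymmetrically within each atomic factor; the IMS localization together with Pauli exclusion show that this discrepancy contributes only exponentially small errors.

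\textbf{Necessity (1.5)$\Rightarrow$(E).} I argue contrapositively. Suppose (E) fails, so $E_\infty=\sum_m E_{m,n_m}$ for some $(n_m)$ with $\sum_m n_m=0$ and not all $n_m=0$. Let $\psi_m^{\rm ion}$ be (near-)minimizers for the corresponding ionic Hamiltonians and form the antisymmetrized product $\Psi_{\rm ion}$. The leading inter-fragment contribution in $\langle\Psi_{\rm ion},H_N(y)\Psi_{\rm ion}\rangle$ is the monopole--monopole Coulomb term between net charges $n_i,n_j$, giving
\[
E(y)\leq E_\infty + \sum_{i<j}\frac{e^2 n_i n_j}{|y_i-y_j|} + O(R^{-2}).
\]
Choosing $y$ so that oppositely-charged fragments lie closer to each other than like-charged ones (which is possible because $\sum n_m=0$ and not all $n_m=0$), the leading sum is strictly negative, so $W(y)\leq -c R^{-1}$ along a family with $R\to\infty$. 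This contradicts (1.5), which forces $|W(y)|\leq C R^{-6}$ at large $R$. Hence (E) must hold.
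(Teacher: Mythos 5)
Your high-level strategy matches the paper's (Feshbach--Schur reduction, multipole expansion of the intercluster interaction, vanishing dipole term by rotational invariance, second-order dipole--dipole term giving $R^{-6}$), and your necessity argument -- an ionic trial state whose monopole--monopole Coulomb term gives an $O(R^{-1})$ deviation of the wrong size -- is a valid variational alternative to the paper's second Feshbach argument, arguably more explicit about sign. However, there are two related genuine gaps in the sufficiency direction. You project onto $P_0=\ker(H_0-E_\infty)$ for a \emph{single} fixed assignment of electron labels to atoms, $H_0=\sum_m H_m$. This operator does not commute with $S_N$, so it does not preserve $\mathcal{H}_A=\bigwedge_1^N L^2(\mathbb{R}^3)$ and the construction is ill-posed there; and even working on $L^2(\mathbb{R}^{3N})$, the complement $\Ran(P_0^\bot)$ still contains the approximate ground states $\Psi_b$ localized near every \emph{other} atomic decomposition $b\neq a$, all with energy within $e^{-cR}$ of $E_\infty$. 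Consequently $(\overline{P}_0 H \overline{P}_0 - E)^{-1}$ is not bounded uniformly near $E_\infty$ and the Feshbach map is undefined at the relevant energy. The paper's construction $P=\sum_{a\in\mathcal{A}^{at}}P_{a,R}$ -- summing the cut-off atomic ground states over \emph{all} atomic decompositions, precisely so that $P$ commutes with every $T_\pi$ and acts on $\mathcal{H}_A$ -- is what makes the complement genuinely gapped. Your remark that cross-decomposition discrepancies are ``exponentially small by IMS and Pauli exclusion'' is not correct: the components of the antisymmetrized state in the other decomposition sectors are of the same order as in the reference sector, not exponentially smaller.

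Second, the stability bound $P^\bot H P^\bot \geq E_\infty + 2\gamma$ is exactly where Property (E) enters, and your sketch assumes the very gap that must be established. The paper proves it (Section~\ref{Hbotbndseveral}) by an IMS partition of unity over \emph{all} decompositions $a\in\mathcal{A}$, including ionic ones; for ionic $a$, the estimate $H_a\geq E_\infty+\gamma_0$ is precisely what Property (E) buys, and without it the reduced resolvent is not bounded below. Since the theorem is an if-and-only-if, making this mechanism explicit is essential, not a technicality to fold into a Neumann-series remark. A minor inaccuracy: the spherical symmetry of the one-electron density (hence the vanishing permanent dipole) is a consequence of rotational invariance of the ground-state projection (Proposition~\ref{prop:spherical}), not of Condition (D); Condition (D) is used instead to conclude that the Feshbach blocks restricted to irreducible subspaces are multiples of the identity.
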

The theorem  above is a corollary of Theorem \ref{thm:vdW-sig},
stated below, dealing with general symmetry types.


Now we define the physical state space, $\cH_{\textrm{fermi}}$, of
the Born-Oppenheimer molecule.  Since electrons are identical
particles and are fermions of spin $\frac{1}{2}$, the state space of
the system of $N$ electrons is the space
\[\bigwedge_1^N(L^2(\mathbb{R}^3) \otimes \mathbb{C}^{2})\]
of  $L^2-$functions,  $\Psi(x_1, s_1, \dots, x_N, s_N)$, of
co-ordinates,  $x_1, \dots, x_N$, and spins, $ s_1, \dots,  s_N$
(with $s_j \in \{-\frac{1}{2},\frac{1}{2}\}, j=1,...,N$) that are
antisymmetric with respect to permutations of pairs $(x_i,s_i)$. The
space $\cH_{\textrm{fermi}}$ is the subspace of $L^2(\R^{3N})$ given
by the projection of the space $\bigwedge_1^N(L^2(\mathbb{R}^3)
\otimes \mathbb{C}^{2})$ onto the $L^2-$functions of the
co-ordinates alone,
\begin{equation*}
\cH_{\textrm{fermi}}:=\{ \lan\chi, \Psi\ran_{\text{spin}} | \Psi\in
\bigwedge_1^N(L^2(\mathbb{R}^3) \otimes \mathbb{C}^{2}),\ \chi:
\{-\frac{1}{2},\frac{1}{2}\}^N \rightarrow \mathbb{C}\},
\end{equation*}
where  $\lan\chi, \Psi \ran_{\text{spin}} :=\sum_{s_1, \dots, s_N
\in \{-\frac{1}{2},\frac{1}{2}\}} \bar\chi(s_1, \dots,  s_N)
\Psi(x_1, s_1, \dots, x_N, s_N)$.

We relate this space to  irreducible representations, $T^\sigma_{S_N}$, of
the group $S_N$, of permutations of $N$ indices. Consider  the
unitary representation $T_{S_N}: S_N \rightarrow U(L^2(\mathbb{R}^{3N}))$
(unitary operators on $L^2$)  of  $S_N$ on the space $L^2(\R^{3N})$, 
 given by $ \pi   \rightarrow T_\pi$, 
with 
\begin{equation}\label{Tpi}
(T_{\pi} \Psi)(x_1,...,x_N)=
\Psi(x_{\pi^{-1}(1)},x_{\pi^{-1}(2)},...,x_{\pi^{-1}(N)}).
\end{equation}
Then the space $\cH_{\textrm{fermi}}$ can be written as
\begin{equation}\label{HFermiHsigma}
\cH_{\textrm{fermi}}=\sum_{\sigma} \mathcal{H}^\sigma,
\end{equation}
where $\sigma$ runs over  irreducible representations of the group
$S_N$ corresponding to at most two-column Young diagrams and
$\mathcal{H}^{\sigma}$ is the subspace of $L^2(\R^{3N})$ on which
this representation reduces to
  multiple of the irreducible representation of the type
$\sigma$ (see Section \ref{sec:setupstat} for definitions and
details). We call  irreducible representation labels $\sigma$ the
\textit{symmetry types}.
\DETAILS{The orthogonal projection on the subspace
$\mathcal{H}^{\lambda}$ is given by
\begin{equation*}
P^{\lambda} =\frac{\chi_{id}^{\lambda}}{N!} \sum_{g \in S_N}
\chi_{g^{-1}}^{\lambda}T_g,
\end{equation*}
with $\chi_g^{\lambda}$ the character of the irreducible
representation $\lambda$ evaluated at $g$.}

 The fact that the electrons  are
identical particles is expressed in the property that $H_N(y)$
commutes with the permutations
\begin{equation}\label{Hsym}
H_N(y)  T_\pi=T_\pi H_N(y), \quad \text{ } \forall \pi \in S_N,
\end{equation}
and therefore the subspaces $\mathcal{H}^{\sigma}$  are invariant
under $H_N(y)$. This allows us to introduce  the ground state energy
of the system for the symmetry type $\s$ by
\begin{equation*}
E^\s(y)= \inf \sigma(H_N(y)|_{\mathcal{H}^\s}).
\end{equation*}

To define $E^\s(\infty)$ we need some notation. Let $a$ be the
decomposition of the molecule into neutral atoms and/or ions,
$S(a)\subset S_N$ be the subgroup of $S_N$ consisting of the
permutations that keep the clusters of $a$ invariant and let $H_a$
be the sum of the corresponding atomic or ionic Hamiltonians (see
the next section for precise definitions). Fix an irreducible representation 
$\sigma=\s (S_N)$ of the group $S_N$. The space $\cH^\sigma$ is invariant under 
 the representation of $S(a)$, but the restriction of  the latter to $\cH^\sigma$ is not necessarily multiple of the irreducible one. Therefore, there exists a family, $I^\s$, of irreducible representations of $S(a)$ such that
 $\cH^\sigma=\oplus_{\alpha  \prec \sigma} \cH_a^\alpha$ (with $\cH_a^\alpha$ non-empty) and  the representation of $S(a)$ on $\cH_a^\alpha$ is multiple of the irreducible $\al$.  
\DETAILS{ Then the restriction $\sigma|_{S(a)}$ (in the sense of the corresponding subspaces) is not necessarily irreducible. Therefore, there
exists a family, $I^\s$, of irreducible representations of $S(a)$
such that
\begin{equation}\label{sig-deco}
\sigma|_{S(a)}=\oplus_{\alpha \in I^\s} {\alpha}.
\end{equation}}

The representations $\alpha = \al (S(a))\in I^\s$ are called
induced representations and we write $\alpha \prec \sigma$. 
(For hydrogen atoms the group $S(a)$ is trivial and the
\textit{construction of induced representations should be omitted}.)
 This definition  implies that the lowest energy of the infinitely separated atoms or ions, when the total system has a symmetry type $\s$, is \begin{equation*}E^{\s}(\infty)=\min_{a,\al\prec \s} \inf \sigma(H_a^\al),\end{equation*}
  where $H_a^\al$ denotes the restriction of $H_a$ onto the subspace on which the representation of $S(a)$ is multiple of the irreducible representation of type $\al$. (Property (E) implies that the minimum can be taken over only atomic decompositions $a$.)
The interaction energy for the symmetry type $\s$ is now defined as
\begin{equation*}
W^\s(y):=E^\s(y)-E^\s(\infty).
\end{equation*}

Finally, we state Condition (D) of Theorem \ref{thm:vdW-maxspin} 
and  of the theorems below. 
 We write
 \begin{equation*}
 \alpha \prec \prec \sigma\ \mbox{if}\ \alpha \prec \sigma\ \mbox{and}\
\inf\s(H_a^{\alpha})= \min_{\beta} \inf \s(H_a^\beta),
\end{equation*} 

\begin{itemize}
\item[(D)] For each atomic decomposition $a$ and for each induced symmetry type $\alpha \prec \prec \sigma$, the
ground state subspace of $H_a^\alpha$ consists only of one copy of
the irreducible representation of type $\alpha$.
\end{itemize}
One expects that for every symmetry type $\alpha$, the ground state
subspace consists of a single copy of the irreducible representation
of the symmetry group, but proving this is an open problem. For a
system of several hydrogen atoms Condition (D) follows from the fact
that $S(a)$ is the trivial group    and from Perron-Frobenious
argument  (see below).

Condition (D) is omitted if the statistics is not taken into account.

Now, we define what we mean by the van der Waals law for fixed
symmetry types and present a result establishing this law.
\begin{definition}[van der Waals - London law for a fixed symmetry type]\label{def:vdW-sig}
We say that the van der Waals law holds for a symmetry type $\s$ if
there exist positive constants $\sigma^{\s,\alpha}_{ij},\ \alpha
\prec \prec \sigma,$ (defined in \eqref{sijsalpha}) such that
\begin{equation}\label{Wsig}W^\s(y)=\min_{\alpha \prec \prec \sigma} W^{\s,\alpha}(y)+O\big (\frac{e^4}{R^7}\big), 
\end{equation} 
where
\begin{equation}\label{Wsig-al}
W^{\s,\alpha}(y):=-\sum_{i<j}^{1,M} \frac{e^4 \sigma^{\s,\alpha}_{ij}}{|y_i-y_j|^6}.
\end{equation}
\end{definition}

The following theorem gives the van der Waals law for fixed symmetry
types.
\begin{theorem}[van der Waals forces for a fixed symmetry type]\label{thm:vdW-sig}
Assume Condition (D) below. Then for every symmetry type $\s$,  the
van der Waals law holds if and only if Property (E) reinterpreted
for the symmetry type $\sigma$ holds.
\end{theorem}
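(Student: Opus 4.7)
The plan is to prove both directions of the equivalence by establishing matching upper and lower bounds on $E^\sigma(y)$ via Rayleigh--Schr\"odinger perturbation theory around infinite separation, labelled by cluster decompositions. For each decomposition $a$ (atomic or ionic) and each induced representation $\alpha \prec \sigma$, write $H_N(y) = H_a + I_a(y)$, where $I_a(y)$ is the sum of intercluster Coulomb interactions. If all clusters of $a$ are neutral, the multipole expansion of $I_a(y)$ starts at the dipole--dipole order $R^{-3}$; if some cluster carries net charge, $I_a(y)$ contains a monopole--monopole term of order $R^{-1}$. This dichotomy is what ultimately ties the asymptotics of $W^\sigma(y)$ to Property (E).

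For the upper bound I would apply Rayleigh--Ritz on $\cH^\sigma$. For each atomic $a$ and each $\alpha \prec \prec \sigma$, Condition (D) ensures the ground state subspace of $H_a^\alpha$ is a single copy of the irrep $\alpha$, so degenerate Rayleigh--Schr\"odinger theory reduces to working on a finite-dimensional subspace on which the dipole--dipole perturbation is automatically diagonalized by symmetry. As trial vector I take a ground state $\Psi_0$ of $H_a^\alpha$ corrected by its first-order perturbation $-(H_a^\alpha - E_a^\alpha)^{-1} \bar P_0 I_a^{\rm dip}(y) \Psi_0$, where $\bar P_0$ projects off the unperturbed ground state and $I_a^{\rm dip}$ is the dipole--dipole part of $I_a(y)$. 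Evaluating the quadratic form yields $E_a^\alpha - \sum_{i<j} e^4 \sigma^{\sigma,\alpha}_{ij}/|y_i-y_j|^6 + O(R^{-7})$, with $\sigma^{\sigma,\alpha}_{ij}>0$ coming from the negativity of $-(H_a^\alpha-E_a^\alpha)^{-1}\bar P_0$ on the range of the cluster dipole operator; the precise expression in \eqref{sijsalpha} is read off as the quadratic form of this reduced resolvent applied to single-cluster dipole moments. Higher multipoles and cross terms contribute at order $R^{-7}$ or better thanks to Agmon-type exponential decay of the unperturbed atomic eigenstates.

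The matching lower bound is the main technical obstacle. I would use an IMS-type partition of unity indexed by cluster decompositions together with a Feshbach--Schur reduction onto the ground state subspace of each $H_a^\alpha$. The IMS localization splits $H_N(y)$, up to controlled kinetic errors on the partition overlaps, into a sum of cluster Hamiltonians plus the intercluster interaction $I_a(y)$; restricting to a sector labelled by $(a,\alpha)$ with $\alpha \prec \sigma$, the Schur complement onto the ground state of $H_a^\alpha$ reproduces exactly the second-order shift $-\sum e^4 \sigma^{\sigma,\alpha}_{ij}/|y_i-y_j|^6$ up to $O(R^{-7})$, provided (i) the spectral gap above $E_a^\alpha$ is uniform in $y$, which follows from the spectral gap of the isolated clusters guaranteed by Condition (D), and (ii) the off-shell multipole tail paired with the reduced resolvent is controlled by exponential decay of the atomic eigenfunctions. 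Minimizing over $(a,\alpha)$ attaining $E^\sigma(\infty)$ yields $E^\sigma(y) \ge E^\sigma(\infty) + \min_{a,\alpha \prec\prec \sigma}\bigl[-\sum_{i<j} e^4 \sigma^{\sigma,\alpha}_{ij}/|y_i-y_j|^6\bigr] + O(R^{-7})$.

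Combining the two bounds gives $W^\sigma(y) = \min_{a,\alpha \prec\prec \sigma} W^{\sigma,\alpha}(y)+O(R^{-7})$ where the outer minimum ranges over all decompositions $a$ attaining $E^\sigma(\infty)$. If Property (E) holds for $\sigma$, only atomic decompositions realize this minimum, so the expression collapses to \eqref{Wsig}--\eqref{Wsig-al} with positive constants, which is the van der Waals law. Conversely, if (E) fails, some ionic decomposition $a^*$ realizes $E^\sigma(\infty)$; its clusters carry net charges, so the trial-state analysis of the second paragraph, applied to $a^*$ with the \emph{monopole} leading term retained, delivers the upper bound $W^\sigma(y) \le \sum_{i<j} q_i^* q_j^*/|y_i-y_j| + O(R^{-2})$. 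This $O(R^{-1})$ behaviour cannot be matched by any expression of the form $-\sum e^4 \sigma_{ij}/|y_i-y_j|^6 + O(R^{-7})$ with positive $\sigma_{ij}$, contradicting the van der Waals law and completing the contrapositive.
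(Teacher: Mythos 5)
Your proposal sketches the right family of tools (multipole expansion, perturbation around cluster ground states, IMS localization, Feshbach--Schur), but there is a conceptual gap in how Property (E) enters the lower bound, and you leave the symmetrization step unaddressed.

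The main issue is in your lower bound: you claim the needed spectral gap ``follows from the spectral gap of the isolated clusters guaranteed by Condition (D).'' This misattributes the source of the gap. Condition (D) controls only the \emph{multiplicity} of the ground state within the irrep $\alpha$; it says nothing about where the rest of the spectrum of $P^{\bot}H^{\sigma}P^{\bot}$ sits relative to $E^{\sigma}(\infty)$. In the paper's argument (eq.\ \eqref{Hbot-low-bnd-stat} and Appendix \ref{sec:Hbot-low-bnd-stat}), the stability bound $H^{\sigma\bot}\ge E^{\sigma}(\infty)+2\gamma^{\sigma}$ is proved via an IMS partition over \emph{all} decompositions $a\in\mathcal{A}$, and the contribution of ionic decompositions is bounded below using $\gamma_2^{\sigma}=\min_{a\in\mathcal{A}/\mathcal{A}^{\rm at}}\,\inf\sigma(H_a^{\alpha})-E^{\sigma}(\infty)$, whose strict positivity is precisely Property (E). Without (E), those ionic cells do not sit above $E^{\sigma}(\infty)$, the projection onto cut-off atomic ground states misses genuine low-lying states, and the Feshbach--Schur reduction is simply not well-defined. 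So Property (E) is not only used at the end to ``collapse the minimum to atomic decompositions''; it is load-bearing for the invertibility hypothesis (b) of the Feshbach--Schur method. Your plan has the logical dependence backwards.

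A second, smaller omission: for the upper bound you take ``a ground state $\Psi_0$ of $H_a^{\alpha}$'' as a trial vector, but a cluster product state lives in a sector of $S(a)$, not necessarily in $\mathcal{H}^{\sigma}$. You must project by $Q^{\sigma}$ and control the overlap terms $\langle T_{\pi}\Psi,\,H\,T_{\pi'}\Psi\rangle$ for $\pi,\pi'\in S_N$; the paper does this via Lemmas \ref{lem:QcommT}--\ref{lem:RanPsigQsig} and the projection $\Pi^{\sigma}=P^{\sigma}Q^{\sigma}$, exploiting that cut-off ground states of distinct $a$'s have disjoint supports. Finally, in your converse direction note that $\sum_{i<j}q_i^{*}q_j^{*}/|y_i-y_j|$ is of indefinite sign; to contradict the van der Waals form you must exhibit a geometry (e.g.\ placing the oppositely charged clusters at minimal separation) making it $\asymp -1/R$, since otherwise a one-sided upper bound of order $+1/R$ proves nothing. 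The paper avoids this issue entirely by using the Feshbach--Schur map to pin $E^{\sigma}$ down as an eigenvalue of $F_{\Pi^{\sigma}}(E^{\sigma})$, getting $W^{\sigma}(y)\asymp R^{-1}$ exactly rather than an upper bound only; this is part of what the single-map approach buys over separate Rayleigh--Ritz and resolvent estimates.
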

 For a collection of hydrogen atoms Condition  (D) and Property
 (E) are shown below to hold and therefore the van der Waals law is always valid for such a system.

 When $\sigma$ corresponds to a Young diagram with one column (completely anti-symmetric representation) Theorem \ref{thm:vdW-sig} gives Theorem
\ref{thm:vdW-maxspin}. For $\sigma$ corresponding to a Young diagram
of at most two columns the ground state energy of $H_N(y)$ on
$\mathcal{H}_{\text{fermi}}$ is
\begin{equation*}
E(y)=\min_\sigma E^\s(y).
\end{equation*}
  Let $\sigma_0$ be a Yonge diagram of at
most two columns for which $E(y)=E^{\sigma_0}(y)$.
 Then Theorem \ref{thm:vdW-sig} for the
specified $\sigma_0$ gives the interaction energy of the system. (Of
course, if the interatomic distances are not very large it might
happen that the energy surfaces for different symmetries cross and
we have to take $W(y)=E^{\s_0}(y)-E^{\s_1}(\infty)$ where $\s_1\ne
\s_0$ is the diagram that minimizes $E^\s(\infty)$.)

Theorem \ref{thm:vdW-sig} describes the van der Waals  force at a
pairwise large separation between the atoms. For intermediate
distances, the  van der Waals -- London law is modified due to
overlapping between electron clouds of the  atoms and for small
distances,  the van der Waals forces are  repulsive (the energy is
positive) as follows from the rough estimate
$$H_N(y) \geq -C+\sum_{i<j}^{1,M}\frac{e^2 Z_i Z_j}{|y_i-y_j|},$$ for some
 constant $C$ independent of $y$, implied by the
bound $\frac{e^2 Z_m}{|x_n-y_m|} \leq -\alpha \Delta_{x_n}+\beta,$
valid for any $\alpha>0$ and a corresponding $\beta>0$.  
Often the interaction energy for two atoms ($M=2$) is modeled by the
Lennard-Jones potential
$W_{LJ}(y)=\frac{a}{|y_1-y_2|^{12}}-\frac{b}{|y_1-y_2|^6}$ or by the
Buckingham potential $W_B(y)=e^{-c|y_1-y_2|}-\frac{d}{|y_1-y_2|^6}$
where the constants $a,b,c,d$ are determined experimentally.

  If the molecules have dipole moments, then one expects a third power law
 to be true. This can be proven by the techniques developed in this paper.

Now we address Property (E). The following statements are proven in Appendix \ref{sec:propE}:

\begin{itemize}
\item[(a)]  Property (E) holds for a system of several hydrogen atoms and consequently
so is the van der Waals law.
\item[(b)] Property (E) follows from the following property

 (E') For any two nuclei $i$ and $j$, $i \neq j$, in our system and for any integers $m,n \ge 0$, $l > 0$ satisfying 
  $m+l \leq Z_j$, we have the following energy inequality
\[E_{i,m}+E_{j,-n}< E_{i,m+l}+E_{j,-n-l}.\]

\item[(c)]  Property (E) for a system of hydrogen atoms with electron statistics follows from  Property (E) without statistics. 
\end{itemize} 
The meaning of Property (E') is that ionization energies of  atoms
are greater than the electron affinities, where, in a standard
terminology,  the $n$-th ionization
 energy ($n \geq 1$) of an atom is the energy required to remove an electron from
 its $n-1$-ion and the $n$-th electron affinity is the energy required
 to remove an electron from its $-n$ ion. The table below, taken from \cite{MA}, gives
 first ionization energies and first electron affinities of atoms.  It shows that the first ionization energies are always
 much larger than the first electron affinities.

It is experimentally verified that the higher (second third and so
forth) ionization energies are bigger than the first and it is
expected that the higher electron affinities affinities are
 lower or zero. (Only ions with at most two extra electrons -- i.e. of the charges
at most $-2e$ -- are observed experimentally.

That nuclei can bound
only finite number of electrons was proven in \cite{Rus, Sig} (with
the bound $\le 17Z$ on the number of the extra electrons). It was
shown in \cite{LSST}, with some improvements by \cite{SSS, FS1},
that asymptotically ions are neutral and in \cite{Lieb} that the
maximal number of extra electrons $\le Z+1$. The latter bound was
recently improved to $<0.22\/Z+3Z^{1/3}$  in \cite{Phan}. Moreover,
it was shown in \cite{Sol1, Sol2} that in the Hartree - Fock
approximation the maximal number of extra electrons $\le
\text{const}$.)

 Our approach allows for reasonable estimates of the dependance of various remainders $O(\dots)$ on $N$ and $M$, which leads to an estimate of allowed internuclear distance. 
(We note that the Lieb-Thiring upper bound in   \cite{LT} on the interaction energy  has a fairly good control of the allowed internuclear distance.) 

\paragraph{Remark.}  Using some ideas of this paper, the first author obtained a simple and well-behaved in $N$ upper bound, not requiring Condition (E) and proved lower bounds with the reasonable $N-$dependence of the remainder estimates (\cite{Anap}). 

\bigskip

 \begin{tabular}{|c|c|c|c|}
\hline Atomic number & Element & first Ionization & Electron affinity  \\
  &  &  energy (kcal/mol) & (kcal/mol) \\
\hline 1 &  H  & 313.5 &  17.3\\
\hline 2 &  He  & 566.9 &  - \\
\hline 3 &  Li  &  124.3 &  (14) \\
\hline 4 &  Be  & 214.9 &  - \\
\hline 5 &  B  &  191.3 &  (7) \\
\hline 6 &  C  &  259.6 &  29 \\
\hline 7 &  N  &  335.1 &  - \\
\hline 8 &  O  &  314.0 &  34 \\
\hline 9 &  F  &  401.8 &  79.5 \\
\hline 10 & Ne  & 497.2 &  -  \\
\hline 11 & Na  & 118.5 &  (19)  \\
\hline 12 & Mg  & 176.3 &   - \\
\hline 13 & Al  & 138.0 &   (12) \\
\hline 14 & Si  & 187.9 &  (32)  \\
\hline 15 & P  &  254 &   (17) \\
\hline 16 & S  &  238.9 &  47\\
\hline 17 & Cl  & 300.0 &  83.4  \\
\hline 18 & Ar  & 363.4 &  (16) \\
\hline 19 &  K &  100.1 & - \\
\hline 20 & Ca  & 140.9 &  - \\
\hline 21 & Sc  & 151.3 &  - \\
\hline 22 & Ti  & 158 &  - \\
\hline 23 &  V &  155 &  - \\
\hline 24 &  Cr & 156  &  - \\
\hline 25 &  Mn &  171.4&  - \\
\hline
\end{tabular}
\medskip
\newline
(Values in parentheses are estimated by quantum-mechanical
calculation and have not been verified experimentally.)


\DETAILS{ To simplify exposition we first treat the case without taking statistics into account
 and then modify our results and analysis for the case of an appropriate statistics.
 Without statistics,  Condition (D) follows from the positivity improving property of $e^{-\beta H_a},  \beta >0$ and from Perron-Frobenious theory  (see for example  \cite{RSIV}).
  Therefore, the statement becomes
 \begin{theorem}[van der Waals forces without statistics]\label{thm:vdWlaw-nostat}
Consider the hamiltonians $H_N(y)$ on the entire $L^2(\R^{3N})$.
Then for $\min\{|y_i-y_j|: 1 \leq i<j \leq M\}$ large enough
\eqref{vdWlaw} holds with the constants $\sigma_{ij}$ defined in
\eqref{sigmaijvalue} if and only if Property (E) stated above holds.
\end{theorem}

Note that when $\sigma$ corresponds to a Young diagram with one row
(completely symmetric representation) Theorem \ref{thm:vdW-sig}
gives Theorem \ref{thm:vdWlaw-nostat}.

}
\DETAILS{He also extended  in a straightforward way the techniques presented in this paper to prove (\cite{Anap1}, $R$ is defined in \eqref{Rdef} below)
  \begin{theorem}\label{thm:vdW-maxspin-Nest} Assume Condition (D) and Property (E). Then, there exist constants
$C_1$, $C_2, C_3
>0$ depending only on $\max Z_j, j=1,...,M$, so that if $R \geq C_1
N^{\frac{4}{3}}$ we have 
 \begin{equation}\label{verbesserung}
 |W(y)+\sum_{i<j}^{1,M} \frac{e^4 \sigma_{ij}}{|y_i-y_j|^6}|
    \leq  C_2 \big( \sum_{i<j}^{1,M} \frac{1}{|y_i-y_j|^7}+\frac{ M^4}{R^9}+ \sqrt{ N!} e^{-C_3 R}
    \big).
 \end{equation}
\end{theorem}}

\paragraph{Outline of the approach.}  Our approach is based on the Feshbach-Schur   perturbation argument, with the small parameter being the reciprocal of in the 
distance between the nuclei. 
  In what follows we omit the argument $y$ and write $E$ and $H$ for $E(y)$ and $H_N(y)$, respectively. 
  
\noindent {\it Feshbach-Schur  method.}  This method originates in the works of H. Feshbach and I. Schur 
and was reformulated and generalized in  \cite{BFS}. We follow the textbook presentation of \cite{GS}. Let $P$ be an orthogonal projection and $P^\bot=1-P$. Introduce the notation  $\H^\bot=P^\bot \H P^\bot$.
Assume
\begin{itemize}
\item[(a)]  $\Ran(P) \subset D(\H)$ (domain of $\H$) and therefore $\|HP\|< \infty$;
\item[(b)]   The operator $(\H^\bot-\lambda)$   is invertible.
\end{itemize}
 The Feshbach-Schur method, 
 as applied to the quantum Hamiltonian $\H$, states that if Conditions (a) and (b) are satisfied, then the Feshbach-Schur map
\begin{equation}\label{FP}
F_P(\lambda)=(P \H P-U(\lambda))|_{\Ran P},
\end{equation}
where
\begin{equation}\label{U}
U(\lambda):=P \H P^\bot (\H^\bot-\lambda)^{-1} P^\bot \H P,
\end{equation}
is well defined and
\begin{equation}\label{FSE}
\lambda \text{ eigenvalue of } \H \iff \lambda\ \text{ eigenvalue of
}\ F_P(\lambda).
\end{equation}
Moreover, the eigenfunctions of $\H$ and $F_P(\lambda)$
corresponding to the eigenvalue $\lam$ are connected as
\begin{equation}\label{Fesh-iso}
H \psi=\lambda \psi\ \quad \Leftrightarrow\ \quad  F_P(\lam)
\phi=\lambda \phi,
\end{equation}
where $\phi,\psi$ are related by the following equations
$\phi=P\psi, \text{ } \psi=Q(\lambda) \phi. $ 
Here the family of operators $Q(\lam)$ is defined  as
$Q(\lam)=P-P^\bot (H^\bot-\lambda)^{-1} P^\bot H P. $ 
(Remember that we do not display the $y-$dependence of various
objects.)

\bigskip

\medskip

\noindent {\it Orthogonal projection $P$.} 
To apply the Feshbach - Schur map, we have to choose the orthogonal projection $P$. 
We fix an irreducible representation $\s$ of $S_N$ and let  $H^{\s}$ be the restriction of $H$ to the subspace of the irreducible representation $\s$. We denote by $\mathcal{A}^{at}$ the collection of all decompositions $a=(A_1,...,A_M)$, with $|A_j|=Z_j$ for all $j=1,...,M$. (Its elements correspond to decompositions of our system to neutral atoms.) 
  Let  $V_{a, R}^\al$ be the ground state subspace  of 
 $H_a^\al$ (the corresponding energies  define $E^\s(\infty)$),  
cut-off at large distances, so that $V_{a, R}^\al$ for various $a \in \mathcal{A}^{at},\ \al \prec \prec \sigma,$ become mutually orthogonal. Let $P$ be  the orthogonal projection on span $\{V_{a, R}^\al,\ a \in \mathcal{A}^{at},\ \al \prec \prec \sigma\}.$  (Here $R$ signifies the scale on which we perform the cut-off.) 
As a result of cutting-off the ground state subspaces  of  $H_a^\al$, we have the representation
\begin{equation}\label{P-deco'}
P=\sum_{a \in \mathcal{A}^{at},\ \al \prec \prec \sigma} P^\al_{a, R},
\end{equation}
where $P^\al_{a, R}$ are the orthogonal projections onto $V_{a, R}^\al$. 

 To show that the Feshbach - Schur map exists, we note that $\Ran (P) \subset
\text{Dom}(H)$ 
and therefore the condition (a) for the existence of the Feshbach map holds. We will prove in
Section \ref{Hbotbndseveral} that, under Property (E), there is a
$\imsgap > 0$, independent of $y$ s.t. for $R$ large enough, the following \textit{stability bound}  holds:
\begin{equation}\label{Hbotbnd'}
\H^{\s \bot} \geq E^\s(\infty)+2\imsgap,
\end{equation}
where  $H^{\s \bot}=P^{\bot} H^\s P^{\bot}$ and $E^\s(\infty)$ is the lowest possible energy for all possible break-ups of the system,   defined in \eqref{Einftydef} and $\imsgap=\g^\s$ is a positive number independent of $N$ and $M$.  Thus the condition (b) is also
satisfied and, for all $\lambda \leq E^{\s} (\infty)+\gamma$, the Feshbach - Schur map $F_P(\lambda)$ is well defined.
\DETAILS{ for our approach
without taking the statistics into account. Our goal is to prove Theorem \ref{thm:vdWlaw-nostat},
 under certain technical assumptions, which are then verified in later sections.

, recall,  the notation  $\H^\bot=P^\bot \H P^\bot$. We
\textit{assume} this for now. Using \eqref{Hbotbnd}
 we obtain that $H^\bot-\lambda$ is invertible for all
$\lambda \leq E(\infty)+ \imsgap $.

This argument is based on the \textit{stability bound},
\begin{equation}\label{Hbotbnd'}
\H^\bot \geq \Einfty+2\imsgap,
\end{equation}
where $\Einfty$ and  $\H^\bot=P^\bot \H_N(y) P^\bot$, with $P^\bot=1-P$ and  $P$, the orthogonal projection onto the span of the ground states of the break-ups with the energy $\Einfty$.}
%
  Now, to use \eqref{Fesh-iso} we have to estimate the two terms on the r.h.s of \eqref{FP}.

\medskip

\noindent {\it Estimate of $P \H^{\s} P$.} 
 First,  we use the equations \eqref{P-deco'}, the decomposition $H=H_a+I_a$, where $I_a$ is the sum of the intercluster interactions in the decomposition $a$,  and  $H_a P_{a, R}^\alpha \doteq \Einfty P_{a, R}^\alpha$, where the dot above the relations stands for exponentially small additive terms omitted, to obtain $ P H^{\s} P \doteq \sum_{a, b,\al, \beta}  P_{a, R}^{\al} I_b P_{b, R}^{\beta}.$ Since the supports of $V^\al_{a, R},$ are mutually disjoint, this gives $ P H^{\s} P \doteq \sum_{a,\al}  P_{a, R}^{\al} I_a P_{a, R}^{\al}.$
Clearly, the map $P_{a, R}^\alpha I_aP_{a, R}^{\alpha}\big|_{\Ran P_{a, R}^{\alpha}}$ leaves the space $V_{a, R}^{\al}=\Ran P_{a, R}^{\al}$ invariant and  commutes with $T_{\pi},\ \forall \pi\in S(a)$.  Since by Condition (D) $\Ran P_{a, R}^{\alpha}$ is a space of an  irreducible  representation of $S(a)$, we conclude that  it is a multiple of the identity, 
$$P_{a, R}^\alpha I_a P_{a, R}^{\alpha}\big|_{\Ran P_{a, R}^{\alpha}}=
\frac{1}{\rank(P_{a, R}^{\alpha})}\Tr (I_a P_{a, R}^{\alpha}) P_{a, R}^{\alpha},$$
 where $\rank(P_{a, R}^{\alpha})$ is the rank of $P_{a, R}^{\alpha}$. 
Hence $P_{a, R}^{\al} I_a P_{a, R}^{\al}$ can be written 
as 
 \begin{equation}\label{TrIP}\Tr (I_a P_{a, R}) = e^2\sum_{i<j}^{1,M} 
  Z_i Z_j I(z, z'; y_{ij})\rho_{A_i, R}(z)\rho_{A_j, R}(z') dz dz' ,\end{equation}
where  $\rho_{A_j}^\al$ are   the  one-electron densities  of atoms $A_j$ in the irreducible representation $\al$,  defined in \eqref{rhoA-def} below, $y_{ij}=y_i-y_j$,  and, by  the charge neutrality for each atom $A_j$, 
\begin{equation}\label{tildeIijkl}
I ( z, z'; y)=-\frac{1}{|y+z|}-\frac{1}{|y-z'|}+\frac{1}{|y|}+\frac{1}{|y+z-z'|}.
\end{equation}
Since as shown below, these densities   are spherically symmetric,  Newton's screening theorem gives  
$$P \H^{\s} P = \Einfty P +\ \mbox{exponentially small terms}.$$ 
\DETAILS{, and using the charge neutrality for atoms $A_j$ and Newton's screening theorem, we show that 
$$P \H^{\s} P = \Einfty P +\ \mbox{exponentially small terms}.$$ 
The key point here is that to apply this theorem, we have show that  
  $\rho_{A_j}^\al$  are spherically symmetric. 
The latter fact follows from Condition (D) and an elementary analysis.} 
\DETAILS{Consider a hamiltonian $H_{A}$ of an atom or ion (i.e. $H_{A}$ is of the form \eqref{Ha1} below) and let $H_{A}^\al$ be this hamiltonian restricted to the subspace of  the symmetry type $\al$. Let, furthermore,  $P_{A, R}^\al$ be the orthogonal projection onto  the ground state subspace of $H_{A}^\al$ and $\rho_{A}^\al$,  the  one-electron density for $H_{A}^\al$,  defined, say, through the trace relation $\Tr (b \rho_{A}^\al)= \Tr [(b\otimes \one) P_{A, R}^\al]$ for any one-electron operator $b$. ( can be written explicitly in terms of  any orthonormal basis in $\Ran P_{A, R}^\al$, see Appendix \ref{sec:sym-gen-app}.)
Let $\rho_{A}^\al$ be  the one-electron density $\rho_{A}^\al$ of an atom $A$ in the irreducible representation $\al$,  defined by the  condition \begin{equation}\label{rhoA-def} \Tr (b \rho_{A}^\al)= \Tr [(b\otimes \one) P_{A}^\al],\end{equation} 
where $P_A^\al$ is the orthogonal projection onto the ground state space of the atom hamiltonian $H_A^\al$, at the symmetry type $\al$. } 
The above implies that the van der Waals decay law could only come from $U(\lambda)$. 
\DETAILS{To estimate  $U(\lambda)$, we need, in additions to \eqref{Hbotbnd'}, similar  estimates for the boosted hamiltonians $e^{-f( x)} \H^{\s \bot} e^{f ( x)}$, with various weights $f(x)$. These bounds allow us to pull exponential weights coming from eigenfunctions $\Phi_a\in V_a$ through the resolvent $ (\H^{\s \bot}-\lambda)^{-1} $ and use that a product of the weights coming from functions like  $\Phi_a\in V_a^\al$ and $ \Phi_b\in V_b^\beta,\ a\ne b,$ decays exponentially in all directions. 
(Alternatively, we also use a geometrical decomposition of the operator $H^{\s \bot}$.)}

\medskip

\noindent {\it Estimate of $U(\lambda)$.}  To estimate $U(\lambda)$, we use the equations \eqref{P-deco'}, 
 $H=H_a+I_a$ and  $H_a P_{a, R}^\alpha \doteq \Einfty P_{a, R}^\alpha$ again and use $P^\bot P_{a, R}^{\al}=0$, to obtain 
$ P^\bot H P \doteq \sum_{a,\al}  P^\bot I_a P_{a, R}^{\al}.$
 Now, using this in the definition of  $U(\lambda)$,   we obtain  
  \begin{align}\label{Usig-aprox1'}
U^\s (E) \dot = \sum_{a,b,\alpha, \beta} 
 P_{a, R}^{\al}  I_a   P^\bot R^{\bot}(E) P^\bot  I_b P_{b, R}^{\beta}. 
\end{align}
Next,  we  use the variables $z_{k m}=x_k-y_m,\ \forall  k \in A_m,\  m=1,...,M,$  
and expand the intercluster interaction $I_{a}$  in $|y_{ij}|^{-1}$ and use properties of  $P_{a, R}^{\al}$ and  the charge neutrality for each atom $A_j$,  to obtain 
  \begin{equation}\label{IaPa-exp'} 
I_{a}P_{a, R}^{\al}= \sum_{i<j} [\frac{e^2}{|y_{ij}|^3}f_{ij}(z,\widehat{y_{ij}})P_{a, R}^{\al}+ O(\frac{e^2}{|y_{ij}|^4})], \end{equation}
where $f_{ij}(z,\widehat{y_{ij}})=\sum_{k \in A_i, l \in A_j}  [z_{ki} \cdot z_{lj}- 3(z_{ki} \cdot \widehat{y_{ij}})(z_{lj} \cdot  \widehat{y_{ij}})]$. This gives, in particular, $\|I_a P_{a, R}^{\al}\| \lesssim \sum_{i<j} \frac{1}{|y_{ij}|^3}$. Using  this bound  and 
 elementary geometrical localization arguments, which allow us  to pass from  $ (H^{\s \bot}-E )^{-1}$ 
to $(H^{\al \bot}_a -E )^{-1}$, where $H^{\al \bot}_a:=H^{\al }_a P^{\al \bot}_a$, with $P^{\al \bot}_a$ the projection onto the orthogonal complement of  the ground state subspace  of  $H_a^\al$, and to estimate the terms with $a\ne b$, and  pass from  $E$  to $E(\infty)$, we find
  \begin{align}\label{FPsigmaaprox'}
U (E) = \sum_{a,\alpha, \beta} 
 Q U_{aa}^{\alpha \beta}  Q+ O\big (\frac{e^4}{R^7}\big), 
\end{align}
where
$ U_{aa}^{\alpha \beta}:= P_{a, R}^{\al}  I_a   P^{\al \bot}_a R^{\al \bot}_a  P^{\al \bot}_a  I_a P_{a, R}^{\beta},$ with 
$R^{\al \bot}_a :=  (H^{\al \bot}_a -E(\infty) )^{-1}$. 
 The 
 orthogonality of different irreducible representations, that the terms with $\al\ne \beta$ vanish. Finally, since $U_{aa}^{\alpha \beta}\big |_{\Ran P_{a, R}^{\al}}$
acts on the space of an  irreducible representation of $S(a)$ and commutes with all the operators of this representation, 
it is multiple of identity. 
This implies
 \begin{align}\label{Uaaalal'}U_{aa}^{\alpha \beta} = \frac{1}{\rank P_{a, R}^{\al}}\Tr (U_{aa}^{\alpha \alpha} P_{a, R}^{\al}) P_{a, R}^{\al}\del_{\al, \beta}. \end{align}
The terms $\Tr (U_{aa}^{\alpha \alpha} P_{a, R}^{\al})$ 
 can be easily computed, using \eqref{IaPa-exp'}  and 
 showing  that the terms with $(ij)\neq (i'j')$ or $(ij)= (i'j')$, but $ (k l)\neq (k' l')$, coming from different $I_a$'s in the expression for $\Tr (U_{aa}^{\alpha \alpha} P_{a, R}^{\al})$ 
 can be written as integrals of odd function and therefore vanish. This gives 
 $$ U_{aa}^{\alpha \al} = \frac{1}{\rank P_{a, R}^{\al}}\sum_{i<j} \frac{e^4}{|y_{ij}|^6}\Tr (f_{ij} R^{\al \bot}_a f_{ij} P_{a, R}^{\al})+O(\frac{e^4}{R^7}),$$ which yields the van der Waals decay law. (The terms in the sum can be further simplified.)

\paragraph{Remark.}  
Due to \eqref{IaPa-exp'}, for the charge neutral systems, the term $P \H^{\s} P$  always starts, at least, with 
 $\sum_{i<j} |y_i-y_j|^{-3}$. 

\paragraph{Beyond  Born-Oppenheimer Approximation.}\label{sec:beyond BO} 
Let $\psi_{\rm BO}(x, y)$ be the ground state of $H_N(y)$,
normalized as $\int|\psi_{\rm BO}(x, y)|^2d x=1$. Using  the
Feshbach - Schur map with the orthogonal projection


$$ (P f)(x, y) = \psi_{\rm BO}(x, y)\int \overline{\psi_{\rm BO}(x, y)}f(x, y) dx $$
(i.e. integrating out the electronic degrees of freedom),  as it is done e.g. in \cite{GS}, we can show that 

 \[ \lam \in \sigma_d(H_{{\rm mol}}) \; \leftrightarrow \;
  \lam \in \sigma_d (H_{\textrm{nucl}} (\lam)),\]
 with the corresponding eigenfunctions related accordingly. Here $H_{\text{nucl}} (\lam)$ is the operator on $L^2(\R^{3M})$, defined by

 \smallskip

 $$H_{\text{nucl}} (\lam) :=-\sum_{j=1}^M \frac{1}{2 m_j} \Delta_{y_j}+E_\kappa(y, \lam) ,$$
where $\kappa:=m/\min_j m_j$ and

 $$ E_\kappa(y, \lam) = E(y)+
 \sum_1^M \frac{1}{2m_j}\int |\nabla_{y_j}\psi_{BO}|^2 dx 
 +O(\kappa^2),$$
with $O(\kappa^2)$ standing for a non-local operator of the
indicated order. The energy $E_\kappa(y)$ (not potential anymore)
can be used to define the interaction energy in all orders.

\medskip

\paragraph{Organization of the paper.} 
In Section \ref{sec:prelim} we discuss preliminaries of quantum many body systems. In Section \ref{sec:setup} we prove Theorem \ref{thm:vdW-maxspin} 
modulo the stability estimate \eqref{Hbotbnd'} which is proven in Section \ref{Hbotbndseveral}. 
In Section \ref{sec:proofvdWThm-s}
 we rework the proof of Theorem \ref{thm:vdW-maxspin} to prove Theorem \ref{thm:vdW-sig}. 

In Supplement I, we present an estimates for the boosted hamiltonians $e^{-f( x)} \H^{\s \bot} e^{f ( x)}$, with various weights $f(x)$,  similar  to \eqref{Hbotbnd'}. These bounds allow us to pull exponential weights through the resolvents. They are not used in this paper, but could be useful in various extensions (see e.g. \cite{Anap}). 

\paragraph{Notation.} We collect here general notation used in this
paper. In the text $C$ will denote a positive constant which might
be different from one equation to the other, but which is
independent of the nuclear co-ordinates $y_1, \dots y_M$. We will
use the notation $\lesssim$ for inequalities that are true up to
such a constant. 
We will write $A\ \doteq\ B$, and  $A\ \dotle\ B$,  $A\ \dotge\ B$  if $\|A-B\|\ \lesssim\ e^{-\frac16 \theta R}$, $A \le B + C e^{-\frac16 \theta R}$ and $A\ge B- C e^{-\frac16 \theta R}$, for some $C>0$, respectively, in an
appropriate norm, where $\theta$ is the constant appearing in \eqref{groundstatedecay} and \eqref{Phiadecay}. 

For any Banach space $X$, we denote $B(X):=\{f:X \rightarrow X: f$
linear and bounded$\}$. For an operator $A$, the symbols $\s(A)$ and
$\s_{\textrm{ess}}(A)$ stand for the spectrum and the essential
spectrum, correspondingly. We also use the notation $\langle x
\rangle=(1+|x|^2)^{\frac{1}{2}}$. 
Finally, $\|\cdot\|$ will denote the $L^2 - $
 norm of a function or the $B(L^2) - $
norm of an operator, depending on the context,  and the
symbol $O(\delta)$ is understood in this norm.  For an orthogonal projection $P$, we denote by  $P^\bot$, the complementary projection $P^\bot:= \one - P$  and for a vector $\Phi$, we denote by   $P_{\Phi}$ the rank-one  orthogonal projection 
 onto $\Phi$. 

\medskip
\noindent {\bf Acknowledgements.} 
The first author is grateful to Volker Bach, Alessandro
Giuliani and Artem Dudko for useful discussions, and especially to
Marcel Griesemer for numerous inspiring discussions. His research
was supported in part by NSERC under Grant No. NA7901 and in part by
the German Science Foundation under Grant No. 3213/1-1. 
 The second author is grateful to J{\'e}r{\'e}my
Faupin, Alessandro Pizzo and, especially J\"urg Fr\"ohlich and
Elliott Lieb for very useful and inspiring discussions. His interest in the problem of the van der Waals forces derived from  several talks by and conversations with Herbert Spohn on the Casimir- Polder effect. His research
was supported in part by NSERC under Grant No. NA7901.


\bigskip

\section{Preliminaries about many body systems} \label{sec:prelim} 

\paragraph{General properties of $H_N(y)$.}
The general information on the spectrum of the Hamiltonian
\eqref{Hy} is given in the following theorem which is a special case
of the HVZ Theorem (see e.g. \cite{HS, GS, CFKS}).
 \begin{theorem}[HVZ theorem]\label{hvz}
$\sigma_{\textrm{ess}}(H_N(y))=[\Sigma,\infty),$ where $\Sigma=\inf
\sigma(H_{N-1}(y))$.
\end{theorem}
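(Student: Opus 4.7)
The plan is to prove the two inclusions $\sigma_{\text{ess}}(H_N(y)) \subseteq [\Sigma,\infty)$ and $[\Sigma,\infty) \subseteq \sigma_{\text{ess}}(H_N(y))$ separately, following the standard HVZ strategy (as in \cite{HS, GS, CFKS}) adapted to the fermionic subspace $\cH_{\text{fermi}}$.

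\textbf{Upper bound on the essential spectrum.} I would introduce a Ruelle--Simon partition of unity $\{J_i^R\}_{i=0}^N$ on $\R^{3N}$, scaled by a parameter $R$, with the property that $J_0^R$ is supported in the ball $\{|x| \le 2R\}$ (all electrons stay near the nuclei) and $J_i^R$ for $i\ge 1$ is supported where the coordinate $x_i$ is separated from the nuclei $\{y_j\}$ and from all other electrons $x_k$ by distance at least $cR$. The IMS localization formula then gives
\[
H_N(y) \;=\; \sum_{i=0}^N J_i^R\, H_N(y)\, J_i^R \;-\; \frac{1}{2m}\sum_{i=0}^N |\nabla J_i^R|^2,
\]
where the last sum is $O(R^{-2})$. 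On $\supp J_i^R$ for $i\ge 1$, the Coulomb terms involving $x_i$ are bounded by $O(R^{-1})$, so that, regarding $x_i$ as the ``escaping'' coordinate,
\[
J_i^R H_N(y) J_i^R \;\ge\; J_i^R\Bigl(-\frac{1}{2m}\Delta_{x_i} + H_{N-1}^{(i)}(y) - \frac{C}{R}\Bigr) J_i^R \;\ge\; (\Sigma - C/R)(J_i^R)^2,
\]
where $H_{N-1}^{(i)}(y)$ is the $(N-1)$-electron Born--Oppenheimer Hamiltonian obtained by removing the $i$-th electron; its infimum equals $\Sigma$ by permutation symmetry. The term $J_0^R H_N(y) J_0^R$ is relatively compact with respect to the free Laplacian (Coulomb potentials are form-bounded and the support is compact), hence contributes only to the discrete spectrum. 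Applying Persson's characterization of $\inf \sigma_{\text{ess}}$, we obtain $\inf \sigma_{\text{ess}}(H_N(y)) \ge \Sigma - C/R$ for every $R>0$, giving $\sigma_{\text{ess}}(H_N(y)) \subseteq [\Sigma,\infty)$.

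\textbf{Lower bound: construction of Weyl sequences.} For $\lambda \in [\Sigma,\infty)$ write $\lambda = \lambda_0 + |k|^2/(2m)$ with $\lambda_0 \in \sigma(H_{N-1}(y))$, $\lambda_0 \ge \Sigma$, and $k\in\R^3$. Pick an approximate eigenstate sequence $\psi_n$ in the appropriate $(N-1)$-electron fermionic subspace with $\|(H_{N-1}(y)-\lambda_0)\psi_n\| \to 0$ and $\|\psi_n\|=1$. Choose a smooth compactly supported $\chi$ with $\|\chi\|_{L^2}=1$ and set
\[
\chi_n(x) \;=\; R_n^{-3/2}\, \chi\bigl((x-a_n)/R_n\bigr)\, e^{i k\cdot x},
\]
with $R_n \to \infty$ slowly and translations $|a_n|\to\infty$ chosen so that $\supp \chi_n$ moves to infinity while $\|(-\frac{1}{2m}\Delta - |k|^2/(2m))\chi_n\| \to 0$. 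Let $\Psi_n$ be the antisymmetrization (with the appropriate symmetry type) of $\psi_n \otimes \chi_n$. Since $\supp\chi_n$ eventually lies outside the supports of the single-particle marginals of $\psi_n$, the exchange contributions vanish in the limit, giving $\|\Psi_n\|\to c>0$ and $\Psi_n \rightharpoonup 0$ weakly. A direct computation using $H_N = H_{N-1}^{(N)} + (-\frac{1}{2m}\Delta_{x_N}) + I_N(y)$, where $I_N(y)$ collects the interactions of $x_N$ with everything else, shows
\[
\|(H_N(y)-\lambda)\Psi_n\| \;\le\; \|(H_{N-1}(y)-\lambda_0)\psi_n\| \;+\; \|(-\tfrac{1}{2m}\Delta - |k|^2/(2m))\chi_n\| \;+\; \|I_N(y)\,\Psi_n\| \;\longrightarrow\; 0,
\]
where the interaction term vanishes because $|x_N-y_j|$ and $|x_N-x_k|$ on the support of $\Psi_n$ tend to infinity. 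Hence $\{\Psi_n\}$ is a Weyl sequence for $\lambda$, and $\lambda \in \sigma_{\text{ess}}(H_N(y))$.

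\textbf{Main obstacle.} The bookkeeping around the fermionic antisymmetrization is the most delicate point: one must verify that the exchange (cross) terms in $\|\Psi_n\|^2$ and in $\langle \Psi_n, H_N(y)\Psi_n\rangle$ decay as $|a_n|\to\infty$, so that the antisymmetrized vector has norm bounded away from zero and still produces an approximate eigenvalue equation. For the other symmetry types $\sigma$ arising later in the paper, the same argument applies provided the induced representation on the $(N-1)$-particle subsystem supports energies at least $\Sigma$, but for the plain HVZ statement this reduces to the standard computation using Slater-determinant-like structure. The Coulomb singularities in the IMS step are handled by Kato's inequality $|x|^{-1} \le \epsilon(-\Delta) + \epsilon^{-1}$, which is already implicit in the self-adjointness of $H_N(y)$.
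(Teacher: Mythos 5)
The paper does not prove Theorem~\ref{hvz}; it is cited as a special case of the HVZ theorem from the standard references \cite{HS,GS,CFKS}, so there is no internal proof to compare against. Your sketch does follow the standard two-sided strategy from those references (IMS localization with Persson's characterization for the upper bound, a translated Weyl sequence for the lower bound), and the Weyl-sequence half is essentially sound; for a clean argument one may as well take $\lambda_0 = \Sigma$ only, since $\Sigma$ is an isolated eigenvalue of $H_{N-1}(y)$ by Zhislin's theorem and one gets a genuine, exponentially decaying eigenfunction $\psi$ to tensor with $\chi_n$, which makes the decay of the exchange and interaction terms transparent.

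There is, however, a genuine gap in the IMS step: the partition $\{J_i^R\}_{i=0}^N$ as you describe it does not cover $\R^{3N}$. Requiring $\supp J_i^R$ ($i\ge 1$) to lie where $x_i$ is at distance $\ge cR$ from \emph{both} the nuclei \emph{and} every other electron leaves uncovered any configuration in which two (or more) electrons recede together while staying within $cR$ of each other; e.g.\ for $N=2$, $x_1=(3R,0,0)$, $x_2=(3R+1,0,0)$ has $|x|>2R$ but belongs to no $\supp J_i^R$, so $\sum_i (J_i^R)^2 = 1$ fails and the IMS identity you wrote is not available. The fix is to drop the ``far from other electrons'' requirement: it suffices to take $J_i^R$ supported where $x_i$ is far from the nuclei and, say, $|x_i|\ge \tfrac12\max_j|x_j|$, which does give a partition of unity after smoothing. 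Then on $\supp J_i^R$ the nuclear attraction terms involving $x_i$ are $O(R^{-1})$, while the electron--electron terms $e^2/|x_i - x_k|$ are simply non-negative and may be discarded in a lower bound, so one still obtains $J_i^R H_N(y) J_i^R \ge (\Sigma - C/R)(J_i^R)^2$. Alternatively one can use the full Ruelle--Simon partition indexed by cluster decompositions, in which case one must also verify (using non-negativity of the free-cluster energies) that the minimum over all cluster thresholds is indeed $\Sigma = \inf\sigma(H_{N-1}(y))$; as written, your argument neither does this nor avoids the need for it.
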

This theorem says that the essential (continuous) spectrum of
$H_N(y)$ originates from the molecule shedding of an electron which
moves freely at infinity and therefore whose energy spectrum changes
continuously. The next results, due to G. Zhislin and J.-M. Combes and L. Thomas, respectively, show that $H_N(y)$, as well as each
atom, has a well-localized ground state (see e.g. \cite{HS, GS, CFKS}):
\begin{theorem}[Zhislin theorem]\label{zysl} The operator $H_N(y)$ has infinite number of eigenvalues, $E_j$, below its essential spectrum,
  $E_j<\Sigma$. 
\end{theorem}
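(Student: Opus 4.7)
\medskip

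\noindent\textbf{Proof proposal for Zhislin's theorem.} My plan is to verify the min-max condition: for every integer $k\ge 1$, I will exhibit a $k$-dimensional subspace $V_k\subset \cH_{\rm fermi}$ on which the quadratic form of $H_N(y)$ is strictly below $\Sigma$. By the min-max principle, this forces $H_N(y)$ to have at least $k$ eigenvalues below the essential spectrum $[\Sigma,\infty)$; since $k$ is arbitrary, there are infinitely many.

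The construction proceeds by induction on $N$. For the base case I would note that a one-electron Schr\"odinger operator with a sum of attractive Coulomb potentials has infinitely many negative eigenvalues (hydrogen-type spectrum). For the inductive step, assume $H_{N-1}(y)$ has a normalized ground state $\Psi_0$ with $H_{N-1}(y)\Psi_0 = \Sigma\,\Psi_0$, of the correct Fermi statistics on the variables $(x_1,s_1,\dots,x_{N-1},s_{N-1})$. Since the total system is neutral ($\sum Z_j = N$), the $(N-1)$-electron subsystem carries a net charge $+e$, so an additional electron placed far away at position $x_N = x$ sees an effective long-range potential
\begin{equation*}
V_{\rm eff}(x) \;=\; -\sum_{j=1}^M \frac{e^2 Z_j}{|x-y_j|} + \sum_{i=1}^{N-1}\int \frac{e^2\,\rho_0(x_i)}{|x-x_i|}\,dx_i \;=\; -\frac{e^2}{|x|} + O(|x|^{-2})
\end{equation*}
at large $|x|$, where $\rho_0$ is the one-electron density of $\Psi_0$ (the leading $-e^2/|x|$ coming from the net $+e$ charge after screening). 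The $-e^2/|x|$ tail by itself supports infinitely many bound states below $0$, and this is the physical reason infinitely many eigenvalues appear below $\Sigma$.

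Concretely, fix $R_1<R_2<\cdots$ diverging to infinity, and pick smooth radial functions $\phi_k\in C_c^\infty(\R^3)$ with $\supp\phi_k\subset\{R_k < |x| < 2R_k\}$, supports mutually disjoint, so that $\phi_k$ is a good approximation to a Coulomb bound state (e.g.\ a hydrogenic wave function truncated to the annulus) satisfying
\begin{equation*}
\bigl\langle \phi_k,\bigl(-\tfrac{1}{2m}\Delta - \tfrac{e^2}{|x|}\bigr)\phi_k\bigr\rangle \;\le\; -\eps_k\,\|\phi_k\|^2, \qquad \eps_k>0.
\end{equation*}
I then form the antisymmetrized, spin-coupled trial vectors
\begin{equation*}
\Psi_k(x_1,s_1,\dots,x_N,s_N) \;=\; \mathcal{A}\bigl[\Psi_0(x_1,s_1,\dots,x_{N-1},s_{N-1})\,\phi_k(x_N)\,\chi(s_N)\bigr],
\end{equation*}
with $\chi$ a suitable spin function and $\mathcal{A}$ the antisymmetrizer. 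Because the $\phi_k$ have disjoint supports at distances $R_k\to\infty$ while $\Psi_0$ is exponentially localized (Combes--Thomas), the overlaps of $\Psi_k$ with the antisymmetrization terms exchanging $x_N$ with an $x_i$ ($i<N$) are exponentially small in $R_k$, so $\{\Psi_k\}_{k=1}^K$ is almost orthonormal and spans a $K$-dimensional subspace $V_K$.

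Computing $\langle\Psi_k,H_N(y)\Psi_k\rangle$, the cluster decomposition gives $H_N(y)=H_{N-1}(y)+h_N + I$, where $h_N=-\frac{1}{2m}\Delta_{x_N}-\sum_j \frac{e^2 Z_j}{|x_N - y_j|}$ and $I$ is the electron-electron interaction between electron $N$ and the rest. Using $H_{N-1}(y)\Psi_0=\Sigma\Psi_0$, screening by $\rho_0$, and the asymptotic behavior above, I find
\begin{equation*}
\langle\Psi_k,H_N(y)\Psi_k\rangle \;\le\; \bigl(\Sigma - \tfrac12\eps_k + o(1)\bigr)\|\Psi_k\|^2
\end{equation*}
as $R_k\to\infty$, where $o(1)$ absorbs exchange errors, multipole corrections $O(R_k^{-2})$, and cross terms. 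Hence the quadratic form of $H_N(y) - \Sigma$ is negative on $V_K$, finishing the min-max argument.

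The main obstacle is controlling the exchange terms coming from antisymmetrization together with the multipole corrections to the bare $-e^2/|x|$ tail: one needs enough decay of $\Psi_0$ (Combes--Thomas exponential decay of bound states below the essential spectrum) and careful bookkeeping to show that the error terms are beaten by the Coulomb binding $\eps_k$, which itself tends to zero as $R_k\to\infty$. This balancing is standard but is the only delicate step; everything else is min-max plus an explicit variational construction.
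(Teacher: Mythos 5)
The paper does not prove Zhislin's theorem; it is stated as a classical result with references to \cite{HS, GS, CFKS}. Your sketch is the standard variational argument that underlies those proofs: place an additional electron far from the positively charged $(N-1)$-electron core so it feels a residual attractive tail $-e^2/|x|$, and use disjointly supported annular shells $\phi_k$ to exhibit an arbitrarily large-dimensional subspace on which $H_N(y)-\Sigma$ is negative. The outline and the error hierarchy are correct: the Coulomb binding $\eps_k\sim R_k^{-1}$ dominates the multipole corrections $O(R_k^{-2})$ while the exchange terms are exponentially small thanks to Combes--Thomas decay of $\Psi_0$. Three points that should be made explicit in a complete write-up. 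First, the induction really uses Zhislin's theorem for strictly positive ions, not only for the neutral molecule, so that the ground state of $H_{N-1}(y)$ is an actual eigenvector at $\Sigma$; this is the form of the theorem that the references prove. Second, the spin function $\chi$ must be chosen so that $\mathcal{A}[\Psi_0\,\phi_k\,\chi]$ lands in the intended subspace of $\cH_{\textrm{fermi}}$ (or the fixed symmetry sector $\cH^\s$); because $\phi_k$ is supported far from $\Psi_0$, the norm of the antisymmetrized trial vector stays bounded away from zero, but this requires a short estimate rather than a hand-wave. Third, identifying ``infinitely many eigenvalues below $\Sigma$'' with ``below the essential spectrum'' relies on the HVZ theorem, which should be invoked explicitly.
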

\begin{theorem}[Combes - Thomas bound]\label{zysl} The eigenfunctions, $\Phi_j$, of $H_N(y)$, corresponding to the eigenvalues, $E_j<\Sigma$,   are exponentially  localized: 
 \begin{equation}\label{exp-bnd}
|\Phi_j(x)|\le Ce^{-\delta|x|},
\end{equation}
for any $\delta<\sqrt{\Sigma-E_j}$. Here $x=(x_1,\dots,x_N)$. 
\end{theorem}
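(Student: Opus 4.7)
The plan is to apply the Combes--Thomas method to the many-body operator $H_N(y)$, using Theorem \ref{hvz} (HVZ) to locate the essential spectrum at $[\Sigma,\infty)$. First introduce a family of smooth, \emph{bounded} weights $f_n:\R^{3N}\to\R$ with $f_n\nearrow \delta|x|$ pointwise, $|\nabla f_n|\le \delta$, and $\|\Delta f_n\|_\infty$ uniformly bounded. The boosted, non-self-adjoint operator obtained by direct conjugation of the Laplacian is
\[
H_n := e^{f_n}H_N(y)e^{-f_n}= H_N(y) + \tfrac{1}{m}\nabla f_n\cdot\nabla + \tfrac{1}{2m}\Delta f_n - \tfrac{1}{2m}|\nabla f_n|^2.
\]
The first-order correction $\tfrac{1}{m}\nabla f_n\cdot\nabla+\tfrac{1}{2m}\Delta f_n$ is formally antisymmetric and hence contributes only to the imaginary part of the associated sesquilinear form; the only symmetric perturbation is the multiplication operator $-\tfrac{1}{2m}|\nabla f_n|^2$ of size at most $\delta^2/(2m)$.

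The central step is a uniform-in-$n$ resolvent estimate: for every $\delta<\sqrt{\Sigma-E_j}$ (in the paper's units) and every $z$ on a small positively oriented circle $\gamma$ enclosing only $E_j$ among $\sigma(H_N(y))$, $\|(z-H_n)^{-1}\|\le C$ with $C$ independent of $n$ and $z\in\gamma$. I will prove this by a quadratic-form argument: let $P_<$ be the spectral projection of $H_N(y)$ onto its discrete spectrum in $(-\infty,E_j+\eta]$, which is finite-dimensional by Zhislin's theorem, and choose $\eta>0$ so that $\eta-\delta^2/(2m)>0$ and $\gamma$ avoids all spectrum of $H_N(y)$ other than $E_j$. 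On $\operatorname{Ran}P_<^\perp$ one has $H_N(y)\ge E_j+\eta$, so
\[
\operatorname{Re}\langle\phi,(H_n-z)\phi\rangle \ge \bigl(\eta-\tfrac{1}{2m}|\nabla f_n|^2_\infty -|\operatorname{Re}z-E_j|\bigr)\|\phi\|^2,
\]
which is strictly positive for $z\in\gamma$; the remaining finite-dimensional block (containing $E_j$ and any lower eigenvalues) is inverted explicitly by a Schur-complement/Feshbach step in the spirit of \eqref{FP}--\eqref{U}, using that on that finite-dimensional sector the map depends analytically on the parameters and is non-singular as long as $z\ne E_j$.

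With the resolvent bound in hand, extract weighted $L^2$ decay via the Riesz projection. Choose $\phi_0\in C_c^\infty(\R^{3N})$ with $\langle\Phi_j,\phi_0\rangle\ne 0$; then a nonzero multiple of $\Phi_j$ equals $\oint_\gamma(z-H_N(y))^{-1}\phi_0\,\frac{dz}{2\pi i}$. Using the conjugation identity $e^{f_n}(z-H_N(y))^{-1}=(z-H_n)^{-1}e^{f_n}$ and the uniform bound on the contour, together with the fact that $\|e^{f_n}\phi_0\|_{L^2}$ is bounded uniformly in $n$ (because $\phi_0$ is compactly supported), one obtains $\|e^{f_n}\Phi_j\|_{L^2}\le C'$ uniformly in $n$. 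Monotone convergence then yields $\int e^{2\delta|x|}|\Phi_j|^2\,dx<\infty$ for every $\delta<\sqrt{\Sigma-E_j}$.

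Finally, the weighted $L^2$ bound is upgraded to the pointwise bound \eqref{exp-bnd} by elliptic regularity for the Schrödinger equation $(H_N(y)-E_j)\Phi_j=0$. Since the Coulomb potentials appearing in $H_N(y)$ lie in the Kato class on $\R^{3N}$, the standard sub-mean-value / local boundedness estimate gives $\sup_{B(x,1/2)}|\Phi_j|\le C\|\Phi_j\|_{L^2(B(x,1))}$ with $C$ independent of $x$, and combining with the weighted $L^2$ estimate and the trivial inequality $e^{\delta|y|}\ge e^{-\delta}e^{\delta|x|}$ on $y\in B(x,1)$ yields $|\Phi_j(x)|\le Ce^{-\delta|x|}$. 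I expect the hardest part to be the uniform resolvent estimate: preserving the sharp constraint on $\delta$ requires genuine use of the antisymmetry of the first-order piece of $H_n-H_N(y)$ (so that it does not enter the real part of the quadratic form) together with an honest Feshbach/Schur-complement handling of the finitely many bound states in $(-\infty,E_j]$. A Neumann-series approach alone gives only a non-sharp $\delta$.
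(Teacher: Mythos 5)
The paper states this theorem without proof (it cites the textbooks \cite{HS,GS,CFKS}), so there is no ``paper's proof'' to compare; what you propose is the classical Combes--Thomas argument, and the overall architecture (bounded truncations $f_n$ of the weight, coercivity of $\operatorname{Re}(H_n-z)$ on the high-energy sector, a Riesz-projection contour integral, and a local subsolution estimate to upgrade $L^2$-decay to pointwise decay) is correct. Three cosmetic points: with the kinetic term $-\tfrac{1}{2m}\Delta$ the sharp rate is $\sqrt{2m(\Sigma-E_j)}$, matching the paper's $\sqrt{\Sigma-E_j}$ only in $m=\tfrac12$ units, which you flag; the weight should be $\delta\langle x\rangle$ rather than $\delta|x|$ so that $\Delta f_n$ stays bounded near the origin; and the finite-dimensionality of $P_<$ follows from HVZ (essential spectrum starts at $\Sigma$), not from Zhislin's theorem, which asserts the opposite (infinitely many eigenvalues below $\Sigma$).

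The genuine gap is in the sentence asserting that the finite-dimensional Feshbach block ``is non-singular as long as $z\ne E_j$.'' That statement is true for each \emph{fixed} $n$, since $e^{\pm f_n}$ is bounded and hence $\sigma(H_n)=\sigma(H_N(y))$, but it does not by itself give the required bound \emph{uniform in $n$}, which is exactly the heart of the Combes--Thomas argument. The correction term $P_< H_n P_<^\perp R_n^\perp P_<^\perp H_n P_<$ is $O(1)$, not small, so $\det F_{P_<,n}(z)$ is not a priori bounded away from zero as $n\to\infty$; nor can you simply bound $\|(z-H_n)^{-1}\|$ by similarity, since $\|e^{f_n}\|_\infty\to\infty$. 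To close the gap one must show that the limiting Feshbach block $F_{P_<,\infty}(z)$, built from $H_\infty=e^{\delta\langle x\rangle}H_N(y)e^{-\delta\langle x\rangle}$, is invertible for all $z\in\gamma$. One clean way: a null vector of $F_{P_<,\infty}(z_*)$ would, via the Feshbach eigenfunction correspondence, produce a nonzero $\psi_*\in L^2$ with $H_\infty\psi_*=z_*\psi_*$; since $e^{-\delta\langle x\rangle}$ is a bounded, injective map on $L^2$ that intertwines $H_\infty$ and $H_N(y)$, the vector $e^{-\delta\langle x\rangle}\psi_*\in L^2$ would be a nonzero eigenfunction of $H_N(y)$ at $z_*\in\gamma\subset\rho(H_N(y))$ --- a contradiction. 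Combined with the norm convergence $F_{P_<,n}(z)\to F_{P_<,\infty}(z)$ uniformly for $z\in\gamma$ (which follows from strong resolvent convergence $R_n^\perp\to R_\infty^\perp$ and the finite rank of $P_<$), this yields the uniform invertibility. Without some such argument, the phrase ``depends analytically on the parameters'' is not a proof: analyticity in $z$ for fixed $n$ gives nothing about the $n\to\infty$ limit, and there is no analytic dependence on the discrete index $n$.
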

The last two theorems show that the atoms and Born-Oppenheimer molecules are stable in the sense that at
sufficiently low energy they are well localized in the space.
However, it says nothing about stability of true molecules. The
likely source of instability of molecules is not shedding of an
electron but breaking up into atoms or ions which in total have the
same energy as the molecule. So far the only molecule proven to be
stable is the hydrogen molecule $H_2$ (see \cite{BFGR}).

Theorems \ref{hvz} - \ref{zysl} still hold for fixed symmetry
types, with the statement of Theorems \ref{hvz} modified as
\begin{itemize}
\item $\sigma_{\textrm{ess}}(H_N^\s(y))=[\Sigma^\s,\infty),$ where $\Sigma^\s=\min_{\al \prec \prec \s}\inf \sigma(H_{N-1}^\al (y))$.\end{itemize}

  The uniqueness of the ground state is a delicate issue. Without
  statistics the ground state of Schr\"odinger operators $H$ is unique (non-degenerate).
  This follows from the positivity improving property of $e^{-\beta H},
  \beta >0$ and from Perron-Frobenious theory  (see for example  \cite{RSIV}). 
  For spaces with statistics the ground state energy
  is in general degenerate (for an irreducible representation  $\sigma$ of the permutation group $S_N$,  it is at least the dimension of this representation) but its  multiplicity is not known. 
However, 
 \begin{align}\label{dim-gr-state-sym} &\mbox{Under  Condition (D), the dimension of the  ground state subspace for the symmetry type }\s \notag\\ &=\mbox{   the dimension of  irreducible  representation $\s$.}\end{align}
 \DETAILS{it is exactly the dimension of the corresponding irreducible  representation and therefore 
\begin{equation}\label{uniq-gr-state} \mbox{ for  one-dimensional  irreducible  representations, the ground states  are  unique.}\end{equation}}

Our next result concerns properties of the  one-electron densities,  $\rho_{A}^\al$, 
mentioned in the introduction. Consider a hamiltonian $H_{A}$ of an atom or ion (i.e. $H_{A}$ is of the form \eqref{Ha1} below) and let $H_{A}^\al$ be this hamiltonian restricted to the subspace of  the symmetry type $\al$. Let, furthermore,  $P_{A}^\al$ be the orthogonal projection onto  the ground state subspace of $H_{A}^\al$. We define   the  one-electron density, $\rho_{A}^\al$, for $H_{A}^\al$,  say, through the trace relation  by the  condition 
\begin{equation}\label{rhoA-def} 
\Tr (b \rho_{A}^\al)= \Tr [(b\otimes \one) P_{A}^\al],\end{equation} 
for any one-electron operator $b$. ($\rho_{A}^\al$ can be written explicitly in terms of  any orthonormal basis in $\Ran P_{A}^\al$, see Appendix \ref{sec:sym-gen-app}. Also, the one-electron density can be associated to any orthogonal projection on  $L^2(\R^{3N})$.) We have 

\begin{proposition}\label{prop:spherical}
Let $H_{A}$ be a hamiltonian of an atom or ion.  Then the one electron density of the ground state subspace of $H_{A}$ is spherically symmetric. (If $H_{A}$ is defined on the entire space (no symmetry restriction) then the  ground state of $H_{A}$ is spherically symmetric.)
\end{proposition}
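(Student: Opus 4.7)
}

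The plan is to exploit the rotational invariance of the atomic Hamiltonian $H_A$ (whose nucleus sits at a fixed point, which we may take to be the origin) together with the compatibility of the $SO(3)$-action with the permutation action, and then transport this symmetry to the one-electron density through the defining trace identity \eqref{rhoA-def}. The main observation is that the argument needs nothing beyond the unitary equivariance of the ground-state projection and cyclicity of the trace; no spectral degeneracy or multiplicity information about $P_A^\alpha$ enters.

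First I would set up the $SO(3)$-action on the $N$-electron space. For $R\in SO(3)$, let $u_R$ be the natural unitary on $L^2(\R^3)$ defined by $(u_R f)(x)=f(R^{-1}x)$, and let $U_R:=u_R^{\otimes N}$ on $L^2(\R^{3N})$. Since the potentials in $H_A$ depend only on $|x_i|$ and $|x_i-x_j|$ (the nucleus being at the origin), $[H_A,U_R]=0$ for every $R$. Moreover, $U_R$ commutes with every permutation operator $T_\pi$, so $U_R$ leaves each symmetry subspace $\cH^\alpha$ invariant and hence commutes with $H_A^\alpha$. The ground-state eigenspace of $H_A^\alpha$ is therefore $U_R$-invariant, which means
\begin{equation*}
U_R\,P_A^\alpha\,U_R^{*}=P_A^\alpha\qquad\forall\,R\in SO(3).
\end{equation*}

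Next I would transfer this to $\rho_A^\alpha$ using \eqref{rhoA-def}. For any bounded one-electron operator $b$, cyclicity of the trace gives
\begin{equation*}
\Tr\!\bigl(b\,\rho_A^\alpha\bigr)
=\Tr\!\bigl((b\otimes\one)\,P_A^\alpha\bigr)
=\Tr\!\bigl(U_R(b\otimes\one)U_R^{*}\,U_R P_A^\alpha U_R^{*}\bigr)
=\Tr\!\bigl((u_R b\,u_R^{*}\otimes\one)\,P_A^\alpha\bigr)
=\Tr\!\bigl(b\,u_R^{*}\rho_A^\alpha u_R\bigr),
\end{equation*}
where the third equality uses $U_R(b\otimes\one)U_R^{*}=(u_R b u_R^{*})\otimes\one$ (together with $U_R P_A^\alpha U_R^{*}=P_A^\alpha$), and the last equality is \eqref{rhoA-def} applied to $u_R b u_R^{*}$, followed by cyclicity on the one-electron trace. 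Since $b$ is arbitrary, $u_R^{*}\rho_A^\alpha u_R=\rho_A^\alpha$; this is the statement that the one-electron density matrix commutes with every $u_R$, whence its diagonal, the scalar density $\rho_A^\alpha(z)$, satisfies $\rho_A^\alpha(Rz)=\rho_A^\alpha(z)$, i.e.\ spherical symmetry.

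For the parenthetical statement (no symmetry restriction), $e^{-\beta H_A}$ is positivity improving in a suitable basis, so Perron--Frobenius gives a unique and (up to phase) strictly positive ground state $\Phi$. Uniqueness forces $U_R\Phi=c(R)\Phi$ for a phase $c(R)$; taking $R$ close to the identity and using positivity of $\Phi$ shows $c(R)\equiv 1$, so $\Phi(Rx_1,\dots,Rx_N)=\Phi(x_1,\dots,x_N)$. The only delicate point in the whole argument is the one I already flagged---the symbolic interpretation of $b\otimes\one$ on the antisymmetrized space and the consistent placement of the combinatorial factor in \eqref{rhoA-def}---which is handled cleanly by working with the reduced one-particle density matrix as above; once that is in place, the proof is a two-line symmetry computation.
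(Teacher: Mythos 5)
Your argument is essentially the same as the paper's: both establish that the ground‑state projection commutes with the $SO(3)$‑action (you via invariance of the eigenspace, the paper via the Riesz formula for eigenprojections) and then push this through the trace identity \eqref{rhoA-def} by cyclicity to get $u_R\rho_A^\alpha u_R^{*}=\rho_A^\alpha$. Your treatment of the parenthetical claim is somewhat more explicit (you invoke Perron--Frobenius and positivity to pin down $c(R)=1$, whereas the paper only notes the rank‑one character of $P_A$), but the route is identical.
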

\begin{proof} By 
the Riesz formula for eigen-projections, $P_{A}^\al$ and therefore $P_{A, R}^\al$ commutes with   any rotation,  $P_{A, R}^\al T_R=T_R P_{A, R}^\al$, $\forall R\in O(3)$, where 
\begin{align} \label{TR} T_R \Phi(z_1,...,z_{|A|})=\Phi(R^{-1} z_1,...,R^{-1} z_{|A|}), \end{align} 
 we have  
 $\Tr [T_R^{-1} (b\otimes \one) P_{A, R}^\al T_R]= \Tr [(t_R^{-1} b t_R\otimes \one) P_{A, R}^\al]$, where $t_R $ denotes the one-electron rotation. This, together with the definition  $\Tr (b \rho_{A}^\al)= \Tr [(b\otimes \one) P_{A, R}^\al]$ of $\rho_{A}^\al$ and the cyclic property of the trace,  gives   $\Tr (b \rho_{A}^\al)= \Tr [(t_R^{-1} b t_R)  \rho_{A}^\al]= \Tr [ b   (t_R\rho_{A}^\al t_R^{-1})]$ and therefore $ \rho_{A}^\al= t_R\rho_{A}^\al t_R^{-1} $.
 
 With no symmetry restriction, $P_{A, R}^\al$  is a rank-one projection and therefore  $P_{A, R}^\al T_R=T_R P_{A, R}^\al$, $\forall R\in O(3)$,  implies that  the  ground state of $H_{A}$ is spherically symmetric. 
 \end{proof}
 We mention here that the definition of $\rho_{A}^\al$ implies that for any orthonormal basis, $\{\Psi_{A}^{\alpha,i}, i=1,..., n_A\}$ in $\Ran P_{A, R}^\al$, where $n_A:=\dim \Ran P_{A,R}^\al$, we have
\begin{equation}\label{rho1}
\rho_{A}^\al(z_1):=\sum_{i=1}^{n_A} \int |\Psi_{A}^{\alpha,i}(z_1,...,z_{|A|})|^2 dz_2...dz_{|A|}.
\end{equation}

\paragraph{Decompositions.}\label{sec:deco}
Recall that $M$ and $N$ are the numbers of the nuclei and electrons.
Let $a=(A_1,...,A_M)$ be a partition of $\{1,2,...,N\}$ into disjoint subsets some of which might be empty. With the set $A_j$ we
associate the $j$-th nucleus of charge $eZ_j$ by assigning to it the
electrons with labels  in $A_j$. This gives a decomposition of the system into atoms/ions,  $A_1,...,A_M$, called also  clusters. We denote
the collection of all such decompositions by $\mathcal{A}$. 
The set of all $a \in \mathcal{A}$ with $|A_j|=Z_j$ for all $j=1,...,M$ will
be denoted by $\mathcal{A}^{at}$. Its elements correspond to
decompositions of our system to neutral atoms.

Permutations $\pi \in S_N$  act naturally on decompositions $a \in \mathcal{A}^{at}$.  
For any two non-equal decompositions, $a$ and $b$, there is a unique permutation $\pi \in S_N/ S(a)$  such that $b=\pi a$. Here, recall, $S(a)$ is the subgroup of $S_N$, which leaves $a$ invariant.  

 For each  decomposition $
a=(A_1,...,A_M) \in \mathcal{A}$ we define the hamiltonian
\begin{equation}\label{Ha}
H_a=\sum_{m=1}^{M}H_{A_m},
\end{equation}
where  $H_{A_m}$ is the Hamiltonian of the m-th atom or ion,
\begin{equation}\label{Ha1}
 H_{A_m}:=\sum_{i\in A_m}(- \Delta_{x_i}- \frac{e^2Z_m}{|x_{i}-y_{m}|})+\sum_{i,j \in A_m, i<j}\frac{e^2}{|x_{i}-x_{j}|},
\end{equation}
and the inter-cluster interaction
$I_a:=H_N(y)-H_a$, 
the sum of all interactions between the different atoms/ions in the decomposition
$a$. We have that
\begin{equation}\label{Hadecomp}
 H_N(y)=H_a+I_a.
 \end{equation}
 Let  $ E_{A_m}$  and $\phi_{A_m}$  and $E_a$ and $\Phi_a$ be the ground state energy and ground state of  $H_{A_m}$ and $H_a$, respectively,  $H_{A_m} \Phi_{A_m}=E_{A_m} \Phi_{A_m}$ and $H_a \Phi_a=E_a \Phi_a$.
  We have that $E_a=\sum_{m=1}^M E_{A_m}$ and
\begin{equation}\label{Phiaeq}
 \Phi_{a}(x_1,...,x_N)= \prod_{m=1}^M \phi_{A_m}(x_{A_m}),
\end{equation}
where  
$x_{A_m}=(x_i: i \in A_m)$. 
Furthermore, if  $\phi_m$  the ground state of the  $m$-th atom with the nucleus fixed at the origin, then
 \begin{equation}\label{phiAj}
\phi_{A_m}(x_{A_m})=\phi_{m}(x_{A_m}-y_m),
 \end{equation}
with $x_{A_k}-y_m=(x_i-y_m: i \in A_m)$. 
 Throughout the text we will always assume that $\|\phi_m\|=1$ for
all $m=1,...,M$. Standard estimates (see \cite{HS}) give
 that there exists $\theta >0$ such that
\begin{equation}\label{groundstatedecay}
\|e^{\theta \langle x_{A_m} \rangle} \partial^{\alpha} \phi_{m}\|
\lesssim 1, \forall \alpha \text{ with } 0\leq |\alpha| \leq 2,
\end{equation}
where $\alpha$ is a multiindex with each index corresponding to
differentiation in some body variable. For $a =\{A_1,...,A_M\}\in
\mathcal{A}$ we define $y_{a}=(y_{a,1},...,y_{a,N})$, where
$y_{a,i}=y_m$ for $i \in A_m$. In other words $y_{a,i}$ is the
coordinate of the nucleus that $x_i$ is assigned to. From
\eqref{Phiaeq}, \eqref{phiAj} and \eqref{groundstatedecay} we obtain
that
\begin{equation}\label{Phiadecay}
\|e^{\theta \langle x-y_a \rangle} \partial^{\alpha} \Phi_a\|
\lesssim 1, \forall \alpha \text{ with } 0 \leq |\alpha| \leq 2,
\end{equation}
where $x=(x_1,...,x_N)$.

 By the definition of $a$'s, $\min_{a \in \mathcal{A}} E_a=\Einfty$, where $\Einfty$ was defined in
\eqref{Einftydef}. If Property (E) holds, as we expect it always does, than $ E_a=\Einfty$, $\forall a \in \mathcal{A}^{at}$.

\DETAILS{ Then  Property (E) implies that 
\begin{equation*}
E^{\s}(\infty)=\min_{a\in \cA^{at},\al\prec \prec \s} \inf \sigma(H_a^\al).
\end{equation*}}

\bigskip


\section{Proof of Theorem \ref{thm:vdW-maxspin} assuming stability estimates} \label{sec:setup} 
\DETAILS{In this section we lay out the general set-up for our approach without taking the statistics into account. Our goal is to prove
Theorem \ref{thm:vdWlaw-nostat},}
In this section we prove Theorem \ref{thm:vdW-maxspin} stating van der Waals law in the  highest spin case, 
 under a technical assumption, which are then verified in later sections. In what follows we omit
the subindex $N$ and the argument $y$ and write $E$ and $H$ for $E(y)$ and $H_N(y)$. 

For the system to have the  highest spin, $H$ has to act on the subspace  $\mathcal{H}_A=\bigwedge_1^N L^2(\mathbb{R}^3) $ corresponding to  one-dimensional, anti-symmetric representation of $S_N$ (with the one-column Young diagram).
For each $a$, this  representation induces the unique,  the one-dimensional, anti-symmetric representation of $S(a)$.  Hence we omit, without a danger of confusion, the label for this representation, having in mind that all operators below act on  the subspace $\mathcal{H}_A=\bigwedge_1^N L^2(\mathbb{R}^3) $ 
and, as can be easily verified leave this subspace invariant. 
  \DETAILS{In this section, $Q$ denotes the orthogonal projection   onto the space $\mathcal{H}_A=\bigwedge_1^N L^2(\mathbb{R}^3) $ corresponding to this representation. For general symmetry type such projections are constructed explicitly in terms of the corresponding characters in Section \ref{sec:proofvdWThm-s}.}

\DETAILS{\subsection{Feshbach map} 
\label{FPmethod} Let $P$ be an orthogonal projection and
$P^\bot=1-P$. Introduce the notation  $\H^\bot=P^\bot \H P^\bot$.
 We will use the Feshbach-Schur method (see \cite{BFS, GS}), which,
  as applied
to the quantum Hamiltonian $\H$, states that if
\begin{itemize}
\item[(a)]  $\Ran(P) \subset D(\H)$ (domain of $\H$) and therefore $\|HP\|< \infty$;
\item[(b)]   The operator $(\H^\bot-\lambda)$   is invertible;
\end{itemize}
then the Feshbach-Schur map
\begin{equation}\label{FP}
F_P(\lambda)=(P \H P-U(\lambda))|_{\Ran P},
\end{equation}
where
\begin{equation}\label{U}
U(\lambda):=P \H P^\bot (\H^\bot-\lambda)^{-1} P^\bot \H P,
\end{equation}
is well defined and
\begin{equation}\label{FSE}
\lambda \text{ eigenvalue of } \H \iff \lambda\ \text{ eigenvalue of
}\ F_P(\lambda).
\end{equation}
Moreover, the eigenfunctions of $\H$ and $F_P(\lambda)$
corresponding to the eigenvalue $\lam$ are connected as
\begin{equation}\label{Fesh-iso}
H \psi=\lambda \psi\ \quad \Leftrightarrow\ \quad  F_P(\lam)
\phi=\lambda \phi,
\end{equation}
where $\phi,\psi$ are related by the following equations
\begin{equation}\label{efrelations}
\phi=P\psi, \text{ } \psi=Q(\lambda) \phi.
\end{equation}
Here the family of operators $Q(\lam)$ is defined  as
\begin{equation}\label{Qlam}
Q(\lam)=P-P^\bot (H^\bot-\lambda)^{-1} P^\bot H P.
\end{equation}
(Remember that we do not display the $y-$dependence of various
objects.)

 In what follows we will explain
how we are going to use the Feshbach map.
 The first step is to choose the orthogonal projection
$P$.}

\paragraph{Orthogonal projection $P$.} \label{FPexistence}
 We will now define the projection $P$, to be used in the Feshbach-Schur method described above.
  We cut off the ground state energy $\Phi_a$ defined in
\eqref{Phiaeq} as follows. Recall that $\phi_j$ denotes the ground
state of the $j$-th atom centered at the origin. Let $\chi_R:
\mathbb{R}^3 \rightarrow \mathbb{R}$ be a spherically symmetric,
smoothed out characteristic function of the ball $B(0,\frac{R}{6})$
supported in the same ball. Let, as in \eqref{Phiaeq} - \eqref{phiAj},
\begin{equation}\label{psiAj}
\psi_{A_k}(x_{A_k}):=\psi_{k}(z_{A_k}),\ \quad \mbox{where}\ \quad  \psi_k(z_{A_k}):=\frac{(\phi_k \chi_R^{\otimes Z_k})(z_{A_k})}{\|\phi_k \chi_R^{\otimes Z_k}\|}.\end{equation}
 Here $ z_{A_k}:= 
(z_i :=x_i-y_k: i \in A_k)$, where, recall,  $x_{A_k}=(x_i: i \in A_k)$, and  $\chi_R^{\otimes |A_k|}(z_{A_k}):=\prod_{i\in A_k} \chi_R(z_i)$. Furthermore, we define
\begin{equation}\label{Psiaeq}
 \Psi_{a}(x_1,...,x_N):= \prod_{k=1}^M \psi_{A_k}(x_{A_k}).
\end{equation}
 From \eqref{Phiaeq}, \eqref{phiAj},
\eqref{Phiadecay}, $H_a \Phi_a=\Einfty \Phi_a$, where
$\Einfty$ was defined in \eqref{Einftydef}, and from the construction of $\Psi_a$ we obtain the estimates
\begin{align}\label{eigenf-differ}
&  \Phi_a  \doteq \Psi_a,\ \quad 
 H_a \Psi_a  \doteq \Einfty \Psi_a,\\
\label{Psiadecay}
&\|e^{\theta \langle x-y_a \rangle} \partial^{\alpha} \Psi_a\|
\lesssim 1, \forall \alpha, \text{ } 0 \leq |\alpha| \leq 2,\\
\label{pairor}
& \Psi_a  \Psi_b =0, \forall a,b \in \mathcal{A}^{at}, \text{ } a \neq b.
\end{align}
    We choose $P$ as  the orthogonal projection on  $\text{span}
\{\Psi_a: a \in \mathcal{A}^{at}\}$.  An important fact about $P$ is that it commutes with the permutations
\begin{equation}\label{Psym}
P T_\pi=T_\pi P, \quad \text{ } \forall \pi \in S_N,
\end{equation}
where, recall, $T_{\pi}$ are the unitary operators given by
\eqref{Tpi}. Moreover, according to \eqref{FP},
 we have to compute $P\H P$ and $U(\lambda)$. To this end we use
 \eqref{pairor} and as a consequence,  we have
\begin{equation}\label{P}
   P= \sum_{a \in \mathcal{A}^{at}} P_{a, R} = \sum_{a \in \mathcal{A}^{at}} P_{\Psi_a} 
\end{equation}
(acting on  the subspace $\mathcal{H}_A=\bigwedge_1^N L^2(\mathbb{R}^3) $). In the present context, $P_{a, R} =P_{\Psi_a}$ is the same the projection as in Introduction.   We use the notation $P_{a, R}$ rather than $P_{\Psi_a}$ in order to exposition similar to the one for arbitrary symmetry types in Section \ref{sec:proofvdWThm-s}. 

 To show that the Feshbach map exists, we note that $\Ran (P) \subset
\text{Dom}(H)$ since the range of $P$ is spanned by $(\Psi_a)_{a \in
\mathcal{A}^{at}}$ and by \eqref{Psiadecay} each $\Psi_a$ is in
$H^2(\R^{3N})$. Hence the condition (a) for the existence of the Feshbach map holds. We will prove in
Section \ref{Hbotbndseveral} that, under Property (E), there is a
$\imsgap > 0$, independent of $y$ s.t. for $R$ large enough, the following \textit{stability bound}  holds:
\begin{equation}\label{Hbotbnd}
\H^\bot \geq \Einfty+2\imsgap,
\end{equation}
where $\Einfty$ was defined in \eqref{Einftydef} and, recall,  the notation  $\H^\bot=P^\bot \H P^\bot$, acting on  the subspace $\mathcal{H}_A=\bigwedge_1^N L^2(\mathbb{R}^3) $. We
\textit{assume} this for now. Using \eqref{Hbotbnd}
 we obtain that $H^\bot-\lambda$ is invertible for all
$\lambda \leq E(\infty)+ \imsgap $.
 Thus the condition (b) is also
satisfied and, for all $\lambda \leq E(\infty)+\gamma$, the Feshbach
Schur map $F_P(\lambda)$ is well defined.

 Our goal now is to prove the following
 \begin{theorem}\label{thm:vdWlawcond} Assume 
the estimate \eqref{Hbotbnd} 
 holds. Then so does the van der
Waals law for the maximal spin (Theorem \ref{thm:vdW-maxspin}). 
\end{theorem}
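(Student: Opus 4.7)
The plan is to apply the Feshbach--Schur reduction with the projection $P$ of \eqref{P} and to show, modulo errors of size $O(e^4/R^7)$ (plus exponentially small corrections), that
\[
F_P(\lam) = \Einfty P - \sum_{i<j}^{1,M} \frac{e^4 \s_{ij}}{|y_i-y_j|^6} P + O(e^4/R^7),
\]
uniformly in $\lam$ in a neighborhood of $\Einfty$. The stability bound \eqref{Hbotbnd} ensures that $F_P(\lam)$ is well defined for all $\lam \le \Einfty+\imsgap$, and since $\Ran P$ is finite dimensional, the isospectrality \eqref{FSE} together with the variational principle will then identify $E(y)$ with the smallest eigenvalue of $F_P(E(y))$ and yield \eqref{vdWlaw}.

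\textbf{Computation of $PHP$.} First, using $H=H_a+I_a$, the near--eigenrelation $H_a \Psi_a \doteq \Einfty \Psi_a$ from \eqref{eigenf-differ}, and pairwise orthogonality \eqref{pairor}, I get
\[
PHP \doteq \Einfty P + \sum_{a\in\cA^{at}} P_{\Psi_a} I_a P_{\Psi_a}.
\]
Each diagonal term has the form \eqref{TrIP}, an integral of the two-body kernel $I(z,z';y_{ij})$ against products of the one-electron densities $\rho_{A_i}(z)\rho_{A_j}(z')$ of the cut-off atomic ground states. By Proposition \ref{prop:spherical} these densities are spherically symmetric, and by charge neutrality each atom contributes total charge zero (electrons vs. nucleus). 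Newton's screening theorem then makes the integral vanish identically once the supports of the densities, which are contained in balls of radius $R/6$, are disjoint from the singularities of $I(\cdot,\cdot;y_{ij})$, which happens for all $R$ sufficiently small relative to the interatomic distances. Hence $PHP \doteq \Einfty P$.

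\textbf{Computation of $U(\lam)$.} Using $P^\bot H P \doteq \sum_a P^\bot I_a P_{\Psi_a}$ (from $P^\bot P_{\Psi_a}=0$ and $H_a \Psi_a \doteq \Einfty \Psi_a$) in the definition \eqref{U} gives, with $R^\bot(\lam)=(H^\bot-\lam)^{-1}$,
\[
U(\lam) \doteq \sum_{a,b\in\cA^{at}} P_{\Psi_a} I_a R^\bot(\lam) I_b P_{\Psi_b}.
\]
Next, I use the neutral-atom coordinates $z_{ki}=x_k-y_i$ and Taylor-expand each $I_{ij}$ in $|y_{ij}|^{-1}$; charge neutrality of each cluster kills the monopole--monopole and monopole--dipole orders, yielding \eqref{IaPa-exp'} and the uniform bound $\|I_a P_{\Psi_a}\| \ls \sum_{i<j}|y_{ij}|^{-3}$. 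The off-diagonal terms $a\ne b$ are controlled by a geometric localization of $H^\bot$ adapted to the supports of $\Psi_a$ and $\Psi_b$: since these supports are disjoint product regions separated by distance $\sim R$, the exponential weights carried by $\Psi_a$ and $\Psi_b$ combine with the stability bound to give a decay factor that, together with the two $|y|^{-3}$ factors, is absorbed in $O(e^4/R^7)$. A similar partition-of-unity argument replaces $R^\bot(\lam)$ by the cluster resolvent $(H_a^\bot - \Einfty)^{-1}$ in the diagonal terms at the same cost. Substituting \eqref{IaPa-exp'} in the resulting expression, all cross terms between distinct pairs $(i,j)$ and distinct electron labels vanish by the parity of the spherically symmetric atomic densities, leaving
\[
U(\lam) \doteq \sum_{i<j}^{1,M} \frac{e^4 \s_{ij}}{|y_{ij}|^6} P + O(e^4/R^7),
\]
with $\s_{ij}$ positive constants of the form $\langle f_{ij} \Psi_a, (H_a^\bot - \Einfty)^{-1} f_{ij} \Psi_a\rangle$ (positivity follows from $H_a^\bot \ge \Einfty + \gamma$).

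\textbf{Conclusion and main obstacle.} Combining the two displays gives $F_P(\lam)$ as asserted; applying \eqref{Fesh-iso} and comparing the smallest eigenvalues then yields $W(y) = -\sum_{i<j} e^4 \s_{ij}/|y_{ij}|^6 + O(e^4/R^7)$. The main obstacle I expect is the geometric decoupling needed in the off-diagonal and diagonal reductions: one must pass from the global resolvent $R^\bot(\lam)$ to the local cluster resolvents $(H_a^\bot-\Einfty)^{-1}$ uniformly in $\lam$ near $\Einfty$, with errors strictly smaller than $R^{-7}$. This requires using the stability bound \eqref{Hbotbnd} together with exponential decay estimates (of the type \eqref{Phiadecay} and possibly of boosted analogues as in Supplement I) to handle the Coulomb singularities in $I_a$ carefully enough that the $\sum_{i<j}|y_{ij}|^{-3}$ bound on $\|I_a P_{\Psi_a}\|$, squared, is genuinely improved to $R^{-7}$ in the error rather than merely $R^{-6}$.
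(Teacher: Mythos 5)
Your proposal follows essentially the same route as the paper: the Feshbach--Schur reduction with the projection $P$ of \eqref{P}, $PHP \doteq \Einfty P$ via Newton's theorem and spherical symmetry of the one-electron densities, Taylor expansion of $I_a$ against the cutoff projections giving $\|I_a P_{a,R}\|\lesssim\sum_{i<j}|y_{ij}|^{-3}$, and a partition-of-unity/localization argument replacing the global resolvent $R^\bot(\lambda)$ by cluster resolvents with an $O(1/R)$ error, followed by parity arguments killing the cross terms, and a final uniformity-in-$\lambda$ step using $|E-\Einfty|\lesssim R^{-6}$. Two minor points where the paper sharpens what you sketch: the off-diagonal terms $a\neq b$ vanish \emph{exactly} (not merely exponentially) once the cutoff $\chi_a$ is inserted and $\chi_a P_{b,R}=\delta_{ab}P_{a,R}$ is used, and the resolvent $(H_a^\bot-\Einfty)^{-1}$ in your expression for $\sigma_{ij}$ is further reduced (since $f_{ij}$ depends only on the electrons of atoms $i,j$) to the two-atom resolvent $R_{ij}^\bot=(P_{ij}^\bot(H_{A_i}+H_{A_j})P_{ij}^\bot-E_{A_i}-E_{A_j})^{-1}$, which is what makes $\sigma_{ij}$ manifestly a property of the pair alone and manifestly $y$-independent after a rotation argument.
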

To this end we have to estimate the two terms on the r.h.s of \eqref{FP}. We begin with the first one. 

\medskip

\paragraph{Estimate of $P \H P$.}

We show  that (on  the subspace $\mathcal{H}_A=\bigwedge_1^N L^2(\mathbb{R}^3) $)
\begin{equation}\label{PHP}
P \H P \doteq \Einfty P.
\end{equation}
Indeed,  using \eqref{P}, that $\Psi_a$ and $ \Psi_b$ have disjoint
supports and that $H$ is a local operator we obtain that
$ P \H P=\sum_{a \in \mathcal{A}^{at}} P_{a, R} \H P_{a, R}=\sum_{a \in \mathcal{A}^{at}}   \langle \Psi_a, \H \Psi_a \rangle P_{a, R}. $ 
Next using \eqref{P} and that by \eqref{eigenf-differ}, 
we have $H
\Psi_a= H_a \Psi_a+I_a \Psi_a \doteq \Einfty \Psi_a+ I_a \Psi_a$, $\forall a \in \mathcal{A}^{at}$, we
obtain that
\begin{equation}\label{PHP2}
P \H P \doteq \Einfty P + \sum_{a \in \mathcal{A}^{at}} 
\langle \Psi_a, I_a \Psi_a \rangle P_{a, R}.
\end{equation}
Note that since the group $S_N$ acts transitively on partitions from $\mathcal{A}^{at}$, the r.h.s. of this expression commutes with the permutations $T_\pi,\ \pi\in S_N$, and therefore leaves  the subspace $\mathcal{H}_A=\bigwedge_1^N L^2(\mathbb{R}^3) $ invariant. Eq. \eqref{PHP} follows from \eqref{PHP2} and the following 
\begin{lemma}\label{lem:newton}
For all $a \in \mathcal{A}^{at}$, we have
\begin{equation}\label{psiaIbpsib}
\langle \Psi_{a}, I_a \Psi_{a} \rangle =\Tr (I_a P_{a, R}) = 0.
\end{equation}
\end{lemma}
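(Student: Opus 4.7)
The plan is to compute $\langle \Psi_a, I_a \Psi_a\rangle$ by splitting the intercluster interaction into pairwise atomic contributions and showing that each one vanishes separately by charge neutrality combined with Newton's screening theorem. First I would write $I_a=\sum_{i<j}I_{ij}$, where $I_{ij}$ collects the four types of Coulomb interactions between clusters $A_i$ and $A_j$: the nucleus-nucleus term $e^2 Z_iZ_j/|y_{ij}|$, the two electron-nucleus sums $-\sum_{k\in A_i} e^2 Z_j/|x_k-y_j|$ and $-\sum_{l\in A_j} e^2 Z_i/|x_l-y_i|$, and the electron-electron term $\sum_{k\in A_i,\,l\in A_j} e^2/|x_k-x_l|$.

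Next I would use the product form $\Psi_a=\prod_k \psi_{A_k}$, change variables $z=x-y_k$ on each cluster, and exploit antisymmetry of $\psi_{A_k}$ in its electron labels to reduce $\langle \Psi_a, I_{ij}\Psi_a\rangle$ to an integral against the one-electron densities $\rho_{A_i,R}$ and $\rho_{A_j,R}$ defined in \eqref{rhoA-def}. Summing the $|A_k|=Z_k$ identical contributions from each cluster produces the prefactor $Z_iZ_j$, yielding exactly the expression \eqref{TrIP} with integrand $I(z,z';y_{ij})$ as in \eqref{tildeIijkl}. This step is essentially electron-counting bookkeeping.

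The key point is then to verify two properties of each cut-off one-electron density $\rho_{A_k,R}$: (i) it is spherically symmetric, which follows from Proposition \ref{prop:spherical} (since in the totally antisymmetric case Condition (D) makes the atomic ground state subspace one-dimensional and thus rotation invariant) together with the rotation invariance of $\chi_R^{\otimes Z_k}$; and (ii) it is supported in $B(0,R/6)$, which is immediate from the construction \eqref{psiAj}. Since $|y_{ij}|\ge R > R/3$, Newton's screening theorem applies to each of the four terms of $I(z,z';y_{ij})$: each integral collapses to $1/|y_{ij}|$, and the signs $-,-,+,+$ in \eqref{tildeIijkl}, which encode neutrality of the atoms $A_i$ and $A_j$, cause exact cancellation. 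Hence every pair contribution vanishes and $\Tr(I_a P_{a,R})=0$, which gives the lemma. The only non-routine point is the spherical symmetry of the cut-off density, but no real obstacle is expected since $\chi_R$ is itself spherically symmetric.
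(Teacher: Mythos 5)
Your proposal follows essentially the same route as the paper: split $I_a$ into pairwise cluster contributions using charge neutrality, reduce to integrals against the cut-off one-electron densities via the product form $\Psi_a=\prod_k\psi_{A_k}$ and a change to $z$-variables, invoke spherical symmetry of those densities (Proposition \ref{prop:spherical} plus the spherically symmetric cut-off $\chi_R$) and their compact support in $B(0,R/6)$, and then apply Newton's screening theorem so that each of the four terms in $I(z,z';y_{ij})$ contributes $1/|y_{ij}|$ and the signs cancel. The only inessential difference is that you attribute the spherical symmetry to one-dimensionality of the ground state subspace via Condition (D), whereas Proposition \ref{prop:spherical} already yields it directly for the one-electron density of the whole ground state subspace regardless of its dimension.
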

\begin{proof}
 Let $a=\{A_1,...,A_M\}$.
Due to charge neutrality for each atom $A_j$, we have that
\begin{equation}\label{Iabreakup}
I_{a}=e^2\sum_{i<j}^{1,M} 
\sum_{k \in A_i, l \in A_j} 
[-\frac{1}{|y_j-x_k|}-\frac{1}{|y_i-x_l|}+\frac{1}{|y_i-y_j|}+\frac{1}{|x_k-x_l|}].
\end{equation}
Let   $\rho_{A, R}$  be  the one-electron density  associated with the projection $P_{A, R}$ (by \eqref{psiAj} and \eqref{Psiaeq}, $\rho_{A, R}(z)=\chi(z)\rho_{A}(z)$, where $\rho_{A}(z)$ is defined in \eqref{rhoA-def}). Using \eqref{Psiaeq}, or $ P_{a, R}=\prod_i P_{\psi_i}$,  
\eqref{Iabreakup}, and passing to the variables
\begin{equation}\label{z-var} 
z_{km}=x_k-y_m,\ \forall  l \in A_m,\  \forall m=1,...,M,
\end{equation} 
we obtain \eqref{TrIP} - \eqref{tildeIijkl}.
\DETAILS{that
  \begin{equation}\label{TrIP}\Tr (I_a P_{a, R}) = \sum_{i<j}^{1,M} 
  Z_i Z_j I(z, z'; y_{ij})\rho_{A_i, R}(z)\rho_{A_j, R}(z') dz dz' ,\end{equation}
where, with the notation $y_{ij}=y_i-y_j$,
\begin{equation}\label{tildeIijkl}
I ( z, z'; y)=-\frac{e^2}{|y+z|}-\frac{e^2}{|y-z'|}+\frac{e^2}{|y|}+\frac{e^2}{|y+z-z'|}.
\end{equation}}
%
  \DETAILS{   \begin{equation}\label{tildeIijkl}
\tilde{I}_{ij}^{kl}( z_{ki}, z_{lj}; y_{ij})=-\frac{e^2}{|y_{ij}+z_{ki}|}-\frac{e^2}{|y_{ij}-z_{lj}|}+\frac{e^2}{|y_{ij}|}+\frac{e^2}{|y_{ij}+z_{ki}-z_{lj}|}.
\end{equation}
By the definition of the one-electron density, $\int |\psi_i(z_{A_i})|^2 \frac{1}{|y_{ij}+z_{ki}|} dz_{A_i}=
\int \rho_i (z_{ki})\frac{1}{|y_{ij}+z_{ki}|} dz_{k i}$, where $\rho_i (z_{ki})$ is the one-electron density of $\psi_{i}$.
 Since by Proposition \ref{prop:spherical}  the one-electron densities of the functions $\psi_{i}, \psi_{j}$ are spherically symmetric and by the $\psi_{i}, \psi_{j}$ support properties due to the cut-off, we have by Newton's theorem (see for example \cite{LL})
\begin{equation}\label{genexperror2}
\int |\psi_i(z_{A_i})|^2 \frac{1}{|y_{ij}+z_{ki}|} dz_{A_i}=\frac{1}{|y_{ij}|} \int_{|z_{ki}| \leq |y_{ij}|} \rho_i (z_{ki}) dz_{k i}= \frac{1}{|y_{ij}|}.
\end{equation}
The second term of the interaction potential in \eqref{tildeIijkl} can be
handled similarly.
For the third one, applying the Newton's theorem twice in the same way, we obtain that
\begin{equation}\label{genexperror3}
\int |\psi_i (z_{A_i}) \psi_j (z_{A_j})|^2
\frac{1}{|y_{ij}+z_{ki}-z_{lj}|} dz_{A_i} dz_{A_j} = \frac{1}{|y_{ij}|}.
\end{equation}
 Equations \eqref{intdijkl}, \eqref{tildeIijkl},
  \eqref{genexperror2} and \eqref{genexperror3} imply
that $\langle \Psi_a,I_{ij}^{kl} \Psi_a\rangle = 0,$ for all $i,j
\in \{1,...,M\}$ with $i \neq j$ and $k \in A_i,\  l \in A_j$, which
together with \eqref{Iabreakup} implies \eqref{psiaIbpsib}.}
By Proposition \ref{prop:spherical}  and the choice of  the cut-off, the one-electron densities $\rho_{A_j, R}(z)=\chi(z)\rho_{A}(z)$ are spherically symmetric. This, together with Newton's theorem (see for example \cite{LL}) and support properties due to the cut-off (e.g. $\int \rho_{A_i, R} (z)\frac{1}{|y_{ij}+z|} dz=\frac{1}{|y_{ij}|} \int_{|z| \leq |y_{ij}|} \rho_{A_i, R} (z) dz= \frac{1}{|y_{ij}|}$,  $y_{ij}=y_i-y_j$), implies \eqref{psiaIbpsib}.
\end{proof}

\paragraph{Remark.}  By the charge neutrality of the clusters $A_j$, without spherical symmetry, the leading term in $\Tr (I_a P_{a, R})$ is of the order at least  $R^{-3}$ if at least two atoms have non-zero dipoles and of a higher order otherwise (see \eqref{Iijkl-exp}).

\medskip
Eq. \eqref{PHP} and  the variational inequality $\E P \leq P \H P$ imply 

\begin{equation} \label{ineqE}
 \E\ \dot\le\ \Einfty .
\end{equation}
By \eqref{Hbotbnd} and \eqref{ineqE}, the Feshbach map is well defined for $E$ and by
\eqref{FP}  - \eqref{FSE} and \eqref{PHP}, we have that  $E$ is an eigenvalue of 
\begin{equation}\label{FPEdot}
F_P(E) \doteq \Einfty+U(E)|_{\Ran P}
\end{equation}
(acting on  the subspace $\mathcal{H}_A=\bigwedge_1^N L^2(\mathbb{R}^3) $).  
  Now we estimate $U(E)$.
\DETAILS{\medskip

\paragraph{Rough bounds on $\E$.}
Before proceeding to analysis of $U(\lambda)$ we prove rough bounds
on $\E$.
\begin{lemma}\label{lem:ineqE} The ground state energy $\E$ of $\H$
 satisfies the following inequalities:
\begin{equation}\label{ineqE}
-\sum_{1=i<j}^M \frac{1}{|y_i-y_j|^6} \lesssim \E-\Einfty \lesssim
e^{-\frac16 \theta  R}.
\end{equation}
Note that the left hand side gives a bound from below of the van der
Waals interaction energy which is sharp in the order of magnitude.
\end{lemma}
\begin{proof}
The right hand side of \eqref{ineqE} follows $*******$ {\bf (revisions begin)} the variational inequality $\E \leq \langle \Psi_a, \H \Psi_a \rangle$, together with the decomposition \eqref{Hadecomp}, which gives $\langle \Psi_a, \H \Psi_a \rangle= \langle \Psi_a, H_a
\Psi_a\rangle+\langle \Psi_a, I_a \Psi_a \rangle$, and   the relations \eqref{eigenf-differ}, and
\eqref{psiaIbpsib}. 

We introduce the notation $R^{ \bot}:=(H^{\s \bot}-E )^{-1}$,   where , recall,  $H^{\bot}=P^{\bot} Q H Q P^{\bot}$.  {\bf (check, also notation)}. The upper bound, together with \eqref{Hbotbnd}, gives, 
for $R$ large enough, that
\begin{equation}\label{Hbotest}
\|R^{ \bot}\| \lesssim 1.
\end{equation}
{FP}{FSE}Therefore, the Feshbach map is well defined for $E$ and $E$ is
eigenvalue of $F_P(E)$. We estimate $F_P(E)$ from below. By
\eqref{FP} and \eqref{PHP} we have that
\begin{equation}\label{FPEdot}
F_P(E) \doteq \Einfty+U(E)|_{\Ran P}.
\end{equation}
Now we estimate $U(E)$.
 Using \eqref{U} and \eqref{P} we obtain that
\begin{equation}\label{Uexpab}
U(E)=\sum_{a, b \in \mathcal{A}^{at}} |\Psi_a \rangle \langle \chi_a|
R^{ \bot}  |\chi_b \rangle
\langle \Psi_b|,
\end{equation}
where $\chi_c:= P^\bot H \Psi_c$. A simple computation together with \eqref{pairor} and \eqref{P}  shows that
$\chi_a \bot \chi_b$ for $a \neq b$. Since moreover $\Psi_a \bot
\Psi_b$ for $a \neq b$ we obtain that
\begin{equation}\label{sumnormest}
\|\sum_{a \in \mathcal{A}^{at}} |\Psi_a \rangle \langle \chi_a| \|
\leq \max_{a \in \mathcal{A}^{at}} \|\chi_a\|.
\end{equation}
Using \eqref{Uexpab} and \eqref{sumnormest}, we obtain that
\begin{equation}\label{UEchia}
\|U(E)\| \leq \|R^{ \bot}\| \max_a \|\chi_a\|^2.
\end{equation}
To estimate $\chi_a$, 
we use the definition of $\chi_a$ and the decomposition $H=H_a+I_a$ to obtain that 
\begin{equation}\label{chiaest1}
\chi_a= P^\bot H_a \Psi_a+ P^\bot I_a \Psi_a.
\end{equation}
It follows from \eqref{P}, \eqref{psiaIbpsib} and \eqref{pairor} and from  \eqref{P}, \eqref{pairor} and \eqref{eigenf-differ} that
\begin{equation}\label{chiaest2}
P^\bot I_a \Psi_a= I_a \Psi_a\ \quad \mbox{and}\ \quad 
 P^\bot H_a \Psi_a \doteq 0.
\end{equation}
The equations \eqref{chiaest1} and \eqref{chiaest2} 
imply $\chi_a:= P^\bot H \Psi_c \doteq I_a \Psi_a$, which, together with 
\eqref{UEchia} and the fact that, due to symmetry, the right hand side of \eqref{UEchia} is independent of $a$, gives 
\begin{equation}\label{Usharporder}
\|U(E)\| \leq \|R^{ \bot}\| (\|I_a \Psi_a\|^2+O(e^{-\frac16 \theta R})).
\end{equation}
{\bf (revisions end)}  $*******$  

It remains to estimate 

Next, we estimate $I_a \Psi_a$, which is done in the next lemma.
\begin{lemma}
\begin{equation}\label{IaPsia-est}
\|I_a \Psi_a\|_{L^2} \lesssim \sum_{i<j}^{1, M}
\frac{1}{|y_i-y_j|^3}.
\end{equation}
\end{lemma}
\begin{proof}
Let $a=\{A_1,...,A_M\}$. Recall that, due to charge neutrality for each atom $A_j$,  the intercluster interaction $I_{a}$ can be written as
  \eqref{Iabreakup} - \eqref{Iijkl}, i.e. $I_{a}= \sum_{i<j} 
   \sum_{k \in A_i, l \in A_j}I_{ij}^{kl},$
with $I_{ij}^{lm}$ given in \eqref{bijkl}.  The function
$I_{ij}^{kl}$, written in the variables \eqref{zlm}, is
\begin{equation*}
I_{ij}^{kl}=-\frac{e^2}{|z_{ki}+y_{ij}|}-\frac{e^2}{|y_{ij}-z_{lj}|}+\frac{e^2}{|y_{ij}|}+\frac{e^2}{|z_{ki}-z_{lj}-y_{ij}|}.
\end{equation*}
 Using that
  \begin{equation}\label{Taylorsevmu}
  |z \pm y_{ij}|^{-1}=|y_{ij}|^{-1}(1 \pm \frac{2 z \cdot
  \widehat{y_{ij}}}{|y_{ij}|}+\frac{|z|^2}{|y_{ij}|^2})^{-\frac{1}{2}},
  \end{equation}
and expanding in $z_{ki}$ and $z_{lj}$, 
we obtain that
  \begin{equation*}
I_{ij}^{lm}= \frac{e^2}{|y_{ij}|}\left[-\frac{(z_{ki}-z_{lj}) \cdot
\widehat{y_{ij}}}{|y_{ij}|}+\frac{z_{ki} \cdot
\widehat{y_{ij}}}{|y_{ij}|}-\frac{z_{lj} \cdot
\widehat{y_{ij}}}{|y_{ij}|}-\frac{|z_{ki}-z_{lj}|^2-3((z_{ki}-z_{lj})
\cdot \widehat{y_{ij}})^2}{2
|y_{ij}|^2}\right]$$$$+\frac{e^2}{|y_{ij}|}
\left[\frac{|z_{ki}|^2-3(z_{ki} \cdot \widehat{y_{ij}})^2}{ 2|y_{ij}|^2}+\frac{ |z_{lj}|^2-3(z_{lj} \cdot \widehat{y_{ij}})^2}{2|y_{ij}|^2}\right]+O(\frac{|z|^3}{|y_{ij}|^4}),
  \end{equation*}
where $|z|:=\max_{k, l} (|z_{ki}|, |z_{lj}|)$, provided $|z|\le \frac{1}{3} |y_{ij}|$, together with the fact that 
$\supp \Psi_a \subset \{|z|\le \frac{1}{3} |y_{ij}|\}$ gives that
 \begin{equation}\label{Iijkl-exp} 
I_{ij}^{kl}=\frac{e^2}{|y_{ij}|^3}f_{ij}^{lm}(z,\widehat{y_{ij}})+O(\frac{|z|^3}{|y_{ij}|^4}),\
 \text{ on } \supp \Psi_a \end{equation}
 where $f_{ij}^{kl}:= z_{ki} \cdot z_{lj}- 3(z_{ki} \cdot \widehat{y_{ij}})(z_{lj} \cdot  \widehat{y_{ij}})$.
  This estimate together with the \eqref{Psiadecay} gives \eqref{IaPsia-est}.
\end{proof}
%
From \eqref{FPEdot}, \eqref{Usharporder} and \eqref{IaPsia-est}, we
obtain that
\begin{equation}
F_P(E)-\Einfty  \gtrsim -\sum_{1=i<j}^M \frac{1}{|y_i-y_j|^6},
\end{equation}
which due to the fact that $E$ is eigenvalue of $F_P(E)$ implies the
first inequality in \eqref{ineqE}.
  \end{proof}}
%

\paragraph{Estimate of $U(\E)$}

 We recall that $H_{A}$ denotes the Hamiltonian of 
an atom $A$ and $E_A$ denotes its ground state energy. We let $P_{ij}$ stand for
the projection
onto the ground state of $H_{A_i}+H_{A_j}$ and $P_{ij}^\bot := \one -P_{ij}$. 
Let  $|y_{ij}|=|y_i-y_j|$.
 Our goal is to show
  \begin{lemma} 
  There are positive constants $\sigma_{ij}$, independent of $y$ and depending only on the nature of the atoms $i$ and $j$,   s.t.
  \begin{equation}\label{U-est}
U(\E) = f(y) P +O(\frac{1}{R^7}),  \quad 
\mbox{where}  \quad f(y)=\sum_{i<j}^{1,N} \frac{e^4
\sigma_{ij}}{|y_{ij}|^6}. 
\end{equation}
Moreover,  $\sigma_{ij}$ are given by the equations
\begin{equation}\label{sigmaijvalue}
\sigma_{ij}=\langle f_{ij} \phi_{i} \phi_{j}, R_{ij}^\bot f_{ij} \phi_{i}
\phi_{j} \rangle,
\end{equation}
where $R_{ij}^\bot:=(P_{ij}^\bot(H_{A_i}+H_{A_j}) P_{ij}^\bot- E_{A_i}-E_{A_j})^{-1}$  and 
 \begin{equation} \label{fij}
 f_{ij}:=\sum_{k \in A_i,l \in A_j}  
 [z_{ki} \cdot z_{lj}- 3(z_{ki} \cdot \widehat{y_{ij}})(z_{lj} \cdot  \widehat{y_{ij}})].\end{equation}
\end{lemma}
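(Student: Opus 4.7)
Following the outline of the introduction, I start from $U(E)=PHP^\bot R^\bot P^\bot HP$ with $R^\bot:=(H^\bot-E)^{-1}$, which is well defined by the stability bound \eqref{Hbotbnd}. Decomposing $P=\sum_{a\in\cA^{at}}P_{a,R}$ and using $H=H_a+I_a$, the approximation $H_a\Psi_a\doteq\Einfty\Psi_a$ from \eqref{eigenf-differ}, and $P^\bot P_{a,R}=0$, I obtain
\[
U(E)\doteq \sum_{a,b\in\cA^{at}} P_{a,R} I_a P^\bot R^\bot P^\bot I_b P_{b,R}.
\]

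I would then reduce to diagonal $a=b$ terms. The states $\Psi_a$ and $\Psi_b$ have disjoint supports for $a\neq b$, and weighted (boosted--Hamiltonian) resolvent estimates in the spirit of Supplement I, combined with the bound $\|I_a P_{a,R}\|\lesssim R^{-3}$ (from the multipole expansion below), force the off-diagonal terms to be $O(e^{-cR})$ and hence absorbable into $O(e^4/R^7)$. For the diagonal terms, the same class of geometric localization arguments (second resolvent identity together with the exponential decay \eqref{Psiadecay} of $\Psi_a$) allows me to replace $P^\bot R^\bot P^\bot$ on $\Ran(I_a P_{a,R})$ by the cluster resolvent $P_a^\bot R_a^\bot P_a^\bot$, where $R_a^\bot:=(H_a^\bot-\Einfty)^{-1}$, modulo an exponentially small error.

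Next I would perform the multipole expansion of $I_a$. Changing variables to $z_{km}=x_k-y_m$, Taylor-expanding each summand of \eqref{Iabreakup} via \eqref{Taylorsevmu}, and using charge neutrality of each cluster to cancel the $|y_{ij}|^{-1}$ and $|y_{ij}|^{-2}$ contributions, I obtain on $\supp\Psi_a$
\[
I_a P_{a,R} = \sum_{i<j}^{1,M}\Bigl[\tfrac{e^2}{|y_{ij}|^3}\,f_{ij}\,P_{a,R} + O\bigl(e^2/|y_{ij}|^4\bigr)\Bigr],
\]
with $f_{ij}$ as in \eqref{fij}. Plugging this into $U_{aa}:=P_{a,R}I_a P_a^\bot R_a^\bot P_a^\bot I_a P_{a,R}$ produces a double sum over pairs $(i,j)$ and $(i',j')$. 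Cross-pair contributions $(i,j)\neq(i',j')$ involve at least one atom whose integrand (arising from the $z$-factors in $f_{ij}$ or $f_{i'j'}$) is odd in that atom's electron coordinates against the spherically symmetric atomic ground state (Proposition \ref{prop:spherical}), and therefore vanish. The surviving diagonal pair terms factorize across clusters, with spectator atoms $k\notin\{i,j\}$ contributing unity through $\|\phi_k\|=1$, leaving
\[
U_{aa} = \sum_{i<j}^{1,M}\tfrac{e^4}{|y_{ij}|^6}\,\sigma_{ij}\,P_{a,R} + O(e^4/R^7),
\]
with $\sigma_{ij}$ as in \eqref{sigmaijvalue}. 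Summing over $a$ using \eqref{P} gives \eqref{U-est}.

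For the constants $\sigma_{ij}$: positivity follows because $R_{ij}^\bot$ is positive on $\Ran P_{ij}^\bot$ (the gap condition above $E_{A_i}+E_{A_j}$) and $f_{ij}\phi_i\phi_j$ is nonzero yet orthogonal to the ground state ($f_{ij}$ is odd separately in $z_{ki}$ and $z_{lj}$, while $|\phi_i\phi_j|^2$ is spherically symmetric in each atomic factor). Independence of $y$ follows because rotational covariance of $\phi_i\phi_j$ and $R_{ij}^\bot$, together with the tensorial form of $f_{ij}$, make the expectation depend on $\hat{y}_{ij}$ only through rotation-invariant contractions. I expect the \emph{main obstacle} to be the geometric localization: controlling the replacement of the global resolvent $R^\bot$ by the cluster resolvent $R_a^\bot$ with a uniform-in-$y$ error of the required order, which demands careful commutator and weighted-resolvent estimates built on both the stability bound \eqref{Hbotbnd} and the exponential localization \eqref{Psiadecay}.
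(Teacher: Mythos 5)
Your proposal follows the same overall architecture as the paper's proof: decompose $P=\sum_a P_{a,R}$, reduce $U(E)$ to diagonal $a=b$ blocks, replace the global resolvent by the cluster resolvent, expand $I_a$ in multipoles to third order, kill the cross-pair terms by parity, and then argue positivity and $\hat y_{ij}$-invariance by rotational covariance. However, two points deserve attention.

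First, for the reduction to diagonal blocks and the passage from $R^\bot(E)$ to $R_a^\bot$, you invoke boosted (exponentially weighted) resolvent estimates ``in the spirit of Supplement I.'' The paper explicitly says the Supplement I bounds are \emph{not} used in its proof; instead it introduces the smooth cutoffs $\chi_a$ of \eqref{chia-supp}--\eqref{chiader'}, uses the support identity $\chi_a P_{b,R}=\delta_{ab}P_{a,R}$ to eliminate $b\neq a$, and proves the localization estimate
\begin{equation*}
\chi_a R^\bot(E) - R_a^\bot(E)\chi_a = O\!\left(\tfrac1R\right)
\end{equation*}
(see \eqref{charcom}). Both routes are viable, but they are genuinely different: the cutoff route is more elementary and trades exponential gains for polynomial ones. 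Your claim that the replacement of $R^\bot$ by $R_a^\bot$ on $\Ran(I_a P_{a,R})$ is accurate ``modulo an exponentially small error'' is incorrect: the error is governed by $\nabla\chi_a\sim R^{-1}$ (from the commutator $[\Delta,\chi_a]$) and by $\chi_a I_a\sim R^{-1}$ on $\supp\chi_a$, neither of which is exponentially small. The correct rate is $O(1/R)$; this is still sufficient, because it is sandwiched between two factors $\|I_a P_{a,R}\|\lesssim R^{-3}$, giving $O(R^{-7})$.

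Second, you write $R_a^\bot:=(H_a^\bot-\Einfty)^{-1}$ directly, but to go from the natural object $R_a^\bot(E)=(H_a^\bot-E)^{-1}$ to $R_a^\bot(\Einfty)$ with only $O(R^{-7})$ error one must first know $|E-\Einfty|\lesssim R^{-6}$. The paper derives exactly this bound \eqref{EEinfty-bnd} from the preliminary estimate $\|U(E)|_{\Ran P}\|\lesssim R^{-6}$ before performing the substitution. Your outline skips this step; it should be supplied. With these two adjustments (replace the claimed exponential error by $O(1/R)$, and establish $|E-\Einfty|\lesssim R^{-6}$ before freezing the energy in the cluster resolvent), your argument is correct and yields the same final formula \eqref{sigmaijvalue} for $\sigma_{ij}$, together with the positivity and $y$-independence arguments, essentially as in the paper.
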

\begin{proof} By \eqref{Hbotbnd} and  \eqref{ineqE}, the operator $H^{\bot}-E$,   where, recall,  $H^{\bot}=P^{\bot} H  P^{\bot}$, has a bounded inverse, which we denote by 
$R^{ \bot}(E):=(H^{\bot}-E )^{-1}$.  It follows from \eqref{P}, \eqref{psiaIbpsib} and \eqref{pairor} and from  \eqref{P}, \eqref{pairor} and \eqref{eigenf-differ} that
\begin{equation}\label{chiaest2}
P^\bot I_a \Psi_a= I_a \Psi_a\ \quad \mbox{and}\ \quad 
 P^\bot H_a \Psi_a \doteq 0.
\end{equation}
Using \eqref{P} and \eqref{chiaest2} we obtain that
\begin{equation}\label{PorthHP}
 P^\bot \H P \doteq \sum_{a \in \mathcal{A}^{at}}  I_a P_{a, R}.
\end{equation}
 The relations \eqref{U}, \eqref{Hbotbnd}, \eqref{PorthHP} and $P H P^\bot= (P^\bot H P)^*$ give
 that
  \begin{align}\label{U-aprox1}
U (E) \dot = \sum_{a,b} 
 P_{a, R}  I_a   P^\bot R^{\bot}(E) P^\bot  I_b P_{b, R}. 
\end{align}
By the same remark, as the one after \eqref{PHP2}, the r.h.s. of this expression commutes with the permutations $T_\pi,\ \pi\in S_N$, and therefore leaves  the subspace $\mathcal{H}_A=\bigwedge_1^N L^2(\mathbb{R}^3) $ invariant. 

Next, we estimate $I_a \Psi_a$, which is done in the next lemma.
\begin{lemma}
\begin{equation}\label{IaPsia-est}
\|I_a P_{a, R}\| \lesssim \sum_{i<j}^{1, M} \frac{1}{|y_i-y_j|^3}.
\end{equation}
\end{lemma}
\begin{proof}
Let $a=\{A_1,...,A_M\}$. Recall that, due to charge neutrality for each atom $A_j$, be have that by  \eqref{Iabreakup}  and \eqref{tildeIijkl}, the intercluster interaction $I_{a}$ can be written as
 $I_{a}= \sum_{i<j} 
   \sum_{k \in A_i, l \in A_j}I_{ij}^{kl}$, 
  with $I_{ij}^{kl}=I ( z_{ki}, z_{lj}; y_{ij})$ in the variables \eqref{z-var}, where $I$ is given in \eqref{tildeIijkl}. 
 Using that
  \begin{equation}\label{Taylorsevmu}
  |z \pm y_{ij}|^{-1}=|y_{ij}|^{-1}(1 \pm \frac{2 z \cdot
  \widehat{y_{ij}}}{|y_{ij}|}+\frac{|z|^2}{|y_{ij}|^2})^{-\frac{1}{2}},
  \end{equation}
 where, recall,  $y_{ij}=y_i-y_j$, we expand the function $ I_{ij}^{kl}$, given in \eqref{tildeIijkl},  in $z_{ki}$ and $z_{lj}$, we obtain
  \begin{equation*}
 I_{ij}^{kl}= \frac{e^2}{|y_{ij}|}\left[-\frac{(z_{ki}-z_{lj}) \cdot
\widehat{y_{ij}}}{|y_{ij}|}+\frac{z_{ki} \cdot
\widehat{y_{ij}}}{|y_{ij}|}-\frac{z_{lj} \cdot
\widehat{y_{ij}}}{|y_{ij}|}-\frac{|z_{ki}-z_{lj}|^2-3((z_{ki}-z_{lj})
\cdot \widehat{y_{ij}})^2}{2
|y_{ij}|^2}\right]$$$$+\frac{e^2}{|y_{ij}|}
\left[\frac{|z_{ki}|^2-3(z_{ki} \cdot \widehat{y_{ij}})^2}{ 2|y_{ij}|^2}+\frac{ |z_{lj}|^2-3(z_{lj} \cdot \widehat{y_{ij}})^2}{2|y_{ij}|^2}\right]+O(\frac{|z|^3}{|y_{ij}|^4}),
  \end{equation*}
where $|z|:=\max_{k, l} (|z_{ki}|, |z_{lj}|)$, provided $|z|\le \frac{1}{3} |y_{ij}|$, together with the fact that 
$\supp \Psi_a \subset \{|z|\le \frac{1}{3} |y_{ij}|\}$ gives that
 \begin{equation}\label{Iijkl-exp} 
\sum_{k \in A_i, l \in A_j} I_{ij}^{kl}=\frac{e^2}{|y_{ij}|^3}f_{ij}(z,\widehat{y_{ij}})+O(\frac{|z|^3}{|y_{ij}|^4})\
 \text{ on } \supp \Psi_a, \end{equation}
where $f_{ij}(z,\widehat{y_{ij}})$ is given in \eqref{fij}. This estimate together with \eqref{Psiadecay} gives \eqref{IaPsia-est}.
\end{proof}

For each $a \in \mathcal{A}^{\at}$, we introduce the sets 
\begin{equation}\label{hatOma}
\hat \Om^\nu_a :=   
 \{x \in \mathbb{R}^{3N}: |x_i-y_m| \le  \nu R,\ 
\forall i \in A_m,\  \forall 
A_m\in a \}, 
 \end{equation}
and  smoothed out characteristic functions,  
$\chi_a,\ a \in \mathcal{A}^{\rm at},$ symmetric under all permutations in $S(a)$ and such that
\begin{equation}\label{chia-supp}
\chi_a(x)=  1\   \text{ on }\ \hat \Om^{1/6}_a  
 \text{ and  }\ =0   \text{   outside 
  }\  \hat \Om^{1/5}_a,  
\end{equation}
\begin{equation}\label{chiader'}
|\partial^\alpha \chi_a | \lesssim R^{-|\alpha|}, \text{ for any
multi-index } \alpha.
\end{equation}
Eq. \eqref{chia-supp}  and the definition of $P_{a, R}$ imply 
\begin{equation}\label{chiaIa} I_a (-\Delta +1)^{-1/2}=O(\frac1R) \text{ on } \supp \chi_a.\end{equation}
\begin{equation}\label{chiaPb}
\chi_a P_{b, R}  = \delta_{ab}  P_{a, R}, \quad \forall a,b \in \mathcal{A}^{at}.
 \end{equation} 

Let  $ R_a^{ \bot}(E):=(H_a^{\bot}   -E)^{-1} $, where  $H_a^\bot :=H_a P_{a}^\bot$, with the notation 
$P_a=P_{\Phi_a}:= |\Phi_a \rangle \langle \Phi_a|$ (the orthogonal projection onto the ground state of $H_a$).  Our goal is to show the following 
estimate 
\begin{equation}\label{charcom}
\chi_a R^{\bot}(E)- R_a^{ \bot}(E)\chi_a =O(\frac{1}{R}).
\end{equation}
Factoring out the inverse operators on  the l.h.s. of \eqref{charcom}, 
  we obtain that
 \begin{equation}\label{B}
\chi_a R^{\bot}(E)- R_a^{ \bot}(E)\chi_a= R_{a}^{\bot}(E) 
V R^{ \bot}(E), 
 \end{equation}
 where $V:=   P_a^{ \bot} H_a \chi_a- \chi_a P^{\bot} H P^{\bot}$. 
Due to the cut off of the ground states 
 we have that
 Eq. \eqref{chiaPb} and the definition of  $P^{\bot}$ give  
 \begin{equation}\label{chiaP}\chi_a P^{\bot}  =  P_{a, R}^{\bot}\chi_a\  \text{   and        }\  [\chi_a, 
 P_{a, R}^{ \bot}]=0.  \end{equation}   
The latter relations, together with 
the fact that $H_a$ is a local operator, imply that 
$ \chi_a  P^{\bot} H P^{\bot}=  P_{a, R}^{\al \bot} \chi_a H,$
 which, together with $ H= H_a+  I_a$, 
 gives   
 \begin{equation}\label{charsimp'} V =     P_{a, R}^{ \bot} ( H_a \chi_a- \chi_a H_a+ \chi_a I_a)+(P_{a, R}-P_a)H_a \chi_a.  \end{equation}
   Using \eqref{chiader'}, \eqref{chiaIa}, $P_{a, R}\dot = P_{a}$ and the $H_a$ boundedness of thegradient, $\n$, we obtain that
   $\| V (-\Delta +1)^{-1/2}\| \lesssim \frac{1}{R},$
 which together with \eqref{B} implies \eqref{charcom}. 

 Since $\chi_a$ commutes with $I_a$, we have $ P_{a, R} I_a    = P_{a, R}   I_a   \chi_a $. Using this to insert $\chi_a$ into \eqref{U-aprox1},  and using \eqref{chiaP}, \eqref{charcom}, \eqref{IaPsia-est}  and \eqref{chiaPb} 
 gives 
  \begin{align}\label{U-aprox2} U (E) &= \sum_{a,b } 
 P_{a, R}  I_a     R_a^{ \bot}(E)\chi_a  I_b P_{b, R}  +O(\frac{1}{R^7}))\notag \\ 
  &= \sum_{a} 
P_{a, R} I_a   R_a^{\bot}(E) I_a P_{a, R}    
+O(\frac{1}{R^7}).
\end{align}
Again, it is easy to see that the r.h.s. of this expression commutes with the permutations $T_\pi,\ \pi\in S_N$, and therefore leaves  the subspace $\mathcal{H}_A=\bigwedge_1^N L^2(\mathbb{R}^3) $ invariant. 

Now, by a standard estimate and \eqref{IaPsia-est}, we have $0\le \lan \psi, \sum_{a}P_{a, R} I_a   R_a^{\bot}(E) I_a P_{a, R} \psi\ran\ls \frac{1}{R^6}\lan \psi, \sum_{a}P_{a, R} \psi\ran$, which by \eqref{P} and   \eqref{U-aprox2}, gives $\lan \psi, U (E) \psi\ran\ls \frac{1}{R^6}\|P\psi\|^2$, which in turn implies $\| U (E) \big |_{\Ran P} \|\ls \frac{1}{R^6}$. Using, this estimate, together with  \eqref{FPEdot} 
 and the fact that $E$ is an eigenvalue of $F_P(E)$, we estimate
\begin{equation}\label{EEinfty-bnd}
|E- \Einfty|\ls R^{-6}.
\end{equation}
Using this and \eqref{IaPsia-est} in  \eqref{U-aprox2}, and using the notation $R_a^{\bot}=R_a^{\bot}(E(\infty))$, we obtain
\begin{align}\label{Uab} 
U (E) &= \sum_{a}U_{aa}  +O(\frac{1}{R^7}),\ U_{aa} =
P_{a, R} I_a   R_a^{\bot} I_a P_{a, R}    
\end{align}


\DETAILS{Now we show that $U_{aa}$ is independent of $a$. First, recall that for any $a,b \in
\mathcal{A}^{at}$ there exists a permutation $\pi$ with $b=\pi a$.

Moreover, $T_{\pi}\Psi_a=\Psi_{\pi a}$. The last two facts imply
that
$T_{\pi} I_a \Psi_a=I_b \Psi_b. $ 
 Using \eqref{Uab}, and the last relation we obtain that
$$U_{bb}(E)=\langle T_\pi I_a \Psi_a, (\H^\bot-E)^{-1} T_\pi I_a \Psi_a \rangle.$$
Since $H$ and $P$ commute with $T_\pi$ (see \eqref{Hsym} and
\eqref{Psym}) and since $T_\pi$ is unitary, the last relation
implies that \begin{align}\label{Uab2} U_{bb}(\E)=U_{aa}(\E). \end{align}
Thus it suffices to estimate $U^{aa}(\E)$ for a fixed $a$. }

We insert  
\eqref{Iabreakup} with \eqref{Iijkl-exp} 
into the expression for $ U_{aa}$ in \eqref{Uab} 
and use that $O(\frac{|z|^3}{|y_{ij}|^4})P_{a, R}=O(\frac{1}{|y_{ij}|^4})$, to obtain 
\begin{equation}\label{Uab3}
U_{aa}=\sum_{ij,\beta} \frac{e^4 \sigma_{ij,kl}^{a}}{|y_{ij}|^3 |y_{kl}|^3} P_{a, R}+O(\sum_{ij,kl}\frac{e^4}{|y_{ij}|^3 |y_{kl}|^4}),
\end{equation}
where $ij, kl$ run in pairs of nuclei in the decomposition $a$ and $\sigma_{ij, kl}^{a}=\langle  f_{ij} \Phi_a, R_a^\bot  f_{kl} \Phi_a \rangle$.  

 Finally, we prove the following lemma
\begin{lemma}
\begin{equation}\label{sigest}
 \sigma_{ij,kl}^a =\sigma_{ij}\del_{ij, kl} +O(\frac{1}{R}),
\end{equation}
where $\sigma_{ij}$ 
are given in \eqref{sigmaijvalue}.
\end{lemma}
\begin{proof}
 Consider an atomic decomposition $a=(A_1,...,A_M)$. 
 Recall the notation from Section \ref{sec:prelim} and write the orthogonal projection $P_a$ on $\Phi_a$ as $P_{a}=P_{kl}\otimes P^{kl}$, where 
$P_{ij}=P_{A_i}\otimes P_{A_j}$ and $P^{ij}=\prod_{k\ne i, j} P_{A_k}$, with $P_{A_i}=P_{\phi_{A_i}}$, the orthogonal projection on $\phi_{A_i}$. Then  we have $P_{a}^\bot P^{kl}=P_{kl}^\bot\otimes P^{kl}$ and therefore, on the invariant subspace $\Ran P^{kl}$, 
\begin{align}\label{Habot}
 H_a P_a^\bot =P_{kl}^\bot (H_{A_k}+H_{A_l}) P_{kl}^\bot +\sum_{i\ne k, l} E_{A_i}.
\end{align}
This, the notation $R_{kl}^\bot:=(P_{kl}^\bot (H_{A_k}+H_{A_l}) P_{kl}^\bot-E_{A_k}-E_{A_l})^{-1}$ and the fact that $f_{kl}\Phi_a= P^{kl}f_{kl}\Phi_a \in \Ran P^{kl} $ and $\Einfty=\sum_{j=1}^N E_{A_j}$ give
 \begin{align}\label{Habot-Einv}
R_a^\bot (E) f_{kl}\Phi_a  &= R_{kl}^\bot f_{kl} \Phi_a,
\end{align}
which implies 
\begin{equation}\label{eq-W}
 \langle f_{ij} \Phi_a, R_a^\bot  f_{kl} \Phi_a \rangle
 =\langle f_{ij} \Phi_a, R_{kl}^\bot f_{kl} \Phi_a \rangle. 
\end{equation}

We now prove that
\begin{equation}\label{W-expr} \langle f_{ij} \Phi_a, R_{kl}^\bot f_{kl} \Phi_a \rangle= \left\{
\begin{array}{ccc} 0, \text{ if } ij \neq kl
\\ \sigma_{ij}, 
\text{ if }\ ij=kl
\end{array} \right.,\end{equation}
where $\sigma_{ij}$ are given in \eqref{sigmaijvalue}. 
We denote the l.h.s. of \eqref{W-expr} by $W_{ij;kl}$. If $ij \neq kl$, then we may assume without loss of generality that
$i \neq k,l$. Observing that the inner product on the right hand
side of \eqref{eq-W} reduces to an integral that is odd
in $z_{mi}, m \in A_i$ we obtain $W_{ij;kl}= 0$ for $ij \neq kl$. For the
case $ij =kl$, $W_{ij;kl}= \sigma_{ij}$ follows immediately from the
relations \eqref{Phiaeq} - \eqref{phiAj} and the fact that the
ground states of the atoms are normalized. Hence  $W_{ij;kl}$ follows.  

This, together with 
the definition $\sigma_{ij, kl}^{a}=\langle  f_{ij} \Phi_a, R_a^\bot  f_{kl} \Phi_a \rangle$ and 
\eqref{eq-W}, proves  \eqref{sigest}.
\end{proof}

The relations \eqref{Uab}, \eqref{Uab3} and \eqref{sigest}, together with \eqref{P},  imply \eqref{U-est}
modulo the fact that $\sigma_{ij}$ are positive and independent of
$y$ which we prove below.

First we note that the only dependence of $\sigma_{ij}$ on
$\widehat{y_{ij}}:=\frac{y_i-y_j}{|y_i-y_j|}$ appears on $f_{ij}$. To display this dependence 
we will write $f_{ij}^{\widehat{y_{ij}}}$ and $\sigma_{ij}^{\widehat{y_{ij}}}$ for $f_{ij}$ and $\sigma_{ij}$. For any rotation $R$ in
$\mathbb{R}^3$ we define $T_R$ acting on the space
$L^2(\mathbb{R}^{3(|A_i|+|A_j|)})$ (recall that $|A_k|$ is the
number of electrons of the $k$-th atom) of functions of the
variables $z_{kl}=x_k-y_l,\ k=1, \dots , |A_i|+|A_j| ,\ l =1, 2$, by
rotating them.   Since $T_R$ commutes with $H_{A_i}+H_{A_j}$ and since $\phi_i \phi_j$ is a ground state of $H_{A_i}+H_{A_j}$, we have that
 $T_R \phi_i \phi_j$ is also a ground state of $H_{A_i}+H_{A_j}$.
 Therefore, since the ground state is non degenerate and $T_R$ is unitary we obtain that
 $T_R \phi_i \phi_j=c(R) \phi_i \phi_j$ where $|c(R)|=1$. This
 implies 
$$T_R P_{ij}^\bot T_R^{-1}=P_{ij}^\bot.$$
Using the last relation and the fact that $H_{A_i}+H_{A_j}$ commutes
with $T_R$,
 together with \eqref{sigmaijvalue} and the fact
that $T_R$ is unitary, we obtain that
\begin{equation}\label{sigmaijrotation}
\sigma_{ij}^{\widehat{y_{ij}}}= \langle T_{R^{-1}}
(f_{ij}^{\widehat{y_{ij}}} \phi_i \phi_j), R_{kl}^\bot T_{R^{-1}}
(f_{ij}^{\widehat{y_{ij}}} \phi_i \phi_j) \rangle.
\end{equation}
 On the other hand using \eqref{fij} and the fact that $R$ is
 unitary we obtain that
\begin{equation}\label{fijrotation}
T_R^{-1}f_{ij}^{\widehat{y_{ij}}}=f_{ij}^{\widehat{R^{-1} y_{ij}}}.
\end{equation}
But using  that $T_R^{-1} \phi_i \phi_j = c(R^{-1}) \phi_i \phi_j$,
with $|c(R^{-1})|=1$, together with  \eqref{fijrotation}, we obtain that
$$T_{R^{-1}}  (f_{ij}^{ \widehat{y_{ij}}} \phi_i \phi_j)=c(R^{-1}) f_{ij}^{\widehat{R^{-1} y_{ij}}} \phi_i \phi_j.$$
The last relation together with \eqref{sigmaijrotation} and $|c(R^{-1})|=1$ implies that
$$\sigma_{ij}^{\widehat{y_{ij}}}=\langle f_{ij}^{\widehat{R^{-1} y_{ij}}} \phi_i\phi_j, R_{kl}^\bot (f_{ij}^{\widehat{R^{-1} y_{ij}}} \phi_i \phi_j)
\rangle=\sigma_{ij}^{\widehat{R y_{ij}}},$$ which implies that
$\sigma_{ij}$ is independent of $y_{ij}$.

Since the operator $P_{ij}^\bot (H_{A_i}+H_{A_j}) P_{ij}^\bot- E_i-E_j$ is a positive operator, then so is its inverse, $R_{kl}^\bot \geq 0$. Hence, by \eqref{sigmaijvalue}, we have that $\sigma_{ij} > 0$. 
\end{proof}

\paragraph{Conclusion of the argument.}

 From 
\eqref{FPEdot}  and \eqref{U-est} we have that
 \begin{equation}\label{FPalmostiden}
F_P(\E) =  \Einfty- f(y)+O(\frac{1}{R^7}). 
 \end{equation}
  Since the ground state energy $\E$ is an eigenvalue of $\H$, we
  conclude from   \eqref{FSE}, 
  \eqref{FPalmostiden}  and the definition $W(y)=\E-\Einfty$ that \eqref{vdWlaw}  holds.
This proves 
Theorem \ref{thm:vdWlawcond}. $\Box$

%
%
\DETAILS{We  know that $\E \geq \Einfty+O(e^{-\delta R})-\frac{e^4
C}{|y|^6}$.
$$F_P(\lambda)=P \H P- U(\lambda).$$
We want to show that $\E< \Einfty$. We have that
$$\E=\Einfty+O(e^{-\delta(y)})-\frac{e^4 \sigma^{ab}(\E)}{|y|^6}$$
$$\sigma^{ab}(\lambda)=\langle \Psi_a, f^a P^\bot (\H^\bot-\lambda)^{-1} P^\bot f^b \Psi_b \rangle$$
But since
$$\sigma^{ab}(\Einfty) \geq \delta >0$$
and $\sigma^{ab}(\Einfty) \rightarrow 0$ as $\lambda \rightarrow
-\infty$. By "intermediate value Theorem" there exists $\E$ such
that $\E$ is eigenvalue of $R_{\E}=\Einfty -\frac{e^4
\sigma(\E)}{|y|^6}$. Since $\sigma(-\Einfty) \geq \delta >0$. we
obtain that $\sigma(\Einfty)$.

 Hence \eqref{FSE} and \eqref{PHP} imply that the
ground state energy satisfy $\E <  E_0+O(e^{-cR}) $ and therefore we
can take $\lam <  E_0+O(e^{-cR}) $ in  \eqref{FSE}.

 Now we have to show that
 \begin{equation}\label{bound}
\E=\Einfty+O(\frac{1}{R}).
\end{equation}
$$\sigma^{ab}(\lambda)=\sigma^{ab}(\Einfty)+O(|\lambda-\Einfty|)$$
$$\sigma^{ab}(\lambda)=\sigma^{ab}(\Einfty)+O(\frac{1}{R})$$
$$\frac{\sigma^{ab}(\lambda)}{|y|^6}=\frac{\sigma^{ab}}{|y|^6}+O(\frac{1}{R}).$$
$$\frac{\sigma^{ab}(\lambda)}{|y|^6}=\frac{\sigma^{ab}}{|y|^6}+O(\frac{1}{|y|^7}),$$
where $\sigma^{ab}=\sigma^{ab}(\Einfty)$.

Proof of \eqref{bound}.

Another approach: apply Implicit Function Theorem to
$$f(\lambda, R):= \det (F(\lambda,R^{-1})-\lambda)=0,$$
$\lambda$ is eigenvalue of $F(\lambda,R)=\Einfty+(U^{ab}(\lambda))$.
We know that $f(\Einfty,0)=0$.}

%

\paragraph{Proof of the necessity of Property (E).}\label{enecessity}
In this paragraph we will show that if Property (E) fails to hold then
so does the van der Waals - London law.
\DETAILS{When Property (E) fails, we have to define  the interaction
energy as
\begin{equation*}
W(y)=\E-\E_{\min} ,\ \mbox{where}\ \E_{\min}:=\min_{a \in
\mathcal{A}}\inf \sigma(H_a).
\end{equation*}}
%
We denote by $\mathcal{A}^{\min}$ the set of all $a \in \mathcal{A}$ for which 
$\inf \sigma(H_a)=E(\infty)$, where $E(\infty):=\min_{b\in \cA} \inf \sigma(H_b)$, and $\inf \sigma(H_a)$ is an isolated eigenvalue. 
By the HVZ and Zhislin theorems, this set is non-empty. If Property (E) holds if and only if
$\mathcal{A}^{\min}=\mathcal{A}^{at}$. Now assume that Property (E)
fails. 

To prove that the van der Waals London law also fails we use, as
before, the Feshbach map but with $P$,  the orthogonal projection on
$\text{span}\{\Psi_a: a \in \mathcal{A}^{\min}\}.$ 
\DETAILS{Note that the
condition $\inf(H_a)=E_{\min}$ implies that $E_{\min}$ is an
isolated eigenvalue of $H_a$. Indeed, we know that all positive ions
and neutral atoms have isolated ground states. Assume that some $-k$
ion, $k>0$,  does  not have isolated ground state energy. We may
assume without loss of generality that this ion is at $y_1$.
 Then $E_{1,-k}=E_{1,-k+1}$ i.e. we can remove an electron with no
energy cost. On the other hand we may assume that at $y_2$ there is
a negative ion. Then $E_{2,m}>E_{2,m-1}$ i.e. we would gain some
energy by putting an electron into the ion. Thus $E_{1,-k}+E_{2,m}>
E_{1,-k+1}+ E_{2,m-1}$ contradicting the condition
$\inf(H_a)=E_{\min}$. Therefore, every ion has a ground state energy
which, moreover, is exponentially decaying.}
 Proceeding as in the proof of \eqref{P} we can show that
$P = \sum_{a \in \mathcal{A}^{\min}} P_{a, R}.$ 
Furthermore, repeating the arguments of the proof of \eqref{Hbotbnd}
(with $\mathcal{A}^{at}$ replacing $\mathcal{A}^{\min}$) we can
prove that there exists $\gamma>0$ such that 
$H^\bot \geq \E_{\rm min}+2 \gamma, $ 
where $E_{\min}=\min_{a \in \mathcal{A}} \inf \sigma(H_a)$, which implies that the Feshbach Schur method can be used.

 We will now estimate $PHP$. Using the relation $P = \sum_{a \in \mathcal{A}^{\min}} P_{a, R}$ and proceeding as
 in the proof of \eqref{PHP2}, we can show that
 \begin{equation*}
P \H P \doteq \E_{\min} P+\sum_{a \in \mathcal{A}^{\min}} 
\langle \Psi_a, I_a \Psi_a \rangle P_{a, R}.
\end{equation*}
 Since the Property (E) is not satisfied it means that there exists
$a \in \mathcal{A}^{\min}$ with $a \notin \mathcal{A}^{at}$. For
such a decomposition $a=(I_1, \dots, I_M)$, we define the charges
$q_{ai}:=(Z_i-|I_i|)e$. Next, we apply Newton's theorem to show that
\begin{equation*}
\langle \Psi_a, I_a \Psi_a \rangle = \sum_{i\ne
j}\frac{q_{ai}q_{aj}}{|y_{i}-y_{j}|},\ 
,\ \forall a \notin \mathcal{A}^{at}.
\end{equation*}
Clearly, $\sum_{i\ne j}\frac{q_{ai}q_{aj}}{|y_{ai}-y_{aj}|}\asymp \frac{1}{R}$. 
The last three relations imply that
\begin{equation}\label{PHPo}
P \H P \doteq \E_{\min} P+\sum_{a \in
\mathcal{A}^{\min}/\mathcal{A}^{at}} \sum_{i\ne
j}\frac{q_{ai}q_{aj}}{|y_{i}-y_{j}|} P_{a, R}. 
\end{equation}

To estimate $U(\lam)$, we proceeding as in the proof of \eqref{U-aprox1}, 
to obtain that
\DETAILS{\begin{equation*}
U(\lambda) \doteq \sum_{a,b \in \mathcal{A}^{\min}}P_{a, R} I_a  P^\bot (H^\bot-\lambda)^{-1} P^\bot I_b P_{b, R}.
\end{equation*}
In addition proceeding as in 
proving \eqref{IaPsia-est}, 
we can show that
\begin{equation}
\|I_a P_{a, R}\| \lesssim \frac{1}{R}, \forall a \in \mathcal{A},
\end{equation}
The last two relations together with 
$H^\bot \geq \E_{min}+2 \gamma,$ for $\gamma>0$, give}
\begin{equation}\label{Ulambdao}
U(\lambda)=O(\frac{1}{R^2}),\ \quad \forall \lambda \leq
\E_{\min}+\gamma.
\end{equation}
 The relations \eqref{FP}, \eqref{PHPo} and
\eqref{Ulambdao} give give, $\forall \lambda \leq \E_{\min}+\gamma,$ that
\begin{equation*}
F_P(\lambda)=E_{\min}P+\sum_{a \in \mathcal{A}^{\min}/\cA^{at}}\sum_{i\ne j}
\frac{q_{ai}q_{aj}}{|y_{i}-y_{j}|} P_{a, R}+O(\frac{1}{R^2}), 
\end{equation*}
 and therefore, by $\sum_{i\ne j}\frac{q_{ai}q_{aj}}{|y_{i}-y_{j}|}\asymp \frac{1}{R}$ and \eqref{FSE}, the van der Waals law fails. As a consequence
 Property (E) is necessary for van der Waals law to hold.



 \section{Proof of stability bound \eqref{Hbotbnd} assuming Property (E)} \label{Hbotbndseveral}
Let $ E_{a}^{'}$ be the first excited state energy of the hamiltonian $H_a$
and $\Einfty$ as defined in \eqref{Einftydef}.  We define
\begin{equation}\label{L1def}
\gamma_1=\min_{a\in \cA^{\rm at}} E_{a}^{'}-  \Einfty, 
\quad
\gamma_2=\min_{a\in \cA/\cA^{\rm at}} E_a - \Einfty 
\end{equation}
(recall that $\Einfty=E_a$ for $a\in \cA^{\rm at}$). By Property (E) we have that $\gamma_2>0$.
  Finally, we let
 \begin{equation}\label{Ldef}
 \gamma_0:=\min\{\gamma_1,\gamma_2\} >0.
 \end{equation}
 The stability bound \eqref{Hbotbnd} follows immediately from the
 following proposition:
\begin{proposition}\label{Hbotbndproposition}
There exists a constant $C>0$ such that $\H^\bot
\geq E(\infty)+\gamma_{0}- \frac{C}{R}$. In particular,
\eqref{Hbotbnd} holds for $R$ large enough.
\end{proposition}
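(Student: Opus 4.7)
The plan is an IMS-type geometric localization argument that reduces the bound to the spectral properties of the individual cluster Hamiltonians $H_a$. I would first introduce a smooth partition of unity $\{J_a\}_{a\in\cA}$ on the electron configuration space $\R^{3N}$, indexed by all cluster decompositions (atomic and non-atomic), symmetric under the natural $S_N$ action permuting electrons together with their cluster labels, and satisfying $\sum_a J_a^2 = 1$ together with $|\nabla J_a| \lesssim 1/R$. The support of $J_a$ is the set where, for each $i\in A_m$ with $A_m\in a$, the electron $x_i$ lies within distance $\tfrac14 R$, say, of the nucleus $y_m$, while electrons assigned to different clusters are separated on the scale $R$. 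On $\supp J_a$ the intercluster interaction obeys $|I_a|\lesssim 1/R$, directly from the cluster-separation bound.

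The IMS localization formula then gives
\begin{equation*}
H \;=\; \sum_a J_a H_a J_a \;+\; \sum_a J_a I_a J_a \;-\; \sum_a |\nabla J_a|^2 \;\geq\; \sum_a J_a H_a J_a - \frac{C}{R},
\end{equation*}
so for any $\psi\in\Ran P^\bot$,
\begin{equation*}
\langle \psi, H\psi\rangle \;\geq\; \sum_a \langle J_a \psi, H_a J_a \psi\rangle - \frac{C}{R}\|\psi\|^2.
\end{equation*}
For $a\notin \cA^{\rm at}$, use $H_a \geq E_a \geq \Einfty + \gamma_2 \geq \Einfty + \gamma_0$ directly. For $a\in \cA^{\rm at}$, let $P_a = |\Phi_a\rangle\langle\Phi_a|$ project onto the ground state of $H_a$; then $H_a \geq \Einfty P_a + (\Einfty+\gamma_1)P_a^\bot = \Einfty + \gamma_1 P_a^\bot$, which yields
\begin{equation*}
\langle J_a\psi, H_a J_a \psi\rangle \;\geq\; (\Einfty+\gamma_1)\|J_a\psi\|^2 - \gamma_1 \,|\langle\Phi_a, J_a\psi\rangle|^2.
\end{equation*}

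The decisive step is showing $\langle \Phi_a, J_a \psi\rangle$ is exponentially small whenever $\psi\in\Ran P^\bot$. By \eqref{eigenf-differ} one has $\Phi_a \doteq \Psi_a$, while by construction $J_a\equiv 1$ on $\supp \Psi_a$; combined with the exponential localization \eqref{Phiadecay} of $\Phi_a$, this gives $J_a \Phi_a \doteq \Phi_a \doteq \Psi_a$. Hence $\langle \Phi_a, J_a \psi\rangle = \langle J_a \Phi_a,\psi\rangle \doteq \langle \Psi_a, \psi\rangle = 0$, since $\Psi_a\in\Ran P$ and $\psi\perp\Ran P$. Summing over $a$, using $\sum_a \|J_a\psi\|^2 = \|\psi\|^2$ and $\gamma_0 = \min(\gamma_1,\gamma_2)$, and absorbing the exponentially small errors into $C/R$, produces the claimed bound $P^\bot H P^\bot \geq (\Einfty+\gamma_0 - C/R) P^\bot$.

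The principal technical obstacle is the construction of the $S_N$-symmetric, smooth partition of unity with the required support and gradient properties, indexed over all cluster decompositions including the non-atomic ones. Since here the cluster ``centers'' are the fixed nuclei $y_1,\dots,y_M$ rather than dynamical variables (as in standard $N$-body scattering), this is simpler than Ruelle--Simon--Sigal geometric partitions: the $J_a$ can be assembled from symmetric products of smooth cut-offs of the distances $|x_i - y_m|$, with radii comparable to $R$ and overlaps arranged so that $\sum_a J_a^2 = 1$ and $|\nabla J_a|\lesssim 1/R$ hold uniformly. Verifying that on each $\supp J_a$ the intercluster distances are bounded below by a fixed fraction of $R$ (hence $\|I_a J_a\|\lesssim 1/R$) follows directly from the construction.
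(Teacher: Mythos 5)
Your proposal is correct and follows essentially the same route as the paper: IMS localization with a symmetric partition of unity $\{J_a\}$ indexed by all cluster decompositions, absorption of $I_a$ and $|\nabla J_a|^2$ into a $C/R$ error, case split between non-atomic decompositions (where Property (E) gives $H_a \geq \Einfty + \gamma_0$ directly) and atomic ones (where the rank-one contribution of the ground state must be shown harmless on $\Ran P^\bot$). Your handling of the cross term via $\langle\Phi_a, J_a\psi\rangle = \langle J_a\Phi_a, \psi\rangle \doteq \langle\Psi_a,\psi\rangle = 0$ is the quadratic-form version of the paper's operator identity $P^\bot J_a P_a J_a P^\bot \doteq P^\bot P_{a,R} P^\bot = 0$, obtained from the same ingredients ($\Phi_a \doteq \Psi_a$, $J_a\Psi_a = \Psi_a$, $\Psi_a \in \Ran P$).
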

\begin{proof}
 The idea here is to localize the Hamiltonian $H$ to different domains of the configuration space $\R^{3N}$, each of which corresponding to a different break-up of the system into independent subsystems,  and use geometry of these domains to estimate $H$. Since $H$ is a differential operator, we have to pay a prize for this. A neat way way of doing such a localization is by constructing - following \cite{Sig} - a partition of unity $\{J_a\}$, labeled by  decompositions from $\mathcal{A}$ (see Subsection \ref{sec:deco}) and satisfying $\sum_{a \in \mathcal{A}}J_{a}^2=1$ and then using the IMS localization formula 
 (see for example \cite{CFKS})
\begin{equation}\label{IMS}
\H=\sum_{a \in \mathcal{A}} \left( J_{{a}} \H J_{{a}}-|\nabla J_{a}|^2 \right).
\end{equation}

 We will split the configuration space $\mathbb{R}^{3 N}$ into domains. For each
$a=(A_1,...,A_M) \in \mathcal{A}$, we define
\begin{equation}\label{Omegadef}
\Omega_{a}^\nu=\{(x_1,...,x_N):|x_i-y_j| \geq \nu R,\ \forall
j=1,...,M ,\ \forall i \notin A_j \}.
\end{equation}
Following \cite{Sig} we will now construct a partition of unity
$(J_a)_{a \in \mathcal{A}}$ having the properties:
\begin{equation}\label{partunprop1}
0 \leq J_{a} \leq 1, \supp J_{a} \subset \Omega_{a}^{\frac{1}{5}},\
\|\p^\al J_{a}\|_{L^{\infty}} \lesssim R^{-|\al|}, 
\end{equation}
\begin{equation}\label{symmetryJa} 
J_a\ \mbox{commute with all elements of}\ S(a),
\end{equation}
where, recall, $S(a)$ is the group of permutations that keeps the
clusters of $a$ invariant,  for all $a \in \mathcal{A}$, and
\begin{equation}\label{partunprop2}
\sum_{a \in \mathcal{A}}J_{a}^2=1.
\end{equation}
 We consider the functions
$F_{a}=\chi_{\Omega_{a}^{\frac{3}{10}}}*\phi$ where $\phi: \R^{3N}
\rightarrow \R$ is a $C_c^{\infty}$ function supported in
$B_{\frac{1}{10}}(0)$ (ball of radius $\frac{1}{10}$ centered at the
origin) symmetric with respect to particle coordinates, with $\phi
\geq 0$ and $\int_{\R^{3N}} \phi=1$ and $\chi_{A}$ denotes the
characteristic function of the set $A$. Then $F_{a} \in C^{\infty}$
and $F_{a} \geq 0$. Furthermore using the triangle inequality and
the fact that $\phi$ is supported in $B_{\frac{1}{10}}(\vec{0})$ we
obtain that $\supp(F_{a}) \subset \Omega_{a}^{\frac{1}{5}}$ and that
$F_{a}|_{\Omega_{a}^{\frac{2}{5}}}=1$. The last relation together
with the fact that $\cup_{a \in \mathcal{A}}
\Omega_{a}^{\frac{2}{5}}=\R^{3N}$ gives that $\sum_{a \in
\mathcal{A}} F_{a} \geq 1$. Therefore, if we define
\begin{equation*}
J_{a}=\frac{F_{a}}{\sqrt{\sum_{a \in \mathcal{A}} F_{a}^2}},
\end{equation*}
then $J_{a}$ is a partition of unity satisfying all the desired
properties.


Now we use the IMS localization formula 
\eqref{IMS}, together with \eqref{partunprop1}, to obtain
$\H \geq \sum_{a \in \mathcal{A}} J_{{a}} \H J_{{a}}-\frac{C}{R}.$ 
 The last relation together with the secomposition $H =H_a+I_a$ (see \eqref{Hadecomp})
  and $I_a \ge - \sum_{j=1}^M \sum_{i=1, i \notin A_j}^N
\frac{e^2 Z_j}{|x_i-y_j|}\ge -\frac{C}{R}$ on $ \supp J_{a}$ implies
that
\begin{equation}\label{Hbotest'}
\H \geq \sum_{a \in \mathcal{A}} J_{{a}} \H_a J_{{a}}-\frac{C}{R}. \end{equation} 
The last relation together with the fact that $P^\bot$ is self-adjoint and $P^\bot \leq 1$ implies that
\begin{equation}\label{Hbotgenest1}
P^\bot \H P^\bot \geq \sum_{a \in \mathcal{A}} P^\bot J_{{a}}H_{{a}}
J_{{a}} P^\bot-\frac{C}{R}.
\end{equation}
Proposition \ref{Hbotbndproposition} now follows from the equations  \eqref{partunprop2}, \eqref{Hbotgenest1} and
the lemma below which provides estimates - \eqref{PJHJPest} - of each of the terms on the right hand side of \eqref{Hbotgenest1}.
 \end{proof}
\begin{lemma}\label{lem:PJHJPest}
With $\gamma_{0}$ defined in \eqref{Ldef}, we have, $\forall a \in \mathcal{A},$ 
\begin{equation}\label{PJHJPest}  P^\bot J_{{a}}H_{{a}} J_{{a}} P^\bot \geq
(\Einfty+\gamma_{0})P^\bot J_{{a}}^2 P^\bot-\frac{C}{R}. 
\end{equation}
(Note that the restriction to  
$\mathcal{H}_A=\bigwedge_1^N L^2(\mathbb{R}^3) $, as  any restriction, only lifts the lower bound.)
\end{lemma}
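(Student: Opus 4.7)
}

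The plan is to split into the two cases defining $\gamma_0$: the atomic case $a\in \cA^{\rm at}$, where we gain from the spectral gap $\gamma_1$ above the ground state of $H_a$, and the non-atomic case $a\in \cA\setminus \cA^{\rm at}$, where we gain directly from $\gamma_2=\min_{a\notin \cA^{\rm at}}(E_a-\Einfty)$. In the second case the estimate is immediate: since $H_a\ge E_a\ge \Einfty+\gamma_2\ge \Einfty+\gamma_0$, we have $J_a H_a J_a\ge (\Einfty+\gamma_0)J_a^2$, and sandwiching with $P^\bot$ gives \eqref{PJHJPest} (with no $C/R$ error needed for this case).

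For $a\in \cA^{\rm at}$, the first step is to exploit the gap. In the highest-spin sector $\cH_A$ considered here, Condition (D) together with the Perron--Frobenius argument (as used in the proof of \eqref{dim-gr-state-sym}) makes the ground state of $H_a$ on $\cH_A$ one-dimensional, so the orthogonal projection $P_a=P_{\Phi_a}$ onto $\Phi_a$ commutes with $H_a$, and the spectral theorem yields
\begin{equation*}
H_a \;\ge\; \Einfty P_a + (\Einfty+\gamma_1)P_a^\bot.
\end{equation*}
Conjugating by $J_a$ (which is self-adjoint) gives
\begin{equation*}
J_a H_a J_a \;\ge\; \Einfty J_a^2 + \gamma_1\, J_a P_a^\bot J_a,
\end{equation*}
so it remains to prove that $P^\bot J_a P_a^\bot J_a P^\bot \doteq P^\bot J_a^2 P^\bot$, i.e.\ that $P^\bot J_a P_a J_a P^\bot$ is exponentially small.

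The second (and key) step is the following observation. Writing $P^\bot J_a P_a J_a P^\bot = (P_a J_a P^\bot)^*(P_a J_a P^\bot)$, it suffices to bound $\|P_a J_a P^\bot\|$. Since $P_a=|\Phi_a\rangle\langle\Phi_a|$, one has $P_a J_a P^\bot = |\Phi_a\rangle\langle J_a\Phi_a|P^\bot$. From the exponential localization \eqref{Phiadecay} of $\Phi_a$ around $y_a$ and the support property $J_a=1$ on a ball of radius $\asymp R$ around $y_a$ (implicit in \eqref{partunprop1} and the construction of $J_a$ from $\Omega_a^{2/5}$), we obtain $J_a\Phi_a\doteq \Phi_a$ in $L^2$. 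Combined with $\Phi_a\doteq \Psi_a$ from \eqref{eigenf-differ} and $\Psi_a\in \Ran P$, this gives $\langle J_a\Phi_a|P^\bot = \langle J_a\Phi_a-\Psi_a|P^\bot$, which is exponentially small in norm. Thus $\|P_a J_a P^\bot\|\doteq 0$, and therefore $P^\bot J_a P_a^\bot J_a P^\bot\ge P^\bot J_a^2 P^\bot - C e^{-\theta R/6}$. Inserting this into the spectral bound gives
\begin{equation*}
P^\bot J_a H_a J_a P^\bot \;\ge\; (\Einfty+\gamma_1)P^\bot J_a^2 P^\bot - C e^{-\theta R/6} \;\ge\; (\Einfty+\gamma_0)P^\bot J_a^2 P^\bot - \frac{C}{R},
\end{equation*}
which is \eqref{PJHJPest}.

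The main obstacle is the atomic case, and within it, the control of the cross term $P^\bot J_a P_a J_a P^\bot$: one must recognize that the ground state $\Phi_a$ of $H_a$ (which lives on all of $\R^{3N}$ with exponential tails) is, up to $\doteq$, captured both by the localizer $J_a$ (because its tails reach outside $\supp J_a$ only exponentially slowly) and by the projection $P$ (through the cut-off substitute $\Psi_a$). Both of these substitutions produce only $e^{-\theta R/6}$ errors, but both are needed simultaneously, and Condition (D) / Perron--Frobenius is essential to ensure $P_a$ is a rank-one projection so that the telescoping argument above closes.
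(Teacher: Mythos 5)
Your proof is correct and follows essentially the same route as the paper's. Both cases are handled identically: the non-atomic case directly from Property (E), and the atomic case from the spectral gap bound $H_a\ge \Einfty P_a+(\Einfty+\gamma_1)P_a^\bot$ followed by showing that the cross term $P^\bot J_a P_a J_a P^\bot$ is exponentially small. The only cosmetic difference is in that last step: you bound $\|P_a J_a P^\bot\|$ by writing $P_a J_a P^\bot=|\Phi_a\rangle\langle J_a\Phi_a|P^\bot$ and using $J_a\Phi_a\doteq \Phi_a\doteq \Psi_a\in\Ran P$; the paper instead replaces $P_a$ by $P_{a,R}=P_{\Psi_a}$ via $\|P_a-P_{a,R}\|\doteq 0$, uses $J_a P_{a,R}=P_{a,R}$ exactly, and then $P^\bot P_{a,R}=0$. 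These are equivalent.

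One small conceptual slip worth flagging: you invoke Condition (D) and Perron--Frobenius to argue that ``the ground state of $H_a$ on $\cH_A$ is one-dimensional'' and identify it with $P_{\Phi_a}$. But $\Phi_a$ is a product state, not antisymmetric, so $P_{\Phi_a}$ does not project onto a vector in $\cH_A$, nor does it leave $\cH_A$ invariant. The parenthetical remark in the statement of the lemma is the way out: the inequality \eqref{PJHJPest} is proven as an operator inequality on all of $L^2(\R^{3N})$, where $\Phi_a$ \emph{is} the unique ground state of $H_a$ by ordinary Perron--Frobenius (no statistics, no Condition (D) needed), and the restriction to $\cH_A$ can only lift the lower bound. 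Your argument in fact does run correctly on $L^2(\R^{3N})$; the reference to $\cH_A$ and Condition (D) is just a misattribution of where the non-degeneracy comes from, and neither is required here.
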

\begin{proof}
We will consider the following two cases
\newline
\newline
\textbf{Case (1)}: $a \in \mathcal{A}/\mathcal{A}^{at}$ (i.e. some
clusters of the decomposition $a$ include ions). 
In this case,
\DETAILS{ we define $n_j:=Z_j-|A_j|$. Since by definition
$E_{j,n_j}$ is the infimum of the spectrum of the hamiltonian
$H_{A_j}$,  defined in \eqref{Ha1}, we have that $H_{A_j} \geq
E_{j,n_j}.$ The last inequality together with \eqref{L2def},
\eqref{Ldef} and \eqref{Ha} gives that
\begin{equation*}
H_{a} \geq E_{1,n_1}+...+E_{M,n_M}\geq(\Einfty+\gamma_{0})
\end{equation*}}
Condition (E) implies that $H_a\ge E(\infty)+\g_0$ and, as a consequence,
\begin{equation}\label{Hbotgencase1}
P^\bot J_{a} H_{a} J_{a} P^\bot \geq (\Einfty+\gamma_{1}) P^\bot
J_{a}^2 P^\bot.
\end{equation}

\noindent\textbf{Case (2)}: $a \in \mathcal{A}^{at}$ (i.e. $a$ is a decomposition of the system to $M$ (neutral) atoms).
 We  recall the notation  $P_{a, R}=P_{\Psi_a}$ and   $P_a=P_{\Phi_a}$. 
Using the  relation $H_a P_a^\bot  \ge (\Einfty+ \gamma_0) P_a^\bot$, we find
\begin{align}\label{IMS5'}
 J_a H_a J_a  &\geq  J_a [(\Einfty+ \gamma_2) P_a^\bot + \Einfty P_a]  J_a  \notag \\ &=  (\Einfty+ \gamma_2)  J_a^2 
  - \gamma_2  J_a P_a  J_a .
\end{align}
  By \eqref{eigenf-differ}, \eqref{Omegadef} and \eqref{partunprop1} and the support properties of $\Psi_b$, we have that
\begin{equation}\label{Janeqb}
\| P_{a}-  P_{a, R}\| \doteq 0\ \quad \mbox{and}\ \quad   J_a P_{b, R} =P_{b, R}   J_a = \del_{a, b}   P_{a, R}. 
\end{equation}
Therefore, by  \eqref{P},   we have $P^\bot  J_a P_a  J_a P^\bot \doteq P^\bot  P_{a, R} P^\bot  \doteq 0$. 
\DETAILS{
$P^\bot J_a H_a J_a P^\bot= P_{\Psi_a}^\bot J_a H_a J_a P_{\Psi_a}^\bot. $ 
By \eqref{Janeqb}, \eqref{partunprop1} and \eqref{partunprop2} we have that 
$J_a=1$ on the support of $\Psi_a$ and therefore $J_a, P_a^\bot$ commute. This, together with \eqref{IMS3}, implies that
\begin{equation}\label{IMS4}
P^\bot J_a H_a J_a P^\bot= J_a P_{\Psi_a}^\bot H_a
 P_{\Psi_a}^\bot J_a.
\end{equation}
The last equation together with}
%
Hence 
 \begin{equation}\label{IMS5}
P^\bot J_a H_a J_a P^\bot \dot\geq (\Einfty+ \gamma_2)  J_a^2,
\end{equation}
 which, together with 
 the previous case, implies \eqref{PJHJPest} for all $a \in \mathcal{A}^{at}$.
 \end{proof}

\section{Proof of Theorem \ref{thm:vdW-sig}}\label{sec:proofvdWThm-s}

\paragraph{The general set-up in the case of statistics.}\label{sec:setupstat} 
\DETAILS{We begin with the general setup for the proof of Theorem \ref{thm:vdW-sig}. 
   Let $S_N$ denote the permutation group of $\{1,2,...,N\}$. We
consider the representation $T: S_N \rightarrow
U(L^2(\mathbb{R}^{3N}))$, where $U(L^2(\mathbb{R}^{3N})),$ denotes
the unitary operators on $L^2(\mathbb{R}^{3N})$ defined by the
equation
\begin{equation}\label{Tdef}
(T_\pi \psi)(x_1,...,x_N):=\psi(x_{\pi^{-1}(1)},
x_{\pi^{-1}(2)},...,x_{\pi^{-1}(N)}).
\end{equation}
We label by $\sigma$ the types of the irreducible representations of
$T$. Let $\mathcal{H}^{\sigma}$ be the subspace of
$L^2(\mathbb{R}^{3N})$ on which the representation $T$ is a multiple
to the irreducible representation of type $\sigma$. Then
\begin{equation}\label{l2decomposition}
L^2(\mathbb{R}^{3N})=\oplus_\sigma \mathcal{H}^{\sigma}.
\end{equation}}
\DETAILS{Recall the definitions concerning irreducible  representations of the permutation group $S_N$ of $\{1,2,...,N\}$, 
given in  the introduction.}

Recall from the introduction that  $\mathcal{H}^{\sigma}$ denotes the subspace of
$L^2(\mathbb{R}^{3N})$ on which the representation  of  the permutation group $S_N$  is a multiple
to the irreducible representation of type $\sigma$, $H^\sigma$,  the restriction  of $H$ onto $\mathcal{H}^{\sigma}$ and $E^\s(y)= \inf \sigma(H^\s)$, the ground state energy of the system for the irreducible
representation $\s$.

Let $\chi_g^{\sigma}$ denote the character of the representation of type $\sigma$ evaluated at $g$. 
The orthogonal projection of $L^2(\mathbb{R}^{3N})$ onto
$\mathcal{H}^{\sigma}$  is given by (see \cite{Ha})
\begin{equation}\label{Psigma}
Q^\sigma=\frac{\chi_{id}^{\sigma}}{N!} \sum_{\pi \in S_N}
\chi_{\pi^{-1}}^{\sigma} T_\pi.
\end{equation}
Since $H$ commutes with any $T_\pi$, by
\eqref{Psigma}, so it does with $Q^{\sigma}$,
\begin{equation}\label{PsigmaHcom}
H Q^{\sigma}=Q^{\sigma} H.
\end{equation}

\DETAILS{Hence $\mathcal{H}^{\sigma}$ is an invariant space of $H$. Let
$H^\sigma$ be the restriction of $H$ onto $\mathcal{H}^{\sigma}$. We
now denote the ground state energy of the system for the irreducible
representation $\s$ by
\begin{equation*}
E^\s(y)= \inf \sigma(H^\s).
\end{equation*}
In what follows we fix an irreducible representation 
of the group $S_N$, which we denote by $\sigma \equiv \s (S_N)$.

We repeat some definitions from the introduction. Let $S(a)$ be the subgroup of the permutation group $S_N$
that keeps the clusters of the decomposition $a$ invariant. {\bf The space of $\sigma$ is of course invariant under $T_{\pi},\ \pi \in S(a)$, but  the restriction $\s |_{S(a)}$ of the representation $\sigma$ of $S_N$ is not necessarily irreducible. Denote by $I^\s$ the
 family of irreducible representations $\al\equiv \al (S(a))$ of $S(a)$ such that
\begin{equation*}
\sigma|_{S(a)}=\oplus_{\alpha \in I^\s} \alpha,
\end{equation*}
in the sense of the corresponding subspaces.}
 The representations $\alpha \in I^\s$ are called induced
representations and we write $\alpha \prec \sigma$, so that $ I^\s=\{\al\equiv \al (S(a)) : \alpha \prec \sigma\}$. 

In what follows we denote irreducible representations of the group $S(a)$ by $\al, \beta, \g$. The corresponding decomposition $a$ will always be clear from the context.}

We also recall from  the introduction the definition of the subgroups $S(a)$ of the permutation group $S_N$, which leave the decompositions $a$ invariant, with its  irreducible  representations  denoted in what follows by $\al, \beta, \g$ (the corresponding decomposition $a$ will always be clear from the context), the notion of the  irreducible representation  
and the notation $\alpha \prec \sigma$ specifying that  an irreducible representation $\alpha$ of $S(a)$ is  induced  by an irreducible  representation $\sigma$ of $S_N$.
\DETAILS{ and the corresponding decomposition of the Hilbert space $\cH^\sigma$, as $\cH^\sigma=\oplus_{\alpha  \prec \sigma} \cH_a^\alpha$ (which we denote also as $\sigma|_{S(a)}=\oplus_{\alpha \prec \s} \alpha$, see \eqref{sig-deco}). }
Now, we denote by $Q_a^{\alpha}$ the orthogonal projection onto the
subspace of $L^2(\R^{3N})$ on which the representation of $S(a)$ is
a multiple to the irreducible representation of the type $\al$. By \cite{Ha}, it can be written as
\begin{equation}\label{Pbet}
Q_a^{\alpha} =\frac{\chi_{id}^\alpha}{\#S(a)}\sum_{g \in S(a)}
\chi_{g^{-1}}^{\alpha} T_g,\end{equation} where $\#S(a)$ is the
cardinality of $S(a)$. By the 
the definition of the induced representations, 
\begin{equation}\label{Qsig-deco} Q^\s= \sum_{\al \prec \sigma}Q_a^\al Q^\s=  Q^\s\sum_{\al \prec \sigma}Q_a^\al.\end{equation}
Since $H_a$ commutes with all permutations in $S(a)$, it commutes with $Q_a^{\alpha}$.   
 Recall that the restriction of $H_a$ onto $ 
  \Ran Q_a^{\alpha}$ is denoted by $H_a^{\alpha}$.
     In what follows we fix an irreducible representation,  $\sigma \equiv \s (S_N)$, 
of $S_N$. 
 \DETAILS{   We denote by $H_a^{\alpha}$ the restriction of $H_a$ onto $ \Ran Q_a^{\alpha}$.
 We write $\alpha \prec \prec \sigma$ if $\alpha \prec \sigma$ and
$\inf\s(H_a^{\alpha} )= \min_{b \in \mathcal{A}^{at},\beta \prec
\sigma}\inf\s(H_b^\beta).$ Let
\begin{equation}\label{Esinftydef}
E^{\s}(\infty):=\min_{a \in \mathcal{A}^{at}, \al\prec \prec \s}
\inf \sigma(H_{a}^{\alpha}).
\end{equation}}

We now choose the orthogonal projection for the Feshbach map. 
For given  $a \in \mathcal{A}^{\rm at}$ and $\alpha \prec \sigma$,  let
$P_a^\al$ be the orthogonal projection onto the ground state eigenspace of $H_a^\al$. 
 Let $\chi_R: \mathbb{R}^3 \rightarrow
\mathbb{R}$ be a spherically symmetric smoothed characteristic
function of the ball of radius $\frac{R}{8}$ around the nucleus and
supported in this ball. 
 Let \begin{align}\label{PaR} P_{a,R}^{\alpha}\   \mbox{be the orthogonal projections onto 
the vector spaces}\  \prod_{A\in a}\chi_R^{\otimes |A|}  \Ran P_{a}^{\alpha}.\end{align}  
By the properties of the cut-offs and decompositions, we see that 
$\Ran P_{a, R}^\alpha,  a \in \mathcal{A}^{\rm at}, \alpha \prec  \sigma,$ consist of functions of mutually disjoint supports. 
 Hence   
the projections $P_{a, R}^\alpha,  a \in \mathcal{A}^{\rm at}, \alpha \prec \prec  \sigma,$ are mutually orthogonal. Hence we can define the orthogonal projection 
\begin{align}\label{Psig} 
P^\s =\sum_{a \in \mathcal{A}^{\rm at}, \alpha \prec \prec \sigma} P_{a, R}^\alpha =\sum_{a \in \mathcal{A}^{\rm at}} P_{a, R}.
\end{align}
%
\DETAILS{\begin{equation}\label{abgeschittenesraum}
P^\s=\ \mbox{the orthogonal projection onto the subspace}\ \text{span} \{\Ran P_{a,R}^{\alpha} | a \in \mathcal{A}^{at}, \alpha \prec \prec \sigma\}.
\end{equation}  where the latter symbol is introduced in Introduction,}
  where $P_{a, R} :=\sum_{\alpha \prec \prec \sigma} P_{a, R}^\alpha$. Properties of $P^\sigma$  are described in two lemmas below. 
\begin{lemma}\label{lem:QcommT}
$P^\sigma$ commutes with $T_{\pi}$, for any $\pi \in S_N$, and
therefore with  $Q^{\sigma}$.
\end{lemma}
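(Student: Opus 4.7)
The plan is to establish the stronger statement that $T_\pi P^\sigma = P^\sigma T_\pi$ for every $\pi\in S_N$; commutation with $Q^\sigma$ then follows immediately from formula \eqref{Psigma}, which exhibits $Q^\sigma$ as a finite linear combination of the $T_\pi$'s.

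The key observation is that conjugation by $T_\pi$ permutes the summands in the decomposition \eqref{Psig} of $P^\sigma$. Concretely, one tracks the behavior of each ingredient in the definition of $P^\alpha_{a,R}$ under $T_\pi(\cdot)T_\pi^{-1}$. First, $T_\pi H_a T_\pi^{-1}=H_{\pi^{-1}a}$, because conjugation replaces each variable $x_i$ appearing in $H_a$ by $x_{\pi^{-1}(i)}$, so that the cluster $A_k$ assigned to the $k$-th nucleus in $a$ becomes $\pi^{-1}(A_k)$. Second, from the relation $T_\pi T_g T_\pi^{-1}=T_{\pi^{-1} g\pi}$ (in the convention \eqref{Tpi}), the group $\pi^{-1}S(a)\pi$ equals $S(\pi^{-1}a)$, and formula \eqref{Pbet} yields
\begin{equation*}
T_\pi Q_a^\alpha T_\pi^{-1}=Q_{\pi^{-1}a}^{\alpha'},\qquad \alpha'(h):=\alpha(\pi h\pi^{-1}) \text{ for } h\in S(\pi^{-1}a),
\end{equation*}
with $\alpha'$ again an irreducible representation of $S(\pi^{-1}a)$. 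Third, $T_\pi$ conjugates multiplication by $\prod_{A\in a}\chi_R^{\otimes|A|}$ into multiplication by $\prod_{A\in\pi^{-1}a}\chi_R^{\otimes|A|}$, since this cut-off depends only on the differences $x_i-y_k$ for $i\in A_k$. Combining these three transformations with the construction \eqref{PaR} yields $T_\pi P^\alpha_{a,R}T_\pi^{-1}=P^{\alpha'}_{\pi^{-1}a,R}$.

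Next I would verify that the map $(a,\alpha)\mapsto(\pi^{-1}a,\alpha')$ is a bijection of the index set $\{(a,\alpha):a\in\mathcal{A}^{\rm at},\ \alpha\prec\prec\sigma\}$ onto itself. Preservation of $\mathcal{A}^{\rm at}$ is immediate from $|\pi^{-1}(A_k)|=|A_k|=Z_k$. The equivalence $\alpha\prec\sigma\Leftrightarrow\alpha'\prec\sigma$ is obtained by conjugating the non-vanishing product $Q^\sigma Q_a^\alpha$ by $T_\pi$ and using that $Q^\sigma$ itself commutes with every $T_\pi$ (a direct consequence of its definition \eqref{Psigma} as a central element of the group algebra of $S_N$). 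The refinement to $\prec\prec$ follows because $T_\pi$ implements a unitary equivalence $H_a^\alpha\simeq H_{\pi^{-1}a}^{\alpha'}$, so these operators share the same infimum of the spectrum. Summing $T_\pi P^\alpha_{a,R}T_\pi^{-1}=P^{\alpha'}_{\pi^{-1}a,R}$ over $(a,\alpha)$ then rearranges \eqref{Psig} into itself, giving $T_\pi P^\sigma T_\pi^{-1}=P^\sigma$; commutation with $Q^\sigma$ follows by linearity from \eqref{Psigma}.

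The anticipated mild obstacle is purely bookkeeping: one must carefully fix whether $\pi\mapsto T_\pi$ in \eqref{Tpi} is a homomorphism or anti-homomorphism, and correspondingly whether $\pi$ acts on decompositions as $a\mapsto\pi a$ or $a\mapsto\pi^{-1}a$. Once this convention is pinned down, the various $\pi$'s and $\pi^{-1}$'s across $H_a$, $Q_a^\alpha$, the ground state projection, and the cut-off align consistently, and the proof reduces to collecting and summing the identities above.
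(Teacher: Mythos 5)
Your proposal is correct and follows essentially the same route as the paper's own proof: both observe that conjugation by $T_\pi$ sends $H_a$ to $H_{\pi a}$ (up to the bookkeeping sign you note), hence sends $P^\alpha_{a,R}$ to $P^{\alpha'}_{\pi a,R}$, and then sum over the decomposition \eqref{Psig}. You have merely spelled out the intermediate steps (conjugation of $Q_a^\alpha$ and of the cut-off, the bijection of the index set, and the preservation of the $\prec\prec$ condition) that the paper leaves implicit in its two-line argument.
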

\begin{proof}
 To show that $T_{\pi}$ commutes with $P^\sigma$, 
 we observe that
\begin{equation*}
 T_{\pi} H_a T_{\pi^{-1}}=H_{\pi a},\ \quad \mbox{and therefore}\ \quad
 T_{\pi} P^\al_{a, R} T_{\pi^{-1}}=P^\al_{\pi a, R},
\end{equation*}
 which, together with \eqref{Psig},  gives the desired statement. 
 \DETAILS{if $\Psi \in \Ran P^\sigma$, then $T_{\pi} \Psi \in \Ran P^\s$. To show
the latter it suffices to take $\Psi$ satisfying
 \begin{equation*}
H_a \Psi=E^{\sigma}(\infty) \Psi.
\end{equation*}
Therefore, $(T_{\pi} H_a T_{\pi^{-1}}) T_{\pi} \Psi=E^{\sigma}(\infty) T_{\pi}\Psi.$ In addition we have that
\begin{equation*}
 T_{\pi} H_a T_{\pi^{-1}}=H_{\pi a},
\end{equation*}
  where the clusters of $b=\pi a$ were defined in
  \eqref{decopermu1111}. The last two equations imply that
\begin{equation*}
H_{\pi a} T_{\pi} \Psi =E^{\sigma}(\infty) T_{\pi} \Psi.
\end{equation*}
As a consequence, $T_{\pi} \Psi \in \Ran P^\s$.}
\end{proof}
\begin{lemma}\label{lem:RanPsigQsig}  
For each $\alpha \prec \prec \sigma$, any normalized $\Psi\in \Ran P_{a,R}^{\alpha}$ and for $R$ is large enough,  we have the estimate
\begin{equation}\label{Psigmaphiaalphanorm}
\|Q^\sigma \Psi\|^2 = \frac{\chi_{id}^{\sigma}}{N!} \frac{\#
S(a)}{\chi_{id}^{\alpha}},
\end{equation}
 and therefore $\Ran P^\s \cap \Ran Q^\s \neq \{0\}$.
\end{lemma}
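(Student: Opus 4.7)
\medskip

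\textbf{Proof plan.} The plan is to exploit that $Q^\sigma$ is an orthogonal projection, so $\|Q^\sigma \Psi\|^2 = \langle \Psi, Q^\sigma \Psi\rangle$, and then to use the explicit formula \eqref{Psigma} to write
\[
\|Q^\sigma \Psi\|^2 = \frac{\chi_{id}^{\sigma}}{N!} \sum_{\pi \in S_N} \chi_{\pi^{-1}}^{\sigma} \langle \Psi, T_\pi \Psi\rangle.
\]
The key observation is that only $\pi \in S(a)$ contribute. Indeed, by \eqref{PaR}, $\Psi$ is supported in the set where $|x_i - y_m| \le R/8$ for each $m$ and each $i \in A_m$, while $T_\pi \Psi$ is supported in the analogous region determined by the decomposition $\pi a$. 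If $\pi \notin S(a)$, then $\pi a \ne a$, so there is some index $i$ for which the two assignments $y_{a,i}$ and $y_{\pi a, i}$ are distinct nuclei; by definition of $R$ in \eqref{Rdef} they are at distance at least $R$, which exceeds $R/4$, so the supports are disjoint and $\langle \Psi, T_\pi \Psi\rangle = 0$.

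Next I would handle the remaining sum over $\pi \in S(a)$. Since the cutoff $\prod_A \chi_R^{\otimes |A|}$ is symmetric under $S(a)$, the operators $T_\pi,\ \pi\in S(a),$ preserve $\Ran P_{a,R}^\alpha$, and Condition (D) ensures this space carries exactly one copy of the irreducible representation $\alpha$. Choosing an orthonormal basis $\Psi_1,\dots,\Psi_{\chi_{id}^\alpha}$ of $\Ran P_{a,R}^\alpha$ realizing $\alpha$, with matrix elements $D^\alpha(\pi)_{ij} = \langle \Psi_i, T_\pi \Psi_j\rangle$, and writing $\Psi = \sum_i c_i \Psi_i$ with $\sum_i |c_i|^2 = 1$, the Schur orthogonality relation
\[
\sum_{\pi \in S(a)} \chi^\sigma_{\pi^{-1}}\, D^\alpha(\pi)_{ij} = \frac{\# S(a)}{\chi_{id}^\alpha}\,\delta_{ij}
\]
(valid because $\alpha \prec \sigma$ occurs in $\sigma|_{S(a)}$ with multiplicity one in the present convention) yields
\[
\sum_{\pi \in S(a)} \chi^\sigma_{\pi^{-1}} \langle \Psi, T_\pi \Psi\rangle = \frac{\# S(a)}{\chi_{id}^\alpha}\sum_i |c_i|^2 = \frac{\# S(a)}{\chi_{id}^\alpha},
\]
which combined with the previous display gives \eqref{Psigmaphiaalphanorm}.

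Finally, for the non-triviality of the intersection, since the right-hand side of \eqref{Psigmaphiaalphanorm} is strictly positive, $Q^\sigma \Psi \ne 0$. By Lemma \ref{lem:QcommT}, $Q^\sigma$ and $P^\sigma$ commute, and because $\Psi \in \Ran P_{a,R}^\alpha \subset \Ran P^\sigma$ we have $P^\sigma \Psi = \Psi$, so
\[
Q^\sigma \Psi = Q^\sigma P^\sigma \Psi = P^\sigma Q^\sigma \Psi \in \Ran P^\sigma \cap \Ran Q^\sigma,
\]
which is therefore non-zero.

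\medskip

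The main obstacle I anticipate is a careful justification of the two ingredients that make the sum collapse: the disjointness of supports of $\Psi$ and $T_\pi \Psi$ for $\pi \notin S(a)$ (which requires inspecting how $\chi_R^{\otimes|A|}$ interacts with the permutation action of $\pi$), and the application of Schur orthogonality with the correct multiplicity bookkeeping for $\alpha$ inside $\sigma|_{S(a)}$, which is where Condition (D) enters.
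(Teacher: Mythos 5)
Your proof is correct and follows the same two-step strategy as the paper: (i) use the cut-off supports to show $\langle\Psi,T_\pi\Psi\rangle=0$ for $\pi\notin S(a)$, so the sum in \eqref{Psigma} collapses to $\pi\in S(a)$; (ii) evaluate the remaining character-weighted sum via orthogonality for irreducible representations of $S(a)$. Step (i) is identical in both (the paper merely asserts the disjointness; your support computation, with $R/8+R/8<R$, is exactly the justification intended). For step (ii) the routes differ in packaging but not substance: the paper uses the character restriction $\chi^\sigma_{\pi^{-1}}=\sum_{\beta\prec\sigma}\chi^\beta_{\pi^{-1}}$ together with the projection formula \eqref{Pbet} and the observation $Q_a^\beta\Psi=\delta_{\alpha\beta}\Psi$, while you choose an explicit orthonormal basis of $\Ran P_{a,R}^\alpha$ and apply the Schur (grand) orthogonality relations to the matrix coefficients $D^\alpha(\pi)_{ij}$; these are equivalent manipulations. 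One small difference worth flagging: you invoke Condition (D) to ensure $\Ran P_{a,R}^\alpha$ carries exactly one copy of $\alpha$, but that hypothesis is not needed here — the paper's version only uses that $\Psi$ lies in the $\alpha$-isotypic subspace for $S(a)$ (i.e.\ $Q_a^\beta\Psi=\delta_{\alpha\beta}\Psi$), which holds by construction of $P_{a,R}^\alpha$ regardless of the multiplicity of $\alpha$ inside $\Ran P_{a,R}^\alpha$. The multiplicity of $\alpha$ inside $\sigma|_{S(a)}$, which you correctly note is the delicate point for the Schur-orthogonality constant, is a separate matter and is treated the same (implicitly) in both proofs.
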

\begin{proof}
Due to the cut off of the ground states of the atoms, we have that
$\langle \Psi, T_{\pi} \Psi \rangle = 0, \forall \pi \in S_N/S(a).$
Therefore, using the expression for $Q^\sigma$, we have that
\begin{align}\label{Psigmaphi1}
\|Q^\sigma \Psi\|^2 = \langle \Psi, Q^\sigma \Psi \rangle &=
\frac{\chi_{id}^{\sigma}}{N!} \sum_{\pi \in S_N}
\chi_{\pi^{-1}}^{\sigma} \langle  \Psi, T_{\pi} \Psi \rangle\notag \\ & =
\frac{\chi_{id}^{\sigma}}{N!} \sum_{\pi \in S(a)}
\chi_{\pi^{-1}}^{\sigma} \langle  \Psi, T_{\pi} \Psi \rangle.
\end{align}
  Using $\s|_{S(a)}=\oplus_{\al \prec \sigma} \al$ and  taking the trace of $T_{\pi^{-1}}^\s=\oplus_{\al
\prec \sigma} T_{\pi^{-1}}^\al,\ \forall \pi\in S(a)$,  we obtain
\begin{equation}\label{char-rel}
\chi_{\pi^{-1}}^\sigma = \sum_{\beta \prec \sigma}
\chi_{\pi^{-1}}^{\beta},\ \quad \forall \pi \in S(a).
\end{equation}
This, together with the expression \eqref{Pbet} for the projection
$Q^\beta_{S(a)}$ and \eqref{Psigmaphi1} gives that 
\begin{equation*}
\|Q^\sigma \Psi\|^2 = \frac{\chi_{id}^{\sigma}}{N!}\langle \Psi,
\sum_{\beta \prec \sigma} \frac{\# S(a)}{\chi_{id}^{\beta}} Q^\beta
\Psi \rangle.
\end{equation*}
The last equation together with the fact that  $Q_a^\beta
\Psi=\delta_{\alpha,\beta} \Psi$ gives \eqref{Psigmaphiaalphanorm}.
\end{proof}
By Lemmas \ref{lem:QcommT} and \ref{lem:RanPsigQsig}, the
operator
\begin{equation}\label{Pstat}
\Pi^\s:=P^\s Q^\s=Q^\s P^\s,
\end{equation}
 is the orthogonal projection on $\Ran
P^\s \cap \Ran Q^\s\ne \varnothing$ and satisfies
\begin{equation}\label{PsigmaPgeneral}
 \Pi^\s Q^\s=Q^\s \Pi^\s=\Pi^\s.
\end{equation}
 We now consider the Feshbach map,  $F_{\Pi^\s}(\lambda)$,  where $\Pi^\s$ is defined in \eqref{Pstat}, as defined in \eqref{FP}  - \eqref{U}, i.e.
 \begin{equation}\label{FPsigma}
F_{\Pi^\s}(\lambda)= \big[ \Pi^\s H \Pi^\s- \Pi^\s H
\Pi^{\s\bot}(\Pi^{\s \bot} H \Pi^{\s \bot} -\lambda)^{-1}
\Pi^{\s\bot} H \Pi^\s \big]|_{\Ran \Pi^\s},
\end{equation}
 with, as above,  $\Pi^{\s\bot}=Q^\s (1-\Pi^\s) $.
 Since $H$ and $P^\s$ commute with $Q^\s$, we have that
\begin{equation}\label{FQsigma}
F_{\Pi^\s}(\lambda)= \left[ P^\s H^{\sigma}
P^\s-U^\s(\lambda)\right]\big|_{\Ran \Pi^\s},
\end{equation}
where
\begin{equation}\label{Usigmaldef}
U^\s(\lambda)=P^\s H^\s P^{\s\bot} (H^{\sigma \bot}-\lambda)^{-1}
P^{\s\bot} H^\s P^\s,
\end{equation}
 with $P^{\s\bot}=1-P^\s$ and $H^{\s \bot}=P^{\s\bot} H^\s P^{\s\bot}$.
\DETAILS{\begin{equation}\label{FPsigma} F_P(\lambda)= \left[ P H
P-U(\lambda)\right]\big|_{\Ran P},
\end{equation}
with
\begin{equation}\label{UsigmaP}
U(\lambda)=P H P^\bot(H^{ \bot}-\lambda)^{-1} P^\bot H P,
\end{equation}
 and $P^\bot=\one-P$ and $H^{ \bot}=P^\bot H P^\bot$.}
  As before (see \eqref{FSE}) we have
\begin{equation}\label{FSE-sig}
\lambda \text{ eigenvalue of } \H^{\sigma} \iff \lambda\ \text{
eigenvalue of }\ F_{\Pi^\s}(\lambda)
\end{equation}
and to prove Theorem \ref{thm:vdW-sig} we have to estimate the
different terms on the r.h.s. of \eqref{FQsigma}.

 Finally, we note that any irreducible  representation $\alpha$ of $S(a)$ is a product of irreducible representations $\alpha_j$  of the groups $S(A_j)$ of permutations of the clusters $A_j\in a$.  We write symbolically,  $\alpha=\otimes_{j=1}^M \alpha_j$. The corresponding factorization of $Q_a^\alpha$ is given in (see Appendix \ref{sec:sym-gen-app})
\begin{equation}\label{Qaalpha-factor}
Q_a^\alpha=\prod_{j=1}^M Q_{A_j}^{\alpha_j}.
\end{equation}
Keeping in mind that  $\alpha=\otimes_{j=1}^M \alpha_j$ determines uniquely $\al_j$, we write in what follows $Q_{A}^\al$ for $Q_{A}^{\al_j}$.  Similarly, we denote by  $P_{A}^\al$ and $P_{A, R}^\al$  the orthogonal projections onto the ground state eigenspace and  the cut-off ground state eigenspace  of an atom $A$.  The relation  \eqref{Qaalpha-factor}   implies
\begin{equation}\label{Paal-factor} P_a^\al= \otimes_{A\in a} P_{A}^\al\ \quad \mbox{and}\  \quad  P_{a, R}^\al= \otimes_{A\in a} P_{A, R}^\al.
\end{equation}


\paragraph{Estimate of $P^\s H^\s P^\s$.}\label{sec:php}

\begin{lemma}\label{lem:PHP}
\begin{equation}\label{PHPstat}
P^\s H^\s P^\s \doteq E^\s(\infty) P^\s.
\end{equation}
\end{lemma}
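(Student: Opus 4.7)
The plan is to follow the strategy of the proof of \eqref{PHP} in Section~\ref{sec:setup}, now with a general symmetry type $\s$. Decomposing $P^\s=\sum_{a\in \cA^{\rm at},\,\al\prec\prec\s}P_{a,R}^\al$, we expand
\begin{equation*}
P^\s H P^\s=\sum_{a,b;\,\al,\beta}P_{a,R}^\al\,H\,P_{b,R}^\beta,
\end{equation*}
and argue that only the ``diagonal'' terms $a=b$, $\al=\beta$ survive modulo $\doteq$. For $a\neq b$ the ranges of $P_{a,R}^\al$ and $P_{b,R}^\beta$ consist of functions with disjoint supports separated by distance $\asymp R$ (the same geometric argument yielding \eqref{pairor}, extended to arbitrary $\al,\beta\prec\prec\s$), so $P_{a,R}^\al H P_{b,R}^\beta\doteq 0$ since $H$ is local. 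For $a=b$ but $\al\neq\beta$, I note that the cut-off $\prod_{A\in a}\chi_R^{\otimes|A|}$ is $S(a)$-symmetric and hence commutes with every $Q_a^\g$, so $\Ran P_{a,R}^\al\subset\Ran Q_a^\al$ and $P_{a,R}^\al=Q_a^\al P_{a,R}^\al$; combined with $[H,Q_a^\g]=0$ and $Q_a^\al Q_a^\beta=0$, this gives
\begin{equation*}
P_{a,R}^\al H P_{a,R}^\beta=P_{a,R}^\al Q_a^\al H Q_a^\beta P_{a,R}^\beta=0.
\end{equation*}

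For the surviving diagonal terms, I write $H=H_a+I_a$ and use the analogue of \eqref{eigenf-differ}, namely $H_a P_{a,R}^\al\doteq E^\s(\infty)\,P_{a,R}^\al$ (valid because $\Ran P_a^\al$ is the ground-state subspace of $H_a^\al$ at eigenvalue $E^\s(\infty)$, thanks to $\al\prec\prec\s$). The proof thus reduces to showing
\begin{equation*}
\sum_{a\in\cA^{\rm at},\,\al\prec\prec\s}P_{a,R}^\al\,I_a\,P_{a,R}^\al\ \doteq\ 0.
\end{equation*}
The decisive step — the genuinely new ingredient compared with the maximal-spin case — is this: $I_a$ commutes with every $T_\pi$, $\pi\in S(a)$, so the operator $P_{a,R}^\al I_a P_{a,R}^\al$ restricted to $\Ran P_{a,R}^\al$ intertwines the representation of $S(a)$ on that space. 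By Condition~(D), $\Ran P_{a,R}^\al$ carries a single copy of the irreducible representation $\al$, and Schur's lemma then forces
\begin{equation*}
P_{a,R}^\al I_a P_{a,R}^\al=\frac{\Tr(I_a P_{a,R}^\al)}{\rank P_{a,R}^\al}\,P_{a,R}^\al,
\end{equation*}
exactly as in the heuristic derivation of \eqref{TrIP} in the introduction. I expect this Schur step to be the main obstacle: without Condition~(D), $\Ran P_{a,R}^\al$ could contain several copies of $\al$ and the operator would be only block-scalar, spoiling the clean reduction to a single trace.

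The remaining calculation $\Tr(I_a P_{a,R}^\al)\doteq 0$ is then a direct generalisation of Lemma~\ref{lem:newton}. Using the cluster factorisation \eqref{Paal-factor} of $P_a^\al$ together with the associated one-electron densities $\rho_{A,R}^\al$, the expansion \eqref{Iabreakup} of $I_a$, and the change of variables \eqref{z-var}, the trace reduces to the integral expression \eqref{TrIP} with kernel \eqref{tildeIijkl}. By Proposition~\ref{prop:spherical} each $\rho_{A,R}^\al$ is spherically symmetric (the spherically symmetric cut-off preserves this property), and its support lies in $\{|z|\le R/8\}\subset\{|z|<|y_{ij}|\}$ for $R$ large. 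Newton's screening theorem shows each of the four Coulomb pieces of \eqref{tildeIijkl} integrates to $\pm|y_{ij}|^{-1}$, and the four contributions cancel by charge neutrality of each atom, giving $\Tr(I_a P_{a,R}^\al)\doteq 0$. Assembling these pieces yields $P^\s H P^\s\doteq E^\s(\infty)P^\s$; since $P^\s$ commutes with $Q^\s$ by Lemma~\ref{lem:QcommT}, the identity descends to $P^\s H^\s P^\s$ as stated.
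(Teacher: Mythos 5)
Your proposal follows essentially the same strategy as the paper's proof: decompose $P^\s$ via \eqref{Psig}, kill off-diagonal $a\neq b$ terms by locality and disjoint supports, kill $\al\neq\beta$ terms by orthogonality of the $Q_a^\gamma$ ranges, invoke Schur's lemma under Condition (D) to reduce the diagonal $P_{a,R}^\al I_a P_{a,R}^\al$ to a multiple of $P_{a,R}^\al$ with coefficient $\Tr(I_a P_{a,R}^\al)/\rank P_{a,R}^\al$, and then use the factorization \eqref{Paal-factor}, Proposition \ref{prop:spherical}, and Newton's screening theorem to show that the trace vanishes. The only cosmetic difference is that you apply the $\al\neq\beta$ orthogonality to the full operator $H$ before splitting $H=H_a+I_a$, whereas the paper splits first and applies it only to the $I_a$ piece; both are correct since $H$, $H_a$, and hence $I_a$ all commute with $T_\pi$ for $\pi\in S(a)$.
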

\begin{proof}
In this proof we omit the superindex $\sigma$ in $P^\s$ 
and $E^\s(\infty)$ and write instead $P$  
and $E(\infty)$.  
Eq. \eqref{Psig} and the relations $H=H_a+I_a$ and  $H_a  P_{a, R}^\alpha \doteq E(\infty)  P_{a, R}^\alpha,\ \forall a \in \mathcal{A}^{at}, \alpha \prec \prec \sigma,$  we obtain that $PH^\s P \doteq E(\infty) PQ+\sum_{a, b, \al,\beta} P_{a, R}^\alpha I_aP_{b, R}^{\beta}$.  
Since,  for $a \neq b$, $\Ran  P_{a, R}^\alpha$ and $\Ran  P_{b, R}^{\beta}$ have disjoint supports, we conclude that
\begin{equation}\label{PHPsigmasigma}
PH^\s P \doteq E(\infty) PQ+
\sum_{a, \al,\beta} P_{a, R}^\alpha I_aP_{a, R}^{\beta}.
\end{equation}
We will now show that
\begin{equation}\label{PaIaPa-stat} 
 P_{a, R}^\alpha I_aP_{a, R}^{\beta}=0,\ \forall a, \al,\beta.
\end{equation}
 We first consider the case  $\alpha \neq \beta$. Since $I_a$ commutes with $T_\pi$ for all
permutations in $ \pi \in S(a)$ and therefore, due to \eqref{Pbet},
with $Q_a^\beta$, we have that $I_a  \Ran P_{a, R}^{\beta} \subset \Ran
Q_a^{\beta}$. Since also $ \Ran P_{a, R}^{\alpha} \subset \Ran Q_a^{\alpha}$,
it is orthogonal to $I_a  \Ran P_{a, R}^{\beta},\ \beta\ne \al$, and therefore \eqref{PaIaPa-stat} holds for $\alpha \neq \beta$.

We now consider the  case $\alpha = \beta$. 
Clearly, the map $P_{a, R}^\alpha I_aP_{a, R}^{\alpha}\big|_{\Ran P_{a, R}^{\alpha}}$ leaves the space $\Ran P_{a, R}^{\al}$ invariant and  commutes with $T_{\pi},\ \forall \pi\in S(a)$.  Since by Condition (D) $\Ran P_{a, R}^{\alpha}$ is a space of an  irreducible  representation of $S(a)$, we conclude that  it is a multiple of the identity, $P_{a, R}^\alpha I_a P_{a, R}^{\alpha}\big|_{\Ran P_{a, R}^{\alpha}}=\lam \one$ for some real $\lam$. Hence  $P_{a, R}^\alpha I_a P_{a, R}^{\alpha}=\frac{1}{\rank(P_{a, R}^{\alpha})}\Tr (I_a P_{a, R}^{\alpha}) P_{a, R}^{\alpha}$, where $\rank(P_{a, R}^{\alpha})$ is the rank of $P_{a, R}^{\alpha}$. 

 Using  the definition of $\rho_{A}^\al$ ({\bf see} \eqref{rhoA-def}) and the factorization  $P_{a, R}^\al= \otimes_{A\in a} P_{A, R}^\al$, described at the end of the last subsection,  and proceeding as in Lemma \ref{lem:newton}, we arrive at  \eqref{PHPstat}. 
\DETAILS{by the decomposition $\Tr (I_a P_{a, R}^{\alpha}) =\sum_{i<j}^{1,M} 
 \sum_{k \in A_i, l \in A_j}\Tr (I_{ij}^{kl} P_{a, R}^{\alpha})$ (see \eqref{Iabreakup})  we have, after changing the variables of integration according to  \eqref{z-var} {\bf (see also \eqref{tildeIijkl})}, 
 \begin{equation}\label{TrIP}\Tr (I_a P_{a, R}^{\alpha}) = \sum_{i<j}^{1,M} \sum_{k \in A_i, l \in A_j}\int\int \tilde I_{ij}^{kl} (z, z')\rho_{A_i}^\al(z)\rho_{A_j}^\al(z') dz dz' .\end{equation}
By 
Proposition \ref{prop:spherical}, $\rho_{A}^\al$ is spherically symmetric. This implies $\Tr (I_a P_{a, R}^{\alpha}) =0$, which, in turn, gives    \eqref{PaIaPa-stat} for $\alpha = \beta$. 
Hence \eqref{PaIaPa-stat} and therefore \eqref{PHPstat} follow.}   
\end{proof}
Before proceeding to estimating $U^\s:=U^\s(E^\s)$, we present some preliminary results.
 In Appendix \ref{sec:Hbot-low-bnd-stat}, we prove that 
there exists $\gamma^\sigma>0$ and $C>0$ such that
\begin{equation}\label{Hbot-low-bnd-stat}
H^{ \sigma \bot} \geq (E^{\s}(\infty)+ 2 \imsgap^{\sigma}-\frac{C}{R}) Q^\s.
\end{equation}
Furthermore,  similarly to \eqref{ineqE}, Eq. \eqref{PHPstat} implies the following elementary variational estimate
\begin{equation} \label{Es-rough-es}
E^\s(y)\  \dot \leq\  E^\s(\infty). 
\end{equation}

\paragraph{Estimate of $U^\s:=U^\s(E^\s)$.}\label{sec:Ustatest}

Denote $P_{ i j}^{\al}:= P_{ A_i}^{\al}\otimes P_{A_j}^{\al}$ and $P_{ij}^\bot := Q_{A_i}^{\al}\otimes Q_{A_j}^{\al} - P_{ A_i}^{\al}\otimes P_{A_j}^{\al}$ (see \eqref{Qaalpha-factor} and \eqref{Paal-factor} for the definition of $Q_{A}^{\al}$ and $P_{A}^{\al}$).
 Let also 
 \begin{equation}\label{rkldef}
R_{k l}^\bot := (P_{kl}^\bot (H_{A_k}+H_{A_l}) P_{kl}^\bot-E_{A_k}^\s-E_{A_l}^\s)^{-1}
\end{equation} 
 (not to be confused with the related object introduced after \eqref{Habot} and denoted by the same letter). Our goal is to prove the following lemma:
\begin{lemma}\label{lem:Usigmaapprox2}
The equation  \eqref{Usigmaapprox2} holds
 \begin{equation}\label{Usigmaapprox2}
U^\s= \sum_{\alpha \prec  \prec \sigma} P^{\alpha} \sum_{i<j}^{1,M}
\frac{\sigma_{ij}^{\s,\alpha}}{|y_i-y_j|^6}+O(\frac{1}{R^7}), 
 \end{equation}
 where $\sigma_{ij}^{\sigma, \alpha}$ 
 are positive and independent of $y$, given  by
 \begin{equation}\label{sijsalpha}
\sigma_{ij}^{\sigma, \alpha}:= 
\Tr ( f_{ij } P_{ i j}^{\al} R_{ij}^\bot  f_{ij} P_{  i j}^{\al}), \end{equation}  
with recall, $f_{ij}$ defined in \eqref{fij}, and
  \begin{equation}\label{Palpha'}
 P^\alpha:=Q^\sigma \sum_{a \in \mathcal{A}^{at}}  P_a^\alpha Q^\sigma.
\end{equation}
\end{lemma}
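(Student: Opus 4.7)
The strategy mirrors the proof of \eqref{U-est} in the maximal-spin case, but must track the induced representations $\alpha \prec\prec \sigma$ carefully and use Condition (D) in place of non-degeneracy of the ground state. Starting from $P^\s = \sum_{a\in\mathcal{A}^{\rm at},\,\alpha\prec\prec\sigma} P^\alpha_{a,R}$, the decomposition $H=H_a+I_a$, and $H_a P^\alpha_{a,R}\doteq E^\s(\infty)P^\alpha_{a,R}$, together with $P^{\s\bot}P^\alpha_{a,R}=0$, I would first show
\begin{equation*}
P^{\s\bot}H^\s P^\s \;\doteq\; \sum_{a,\alpha} P^{\s\bot} I_a P^\alpha_{a,R},
\end{equation*}
and substitute this into \eqref{Usigmaldef} to obtain $U^\s(E^\s)\;\dot{=}\; \sum_{a,b,\alpha,\beta} P^\alpha_{a,R} I_a P^{\s\bot} R^{\s\bot}(E^\s) P^{\s\bot} I_b P^\beta_{b,R}$, where $R^{\s\bot}(\lambda):=(H^{\s\bot}-\lambda)^{-1}$ exists by \eqref{Hbot-low-bnd-stat} and \eqref{Es-rough-es}. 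Using \eqref{Es-rough-es} and the bound $\|I_a P^\alpha_{a,R}\|\lesssim \sum_{i<j}|y_{ij}|^{-3}$ (from the multipole expansion \eqref{Iijkl-exp} together with \eqref{Psiadecay} applied to $P^\alpha_{a,R}$), I can replace $E^\s$ by $E^\s(\infty)$ at the cost of $O(R^{-7})$.

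Next I would introduce the smoothed cut-offs $\chi_a$ satisfying \eqref{chia-supp}--\eqref{chiaPb} as in the highest-spin proof and establish the analogue of \eqref{charcom}, namely $\chi_a R^{\s\bot}(E^\s(\infty)) - R^{\alpha\bot}_a(E^\s(\infty))\chi_a = O(R^{-1})$, where $R^{\alpha\bot}_a := (H^\alpha_a P^{\alpha\bot}_a - E^\s(\infty))^{-1}$ on $\Ran Q^\alpha_a$. Inserting $\chi_a$ on the right of $P^\alpha_{a,R}I_a$ (which is legitimate since $\chi_a$ commutes with $I_a$ and $\chi_a P^\alpha_{a,R}=P^\alpha_{a,R}$) and using that $\chi_a I_b P^\beta_{b,R}=0$ for $a\ne b$ by support considerations, the off-diagonal terms in $(a,b)$ drop out up to $O(R^{-7})$, yielding
\begin{equation*}
U^\s \;=\; \sum_{a,\alpha,\beta} P^\alpha_{a,R} I_a P^{\alpha\bot}_a R^{\alpha\bot}_a P^{\alpha\bot}_a I_a P^\beta_{a,R} \;+\; O(R^{-7}).
\end{equation*}

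For the terms with $\alpha\ne \beta$: $I_a$ commutes with every $T_\pi$, $\pi\in S(a)$, hence maps $\Ran Q^\beta_a$ to itself, so $P^\alpha_{a,R}(\cdots)P^\beta_{a,R}=0$ by orthogonality of distinct isotypic components. For $\alpha=\beta$, write $U^{\alpha\alpha}_{aa}:=P^\alpha_{a,R}I_a P^{\alpha\bot}_a R^{\alpha\bot}_a P^{\alpha\bot}_a I_a P^\alpha_{a,R}$. This operator leaves $\Ran P^\alpha_{a,R}$ invariant and commutes with the representation of $S(a)$ there, which by Condition (D) is irreducible; by Schur's lemma it is a scalar multiple of $P^\alpha_{a,R}$, and the scalar equals $\frac{1}{\rank P^\alpha_{a,R}}\Tr(U^{\alpha\alpha}_{aa})$. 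Substituting the multipole expansion \eqref{IaPa-exp'} and using that the integrals with $(i,j)\ne(i',j')$ (or with distinct internal pairs $(k,l)$) are odd in one of the single-atom variables and therefore vanish, the trace reduces to $\sum_{i<j}|y_{ij}|^{-6}\Tr(f_{ij}P^\alpha_{a,R}R^{\alpha\bot}_a f_{ij}P^\alpha_{a,R})+O(R^{-7})$. Using the factorization $P^\alpha_a=\otimes_{A\in a}P^\alpha_A$ from \eqref{Paal-factor} and the analogue of \eqref{Habot}--\eqref{Habot-Einv}, the inner trace collapses to $\Tr(f_{ij}P^\alpha_{ij} R^\bot_{ij} f_{ij}P^\alpha_{ij})$, giving the coefficient $\sigma^{\s,\alpha}_{ij}$ of \eqref{sijsalpha}. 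Positivity of $\sigma^{\s,\alpha}_{ij}$ follows from $R^\bot_{ij}\ge 0$, and independence of $y$ from the rotational covariance argument (\eqref{sigmaijrotation}--\eqref{fijrotation}) adapted to the projection $P^\alpha_{ij}$ in place of a rank-one projection, using that $T_R$ commutes with $H_{A_i}+H_{A_j}$ and with the $S(a)$-action, hence preserves $\Ran P^\alpha_{ij}$.

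Finally, since every term in the resulting sum already factors through $P^\alpha_{a,R}\doteq P^\alpha_a$ and since the whole expression must commute with $Q^\s$ (it arose from operators restricted to $\Ran Q^\s$), summing over $a\in\mathcal{A}^{\rm at}$ and symmetrizing produces $P^\alpha = Q^\s\sum_a P^\alpha_a Q^\s$ as defined in \eqref{Palpha'}, and yields \eqref{Usigmaapprox2}. The main obstacle is the Schur-lemma step: one must verify that Condition (D) really forces $U^{\alpha\alpha}_{aa}\!\restriction_{\Ran P^\alpha_{a,R}}$ to be scalar, which requires checking that $\Ran P^\alpha_{a,R}$ carries a single irreducible copy of $\alpha$ and that the intertwining with $S(a)$ survives the cut-off and the passage from $P^\alpha_a$ to $P^\alpha_{a,R}$; everything else is a careful book-keeping of multipole expansions, exponential remainders, and resolvent localization.
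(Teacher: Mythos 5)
Your proposal follows essentially the same route as the paper's proof: the same starting expansion $P^{\s\bot}H^\s P^\s \doteq \sum_{a,\al}P^{\s\bot}I_a P^\al_{a,R}$, the same geometric localization via cut-off functions $\chi_a$ and the resolvent-commutator estimate \eqref{charcom-sig}, the same use of orthogonality of isotypic components of $S(a)$ to eliminate the $\alpha\ne\beta$ terms, the same Schur-lemma scalar reduction via Condition (D), the same multipole/oddness cancellation, and the same tensor factorization to extract $\sigma_{ij}^{\s,\alpha}$. The one point worth flagging is the order of the replacement $E^\s\to E^\s(\infty)$: your plan invokes only \eqref{Es-rough-es}, a one-sided bound, whereas the argument (as in the max-spin case leading to \eqref{EEinfty-bnd}) obtains the needed two-sided control $|E^\s-E^\s(\infty)|\lesssim R^{-6}$ only after the localization step that bounds $\|U^\s(E^\s)\|\lesssim R^{-6}$, so that replacement should come after \eqref{Usig-aprox2}, not before.
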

\DETAILS{\noindent {\bf Remark.} The proof below is different from the one in Section \ref{sec:setup}. Instead of weighted estimates we use a geometrical decomposition of the operator $H^{\s \bot}$ proven above. This decomposition is similar to \eqref{Hbot-low-bnd-stat2} but has equality instead of the inequality.}
\begin{proof}
 By \eqref{Hbot-low-bnd-stat} and  \eqref{Es-rough-es}, the operator $H^{\s \bot}-E^\s$,   where, recall,  $H^{\s \bot}=P^{\s\bot} H^\s P^{\s\bot}$, has a bounded inverse, which we denote by 
 $R^{\s \bot}(E):=(H^{\s \bot}-E )^{-1}$.

In this proof, we will omit for simplicity the superindex $\sigma$ in $
P^\s$, $Q^\s$, $E^\s$, $E^\s(\infty)$, $\gamma^\s$, $\s_{ij}^{\s,\al}$, $R^{\s \bot}(E)$ and write, instead, $ P$, $Q$, $E$, $E(\infty)$, $\gamma$, $\s_{ij}^{\al}$, $R^{\bot}(E)$ and do not specify the exact range $a,b \in \mathcal{A}^{at},\alpha, \beta \prec \prec
\sigma$ in summations.
\DETAILS{ Recall that $U^\s(\lambda)$ was defined in \eqref{Usigmaldef}.
Using that $H^\s=H Q$ and the fact that
 $Q$ commutes with $H$ and $P$, we obtain that
\begin{equation}\label{Usfirst}
U^\s= Q  P H P^\bot (H^{\s \bot} -E )^{-1} P^\bot H P Q .
\end{equation}
where, recall,  $H^{\s \bot}=P^{\s\bot} H^\s P^{\s\bot}$. }
   Using  equations \eqref{Psig} and  $H_a P_{a, R}^\alpha \doteq \Einfty P_{a, R}^\alpha$ and $P^\bot P_{a, R}^{\al}=0$, we obtain that
 \begin{equation}\label{PperpHP-exp} P^\bot H P \doteq \sum_{a,\al}  P^\bot I_a P_{a, R}^{\al}.
 \end{equation}
 (see \eqref{Paal-factor} for the definition of $P_{A, R}^{\al}$.)
 Now, using the definition of  $U^\s(\lambda)$ in \eqref{Usigmaldef},  \eqref{PperpHP-exp} and the fact that $Q$ commutes with $H$ and $P$, we obtain that 
  \begin{align}\label{Usig-aprox1}
U^\s (E) \dot = \sum_{a,b,\alpha, \beta} 
 P_{a, R}^{\al}  I_a   P^\bot R^{\bot}(E) P^\bot  I_b P_{b, R}^{\beta}. 
\end{align}
%
\DETAILS{  \begin{align}\label{FPsigmaaprox1}
U^\s = \sum_{a,b,\alpha, \beta} 
 Q V_{ab}^{\alpha \beta}  Q+ O(\sum_{i<j} \frac{1}{|y_i-y_j|^7}),
\end{align}
where
\begin{equation}\label{Vabdef}
 V_{ab}^{\alpha \beta}:= P_{a, R}^{\al}  I_a   P^\bot (H^{\s \bot}-E )^{-1} P^\bot  I_b P_{b, R}^{\beta}. 
\end{equation}
Here, recall,  $H^{\s \bot}=P^{\s\bot} H^\s P^{\s\bot}$. 
In the next lemma we estimate $V_{ab}^{\alpha \beta}$. 
 \begin{lemma}\label{vlemma}
 The following equations hold:
\begin{align}
\label{Vab1} 
&V_{ab}^{\alpha \beta}=\sum_{i<j}^{1,M} \frac{e^4 \sigma_{ij}^{\s,\al}}{|y_i-y_j|^6} P_{a, R}^{\al} \del_{a b} \del_{\al \bet} +O(\frac{1}{R^7}),
\end{align}
where the positive constants $\sigma_{ij}^{\s,\al}$ are  given by
\eqref{sijsalpha}.
\end{lemma}
\begin{proof} In this proof we keep writing $E$ and $E (\infty)$ for $E^\s$ and $E^\s (\infty)$.
 Recall that ${P_a^\al}$ denotes the orthogonal projection onto the original ground state eigenspace of $H_a^\al$. 
Define the operator  $\tilde P_{a}^{\al \bot}:=\1 $ for $a \in \mathcal{A}/\mathcal{A}^{at}$, and $\tilde P_{a}^{\al \bot}:=P_{a}^{\al \bot}$ for $a \in \mathcal{A}^{at}$. 

Now, we use the decomposition \eqref{K} - \eqref{Ka} given in the previous paragraph to simplify the expression for $V_{ab}^{\alpha \beta}$. 
By \eqref{Kineq}, $(\hat H-E )^{-1}$ exists, for $R$ large enough,  and is bounded by $\g^{-1}$. Since $\hat H$  commutes with all permutations  and therefore with $Q $ and that $Q $ commutes with $P, H$ and $\hat H$, we have by \eqref{HPKP} that $ (H^{\s \bot} -E )^{-1}=Q (\hat H-E )^{-1} Q$.  Next, by the second resolvent formula, \eqref{Wnorm} and \eqref{Kineq}, we have that
\begin{equation}\label{2nd-resolv}
(\hat H-E )^{-1}=(\hat H_0-E )^{-1} +O(\frac{1}{R}). 
\end{equation}}
 Proceeding as in the proof of \eqref{IaPsia-est} we can obtain
 that
 \begin{equation}\label{IaPa-est-stat}
\|I_a P_{a, R}^{\al}\| \lesssim \sum_{i<j} \frac{1}{|y_i-y_j|^3},
\forall a \in \mathcal{A}^{at}. 
 \end{equation}
%
\DETAILS{Using \eqref{Vabdef}, $ (H^{\s \bot} -E )^{-1}=Q (\hat H-E )^{-1} Q$, \eqref{2nd-resolv} and  \eqref{IaPa-est-stat}, we have that
\begin{equation}\label{Vab2}
 V_{ab}^{\alpha \beta}:= P_{a, R}^{\al}  I_a P^\bot (\hat H_0-E )^{-1} P^\bot  I_b P_{b, R}^{\beta} +O(\frac{1}{R^7}). 
\end{equation}

Now, we would like to approximate $(\hat H_0-E )^{-1}$ in \eqref{Vab2}. To this end, we introduce a family of smoothed out characteristic functions,  
$\chi_c,\ c \in \mathcal{A}^{at},$ that commute with all permutations in $S(c)$ and such that
\begin{equation}\label{chiadef}
\chi_c(x)= \left\{ \begin{array}{ccc} 1 \text{ if } x \in  \hat\Omega_c^{\frac{1}{8}} \\
 0 \text{ if } x \in ( \hat\Omega_c^{\frac{1}{6}})^c,
\end{array} \right.
\end{equation}
where $ \Omega_c^{\beta}$ was defined in \eqref{refinedpartition}, and
\begin{equation}\label{chiader}
\|\partial^\alpha \chi_c\| \lesssim R^{-|\alpha|}, \text{ for any
multi-index } \alpha.
\end{equation}
Moreover, due to \eqref{refpartunprop1}, \eqref{refpartunprop2} and \eqref{chiadef} we have that
 \begin{equation}\label{chiaJb}
 \chi_a J_b= \chi_a \delta_{ab}, \quad \forall a \in \mathcal{A}^{at}, \quad \forall b \in \hat \cA.
 \end{equation}
Our goal is to show the following 
inequality 
\begin{equation}\label{charcom}
\chi_a (\hat H_0-E)^{-1}-\hat R_a^{ \bot}\chi_a=O(\frac{1}{R}),
\end{equation}
where $\hat R_a^{ \bot}:=(\sum_{\alpha \prec \prec \sigma} \hat H_a^{\al \bot}   -E(\infty))^{-1} $, with $\hat H_a^{\al \bot}:=H_a^{\al} \hat P_a^{\al \bot}$. 
Indeed, denote the l.h.s. of \eqref{charcom} by $B$.  Factoring out the inverse operators in the definition of $B$ and using   \eqref{Es-rough-es}, 
  we obtain that
 \begin{equation}\label{bsimp}
B=\hat R_a^{\bot} 
V (\hat H_0-E)^{-1}, 
 \end{equation}
 where $V:=\sum_{\al \prec \prec \sigma} Q_a^\al \hat P_a^{\al \bot} H_a \chi_a- \chi_a \hat H_0$. 
 Due to the cut off of the ground states< 
 we have that
 \begin{equation}\label{chibPa}
\chi_b P_{a, R}^{\al} = \delta_{ab}  P_{a, R}^{\al}, \quad \forall a,b \in \mathcal{A}^{at}.
 \end{equation}
\begin{equation}\label{chibPa}
\chi_a P_{b, R}  = \delta_{ab}  P_{a, R}, \quad \forall a,b \in \mathcal{A}^{at}.
 \end{equation}
This and the definition of  $P^{\bot}$ give $\chi_a P^{\bot}  =  P_{a, R}^{\bot} $ and that 
 $\chi_a$ commutes with $P_{a, R}^{ \bot}$.   
 In addition, $\chi_a$  commutes with $Q_a^\al$.   These two facts and the definition of $\hat H_0$, \eqref{chiaJb} and the fact that $H_a$ is a local operator imply that 
 \begin{equation}\label{charsimp}
 \chi_a \hat H_0= \sum_{\al \prec \prec \sigma} Q_a^\al \hat P_a^{\al \bot}  \chi_a H_a,
 \end{equation}
 which gives  $V:=\sum_{\al \prec \prec \sigma} Q_a^\al \hat P_a^{\al \bot} ( H_a \chi_a- \chi_a H_a)$.
   Using \eqref{chiader} and the $H_a$ boundedness of the gradient we obtain that $\| V \| \lesssim \frac{1}{R},$
 which together with \eqref{bsimp} implies \eqref{charcom}. 

\DETAILS{Now, the equations \eqref{Kineq}, \eqref{2nd-resolv} and \eqref{charcom} give for $R$ sufficiently large
\begin{equation}\label{chi-resolvK-est}
\chi_a (K-E )^{-1}=R_a^{\al \bot}\chi_a 
+O (\frac{1}{R}).
\end{equation}}
Since $\chi_a$ commutes with $I_b$, \eqref{Vab2},   \eqref{chibPa}, \eqref{charcom} and \eqref{IaPa-est-stat}  give 
\DETAILS{all permutations in $S(a)$, it commutes as well with $Q_a^\al, \forall \alpha \prec \prec \sigma$. Proceeding similarly as in the proof of \eqref{chiaest}, we obtain that $P^\bot H P_{a, R}^{\al} \doteq I_a P_{a, R}^{\al}$. This together with \eqref{vabdef} 
gives that
\begin{equation}\label{vabdot}
 I_a   (K_0-E )^{-1} =  I_a   \chi_a (K_0-E )^{-1} ,
\end{equation}
where the last equality follows from \eqref{chiaprop}.   Now, using the lemma above, we find}
\begin{align}\label{Vab2a}
 V_{ab}^{\alpha \beta} & =P_{a, R}^{\al} I_a  \chi_a (\hat H_0-E)^{-1} I_b P_{b, R}^{\beta}  +O(\frac{1}{R^7})=P_{a, R}^{\al} I_a   R_a^{ \bot}\chi_a I_b P_{b, R}^{\beta} +O(\frac{1}{R^7}),
\end{align}
where, recall, $R_a^{ \bot}:=(\sum_{\alpha \prec \prec \sigma} \tilde H_a^{\al \bot}   -E(\infty))^{-1} $, with $\tilde H_a^{\al \bot}:=H_a^{\al} \tilde P_a^{\al \bot}$, {\bf (check, also notation)}
 which, due to \eqref{chibPa}, gives 
\begin{align}\label{Vab3}
 V_{ab}^{\alpha \beta} & =P_{a, R}^{\al} I_a   R_a^{\bot} I_a P_{a, R}^{\beta} \delta_{ab}  +O(\frac{1}{R^7}). 
\end{align}
}

Furthermore, recall the functions $\chi_a$,  defined in \eqref{chiader'} -  \eqref{chia-supp} and satisfying  
\begin{equation}\label{chiaIa-sig} I_a (-\Delta +1)^{-1/2}=O(\frac1R) \text{ on } \supp \chi_a.\end{equation}
\begin{equation}\label{chiaP-sig}
\chi_a P_{b, R}  = \delta_{ab}  P_{a, R}, \quad 
  [\chi_a,  P_{a, R}^{\al \bot}]=0,\ 
  \chi_a P^{\bot}  =  P_{a, R}^{\bot}\chi_a, \end{equation} 
$\forall a,b \in \mathcal{A}^{at}$. Now, let $ R_a^{ \bot}(E):=(H_a^{\bot}   -E)^{-1} $,   with $ H_a^{ \bot}:=\sum_{\alpha \prec \prec \sigma} H_a^{\al}  P_a^{\al \bot}$  (this is the generalization of the operator denoted by the same symbol in \eqref{charcom}). Similarly as in \eqref{charcom}, we show
\begin{equation}\label{charcom-sig}
\chi_a R^{ \bot}(E)- R_a^{ \bot}(E)\chi_a =O(\frac{1}{R}).
\end{equation}
 
 Since $\chi_a$ commutes with $I_a$, we have $ P_{a, R}^{\al}  I_a    = P_{a, R}^{\al}  I_a   \chi_a $. Using this to insert $\chi_a$ into \eqref{Usig-aprox1},  and using \eqref{chiaP-sig}, \eqref{charcom-sig} and \eqref{IaPa-est-stat} 
 gives 
 \begin{align} \label{Usig-aprox2} 
 U^\s (E) &= \sum_{a,b,\alpha, \beta} 
 P_{a, R}^{\al} (E) I_a     R_a^{ \bot}(E)\chi_a  I_b P_{b, R}^{\beta}  +O(\frac{1}{R^7}))\notag \\ 
&= \sum_{a,\alpha, \beta}U_{aa}^{\alpha \beta}(E) +O(\frac{1}{R^7}),     
\end{align}
where $U_{aa}^{\alpha \beta}(E)  =P_{a, R}^{\al} I_a   R_a^{\bot}(E) I_a P_{a, R}^{\beta}$.

Next, as above, using the orthogonality to the subspaces corresponding to different irreducible representations of $S(a)$,  we obtain 
\begin{align}\label{Uab4-sig}
 U_{aa}^{\alpha \beta}(E) & =P_{a, R}^{\al} I_a   R_a^{ \bot}(E) I_a P_{a, R}^{\al }  \delta_{\al \beta}  +O(\frac{1}{R^7}). \end{align}
 Next,  
 as in going from \eqref{U-aprox2} to \eqref{Uab}, we pass from $R_a^{ \bot}(E)$ to $R_a^{ \bot}(E(\infty))=: R_a^{ \bot}$, to obtain
 \begin{align}\label{Uab-sig}
U^\s(E)&= \sum_{a,\alpha, \beta}U_{aa}^{\alpha \beta} +O(\frac{1}{R^7}),\  U_{aa}^{\alpha \beta}  =P_{a, R}^{\al} I_a   R_a^{\bot} I_a P_{a, R}^{\beta}  . 
\end{align}

Now, as discussed in Subsection \ref{sec:deco}, for any $a,b \in \mathcal{A}^{at}$
there exists a permutation $\pi$ such that $b=\pi a$. Since
on the other hand $T_\pi$ is unitary and commutes with $Q, P, H$ and since $P_{a, R}^{\al}T_\pi^{-1} =T_\pi^{-1} P_{b, R}^{\al} $, where $b=\pi a$, we obtain that
\begin{align*}
 P_{a, R}^{\al}T_\pi^{-1} & H  P^\bot R^{ \bot}(E) P^\bot H T_\pi P_{a, R}^{\al}=T_\pi^{-1} P_{b, R}^{\al}  H  P^\bot R^{ \bot}(E) P^\bot H  P_{b, R}^{\al}T_\pi,\end{align*}
Due to the definition $U_{aa}^{\alpha \beta}$ in \eqref{Uab-sig}, 
we have 
  \begin{align}\label{Uab5-sig}
&U_{aa}^{\alpha \alpha}=  T_\pi^{-1} U_{bb}^{\alpha \alpha} T_\pi,\  \mbox{with}\ b= \pi a,  
\quad \forall a \in \mathcal{A}^{at}.
\end{align}

Now, we use again, as in the proof of Lemma \ref{lem:PHP}, that since $U_{aa}^{\alpha \alpha}\big|_{\Ran P_{a, R}^{\al}}$ leaves the space $\Ran P_{a, R}^{\al}$ invariant and  commutes with the irreducible representation $T_{\pi}^{\alpha}$, it is a multiple of identity.  
This gives 
$U_{aa}^{\alpha \alpha}=\mbox{multiple of}\ P_{a, R}^{\al} . $ 
This implies
 \begin{align}\label{Vaaalal1} U_{aa}^{\alpha \alpha} = \frac{1}{\rank P_{a, R}^{\al}}\Tr (U_{aa}^{\alpha \alpha} P_{a, R}^{\al}) P_{a, R}^{\al}. \end{align}

 Since $I_a {P_{a, R}^\al} \in \text{Ran} Q_a^\al$, the summands in  $R_a^{ \bot}I_a {P_{a, R}^\al}=(\sum_{\beta \prec \prec \sigma}  H_a^{\beta \bot}  -E(\infty))^{-1} I_a {P_{a, R}^\al}$, 
 with $\beta \neq \alpha$, vanish.  Moreover, due to the exponential decay of the ground states and their derivatives up to second order we can replace -  with only an exponentially small error - $P_{a, R}^{\al \bot}$ in the resulting term $(H_a^{\al \bot} -E(\infty))^{-1} $  
by $P_a^{\al \bot}$.
  Therefore,  
\begin{align}\label{Vaaalal2}
 U_{aa}^{\alpha \alpha} = P_{a, R}^{\al}  I_a  R_a^{\al \bot}  I_a P_{a, R}^{\al} 
+O(\frac{e^4}{R^7}), 
\end{align}
where $ R_a^{\al \bot}:=( H_a^{\alpha \bot}- \Einfty )^{-1}$. 

Now using \eqref{Vaaalal1}  and \eqref{Vaaalal2} and  the formula \eqref{Iabreakup} (
$I_{a}=\sum_{i<j}^{1,M} 
\sum_{k \in A_i, l \in A_j}I_{ij}^{kl}$, where ) and the equation,  similar to the equation \eqref{Iijkl-exp}, 
\begin{equation}\label{Iijkl-exp-P} 
I_{ij}^{kl}P_{a, R}^{\al}=\frac{e^2}{|y_{ij}|^3}f_{ij}^{lm}(z,\widehat{y_{ij}})P_{a, R}^{\al}+O(\frac{1}{|y_{ij}|^4}),\end{equation}
where 
$\widehat{y}_{ij}=\frac{y_{ij}}{|y_{ij}|}$,  $z:=(z_{ki}, z_{lj},\ k \in A_i,\  l  \in A_j)$ with the variables $z_{ki}, z_{lj}$, defined in \eqref{z-var}, and $f_{ij}(z,\widehat{y}_{ij})$ are given  after \eqref{Iijkl-exp}, 
we obtain
\begin{align}\label{Vaaalal3}
 \Tr (U_{aa}^{\alpha \alpha} P_{a, R}^{\al}) = \sum_{i<j}^{1,M} \sum_{k<l}^{1,M}  \frac{ e^4 W_{ij;kl}}{|y_i-y_j|^3 |y_k-y_l|^3} 
 +O(\frac{e^4}{R^7}), 
\end{align}
where 
\begin{equation}\label{W}
W_{ij;kl}:=\Tr ( f_{ij }  R_a^{\al \bot}  f_{k l} P_{a, R}^{\al}).
\end{equation}
\DETAILS{We will show that
\begin{equation}\label{W0}
W_{ij;kl}=0, \text{ when } ij \neq kl.
\end{equation}
{\bf (compare the rest of the sect with the corresponding parts of Sect \ref{sec:setup})} Indeed, assume without loss of generality that $i \neq k,l$. Then by
\eqref{fAiAj} $f_{ij}$ is odd in the variables $z_{mi}, m \in A_i$
defined in \eqref{zlm}. One the other hand the 
the product of the remaining factors in \eqref{W} has even electron densities in these variables (because $P_{kl}, H_{A_k}, H_{A_l}, f_{kl}$ act
on different variables and because of Lemma \ref{lem:rho1}). Hence \eqref{W0} follows.}

As in \eqref {W-expr}, we show that $W_{ij;kl}=\s_{ij}^{\al}\del_{ij;kl}$. The part $ij \neq kl$ is obtained in exactly the same way. For   $ij =kl$,
we use  the factorization \eqref{Paal-factor}  of $P_{a, R}^{\al}$,  to obtain that
\begin{align}\label{decouplestat}
& R_a^{\al \bot} f_{kl} P_{a, R}^{\al}=  R_{kl}^\bot f_{kl} P_{a, R}^{\al}= \prod_{m\ne i, j} P_{A_m, R}^{\al} R_{i j}^\bot f_{ij} P_{i j, R},
\end{align}
where, recall, the operators $R_{kl}^\bot $ are given by \eqref{rkldef}.  Denote $P_{ i j, R}^{\al}:= P_{ A_i, R}^{\al}\otimes P_{A_j, R}^{\al}$ and $P_{ij}^\bot := Q_{A_i}^{\al}\otimes Q_{A_j}^{\al} - P_{ A_i, R}^{\al}\otimes P_{A_j, R}^{\al}$  (see \eqref{Qaalpha-factor}  for the definition of $Q_{A}^{\al}$).	Inserting this into \eqref{W}, with  $ij =kl$, and passing from $P_{i j, R}$ to $P_{i j}$, gives
\DETAILS{This and the factorization of $P_{a, R}^{\al}$ (see \eqref{Paal-factor}) implies 
\begin{equation}\label{W2}
W_{ij;kl}:=\Tr ( f_{ij } R_{kl}^\bot  f_{k l} P_{i j A_k A_l, R}^{\al}).
\end{equation}
Using that  \eqref{Paal-factor} implies that $ R_{ij}^\bot f_{ij} P_{a, R}^{\al}= \prod_{m\ne i, j} P_{A_m, R}^{\al} R_{ij}^\bot f_{ij} P_{A_i A_j, R}$, we have, for   $ij =kl$,}
\begin{equation} 
W_{ij;ij}=\Tr ( f_{ij } P_{i j, R}^{\al} R_{ij}^\bot  f_{ij} P_{ i j, R}^{\al})\dot =\Tr ( f_{ij } P_{i j}^{\al} R_{ij}^\bot  f_{ij} P_{ i j}^{\al})=: \s_{ij}^{\al}.\end{equation}
This shows  $W_{ij;kl}=\s_{ij}^{\al}\del_{ij;kl}$, which, together with \eqref{Uab-sig},  \eqref{Uab4-sig},  \eqref{Vaaalal1}, and \eqref{Vaaalal3}, 
 implies the relation 
\eqref{Usigmaapprox2}-\eqref{Palpha'}.
Finally, the proof that  $\sigma_{ij}^{\al}$ are  positive and independent of   $y$ is done similarly  as for the case without statistics.
\DETAILS{ as for the case without statistics, 
since $R_{ij}^\bot$ is positive, so is $\sigma_{kl}^{\s,\al}$. 
To show the independence of  $\sigma_{ij}$ of $y$, we first note that the only
dependence of $\sigma_{ij}$ on $\widehat{y_{ij}}:=\frac{y_i-y_j}{|y_i-y_j|}$
 appears on $f_{ij}$ so we will write $f_{ij}^{\widehat{y_{ij}}}$ and
$\sigma_{ij}^{\widehat{y_{ij}}}$. For any rotation $R$ in
$\mathbb{R}^3$ we define $T_R$ acting on the space
$L^2(\mathbb{R}^{3(|A_i|+|A_j|)})$ (recall that $|A_k|$ is the
number of electrons of the $k$-th atom) of functions of the
variables $z_{kl}=x_k-y_l,\ k=1, \dots , |A_i|+|A_j| ,\ l =1, 2$, by
rotating them.
 We have proven in the proof of Proposition \ref{spherical} that
 \begin{equation}\label{statTRphiij}
 T_R P_{ A_i A_j, R}^{\al}T_R^{-1} =c(R) P_{ A_i A_j, R}^{\al} \text{ where } |c(R)|=1.
 \end{equation}
 On the other hand using \eqref{fAiAj} and the fact that $R$ is
 unitary we obtain that
\begin{equation}\label{statfijrotation}
T_R^{-1}f_{ij}^{\widehat{y_{ij}}}=f_{ij}^{\widehat{R^{-1} y_{ij}}}.
\end{equation}
       Using \eqref{statTRphiij}, \eqref{fijrotation} and the fact that $T_R$
commutes with the Hamiltonians $H_i$, $H_j$ we can obtain that
$\sigma_{ij}^{\widehat{y_{ij}}}=\sigma_{ij}^{\widehat{R y_{ij}}}$
implying the independence of $\sigma_{ij}$ of $y$. }
%
  \end{proof}

\paragraph{Completion of the proof of Theorem \ref{thm:vdW-sig}. } \label{sec:compl-pf}

Since $\Pi^\s=Q^\s P Q^\s$, by \eqref{Psig} and \eqref{Palpha'} we
obtain that
\begin{equation}\label{PPal}
\Pi^\s \doteq \sum_{\alpha \prec \prec \sigma} P^\al.
\end{equation}
Therefore, from relations \eqref{FQsigma}, \eqref{PHPstat}, \eqref{Usigmaapprox2} and \eqref{PPal} and the definition of $W^{\s,\alpha}(y)$ in \eqref{Wsig-al}{, we obtain that
\begin{equation}\label{FPEs}
F_{\Pi^\s}(E^{\s})=\sum_{\alpha \prec \prec \sigma} P^\alpha
\big(E^\s(\infty) + W^{\s,\alpha}(y)\big) 
+O(\frac{e^4}{R^7}). 
\end{equation}
Moreover, differentiating  $(P^\s H^\s P^\s-\lambda)^{-1}$ in $\lambda$ and using the second resolvent formula,
one concludes that $(P^\s H^\s P^\s-\lambda)^{-1}$ is increasing in
$\lambda \in (-\infty, \Einfty+\gamma)$, where $\gamma$ is the same as in 
\eqref{Hbot-low-bnd-stat}. It follows that the Feshbach map is decreasing which implies
 that $E^\s$ is the lowest eigenvalue of $F_{\Pi^\s}(E^\s)$.
 By \eqref{Palpha'} and by the fact that $P_a^\alpha P_b^\beta
\doteq 0, \forall \alpha \neq \beta$ and that $P_a^\alpha$ commutes
with $Q^\s$ for all $a \in \mathcal{A}^{at}, \alpha \prec \prec
\sigma$, we obtain that $P^\alpha P^\beta \doteq 0$ for all $\alpha
\neq \beta$ which together with \eqref{FPEs} and the fact that
$E^\s$ is the lowest eigenvalue of $F_{\Pi^\s}(E^{\s})$ gives that
$E^{\s}= E^\s(\infty)+ W^\s(y),$ 
where $W^\s(y)$ is defined in \eqref{Wsig} as desired

\paragraph{Proof of the necessity of Property (E).}\label{statenecessity}
In this section we will show that if Property (E) fails to hold then
so does the van der Waals - London law. To do that we modify the
analysis in Section \ref{enecessity} appropriately. Let
$E_{\min}=\min_{a \in \mathcal{A}, \alpha \prec \sigma} \inf
\sigma(H_a^{\alpha}). $ 
  We denote by
$\mathcal{A}^{\min}$ the set of all $a \in
\mathcal{A}$ for which 
$\min_{\alpha \prec \sigma }\inf \sigma(H_a^{\alpha})=\E_{\min}.$ 
Property (E) holds if and only if
$\mathcal{A}^{\min}=\mathcal{A}^{at}$. For any $a \in 
\mathcal{A}^{\min}$ we say that $\alpha \prec \prec \sigma$ if $\inf
\sigma(H_a^{\alpha})=\E_{\min}.$
 Now assume that Property (E)
fails. To prove that the van der Waals London law fails we use, as
before, the Feshbach map but with the orthogonal projection $P$
defined as the projection on
\begin{equation*}
\text{span}\{\Ran P_{a, R}^{\al}: a \in \mathcal{A}^{\min}, \alpha
\prec \prec \sigma\}, 
\end{equation*}
spanned of by the cut-off ground states of the different $H_a^\alpha$. 
 Note that the condition $\inf(H_a)=E_{\min}$  implies that $E_{\min}$ is an
isolated eigenvalue of the Hamiltonians $H_a^\alpha$ and the
eigenfunctions are exponentially decaying. The argument is the same
as in the case without statistics in Section \ref{enecessity}. We
have that
$ P=\sum_{a \in \mathcal{A}^{min}, \alpha \prec \prec \sigma}  P_{a, R}^{\al}. $ 
Proceeding similarly as in the proof of \eqref{PHPstat}, 
we obtain that
\begin{equation}
PHP \doteq E_{\min} P+ \sum_{a \in \mathcal{A}^{min}, \alpha \prec \prec \sigma}  P_{a, R}^{\al} I_a  P_{a, R}^{\al}. 
\end{equation}
The fact that $U^\sigma=O(\frac{1}{R^2})$, can be proven in the same way
as in Section \ref{enecessity}. Since Condition (E) fails we pick an
$a \in A^{\min}/ A^{at}$ and any $\alpha \prec \prec \sigma$. For
such a decomposition $a=(I_1, \dots, I_M)$, we define the charges
$q_{ai}:=(Z_i-|I_i|)e$. Taking Taylor expansion of $I_a$ (with
remainder of second order) one can show that
\begin{equation*}
P_{a, R}^{\al} I_a  P_{a, R}^{\al}= [\sum_{i\ne j}\frac{q_{ai} q_{aj}}{|y_{i}-y_{j}|}+O(\frac{1}{R^2})] P_{a, R}^{\al}.
\end{equation*}
 The rest of the proof works as in Section \ref{enecessity}.

\appendix
\bigskip

\section{More about Property (E)} \label{sec:propE}
In this appendix we prove several   statements  about Property (E) formulated in the introduction. We begin with 
\begin{proposition}[Property (E) for hydrogen atoms] \label{prop:ConditionEhyd}
 Property (E) holds  for a system of several hydrogen atoms.
  \end{proposition}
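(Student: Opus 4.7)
The plan is to establish Property (E) for a collection of hydrogen atoms by a direct separability argument, combined with statement (c) of the introduction to handle the Fermi--Dirac statistics.

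First, by statement (c) it suffices to prove Property (E) without statistics, i.e., with all Hamiltonians $H_A$ acting on the full space $L^2$. I would then concentrate on establishing the single family of strict inequalities
\begin{equation*}
E_{1,-k} > (k+1)\,E_H, \qquad k \ge 1,
\end{equation*}
where $E_H := E_{1,0}$. To prove these, I would decompose the Hamiltonian of the ion $H^{-k}$ (one proton, $k+1$ electrons) as $H_{H^{-k}} = H_0 + V_{ee}$, with
\begin{equation*}
H_0 = \sum_{i=1}^{k+1}\!\left(-\Delta_{x_i} - \frac{e^2}{|x_i|}\right), \qquad V_{ee} = \sum_{1\le i<j\le k+1}\frac{e^2}{|x_i-x_j|}.
\end{equation*}
The operator $H_0$ is separable and, on $L^2(\mathbb{R}^{3(k+1)})$, has a unique non-degenerate ground state $\phi_H^{\otimes(k+1)}$ of energy $(k+1)E_H$, isolated from the rest of the spectrum by a strictly positive gap, while $V_{ee}$ is pointwise positive. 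The weak inequality $E_{1,-k} \ge (k+1)E_H$ is immediate from $V_{ee}\ge 0$ and the variational principle. For the strict inequality, I would take a minimizing sequence $\psi_n$ with $\|\psi_n\|=1$ and use the spectral gap of $H_0$ to force $\psi_n \to c\,\phi_H^{\otimes(k+1)}$ in $L^2$ with $|c|=1$; combined with the uniform $H^1$ bound coming from $H_0$-boundedness of $-\Delta$ and weak lower semicontinuity of the Coulomb form, this gives $\liminf \langle \psi_n, V_{ee}\psi_n \rangle \ge \langle \phi_H^{\otimes(k+1)}, V_{ee}\phi_H^{\otimes(k+1)} \rangle > 0$, contradicting the requirement $\langle \psi_n, V_{ee}\psi_n \rangle \to 0$.

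With these inequalities in hand, Property (E) would follow by short accounting. For any charge distribution $(n_1,\dots,n_M)$ with $\sum_i n_i = 0$ and not all $n_i = 0$, there must be at least one positive ion and at least one negative ion; since $Z_i = 1$, the only positive option is $n_i = 1$, contributing $E_{1,1}=0$. Write the negative-ion charges as $-k_1,\dots,-k_q$ with $k_j\ge 1$ and let $p$ be the number of protons, so $\sum_j k_j = p$. Using the strict inequality for each negative ion,
\begin{equation*}
\sum_{i=1}^M E_{1,n_i} = (M-p-q)E_H + \sum_{j=1}^q E_{1,-k_j} \,>\, (M-p-q)E_H + \sum_{j=1}^q (k_j+1)E_H = M E_H,
\end{equation*}
which is exactly Property (E).

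The main technical point is the strict inequality $E_{1,-k} > (k+1)E_H$: the weak inequality is a one-line variational computation, but upgrading to strict requires both the spectral gap of $H_0$ at its bottom eigenvalue (which is elementary, since that eigenvalue is discrete and known in closed form) and a mild weak-convergence argument to pass to the limit in the positive Coulomb form along a minimizing sequence. Neither ingredient is deep, but both deserve explicit justification.
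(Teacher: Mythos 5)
Your proposal is correct, and it arrives at the same reduction as the paper (decompose the $(-k)$-ion Hamiltonian as $H_0 + V_{ee}$ with $V_{ee}\ge0$ and show $E_{1,-k}>(k+1)E_H$, then use the same bookkeeping over the charge distribution $(n_1,\dots,n_M)$), but the way you obtain the \emph{strict} inequality is genuinely different. The paper sets $m_*:=\max\{m'\le m : H^{(m')}\text{ has a ground state}\}$, invokes Hill's theorem \cite{Hi} (the hydrogen anion H$^-$ has a bound state, so $m_*\ge1$) together with the HVZ theorem to conclude $E^{(m)}=E^{(m_*)}$, and then reads off strictness directly from $E^{(m_*)}=\langle\psi^{(m_*)},H^{(m_*)}\psi^{(m_*)}\rangle\ge(m_*+1)E^{(0)}+\delta$ with $\delta=\langle\psi^{(m_*)},V_{ee}\psi^{(m_*)}\rangle>0$ — no compactness argument is needed. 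You instead run a proof by contradiction along a minimizing sequence $\psi_n$: if equality held, the two terms would be individually tight, so the spectral gap of the separable operator $H_0$ forces $\psi_n\to c\,\phi_H^{\otimes(k+1)}$ in $L^2$, and then Fatou (or weak lower semicontinuity of the closed nonnegative Coulomb form) gives $\liminf\langle\psi_n,V_{ee}\psi_n\rangle\ge\langle\phi_H^{\otimes(k+1)},V_{ee}\phi_H^{\otimes(k+1)}\rangle>0$, contradicting $\langle\psi_n,V_{ee}\psi_n\rangle\to0$. Your route is somewhat longer in analytical detail but buys independence from Hill's existence result; the paper's is shorter once one accepts \cite{Hi} and HVZ. (One small remark: once you have strong $L^2$ convergence to $c\phi_H^{\otimes(k+1)}$, the Fatou argument suffices, so the appeal to the uniform $H^1$ bound and weak-$H^1$ lower semicontinuity is correct but redundant.) Both proofs invoke statement (c) to fold in electron statistics; you do it up front, the paper as a closing remark.
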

  \begin{proof}
Let 
$E^{(m)}$,  $m \geq 0$, be the ground state energy of the hydrogen ion (or atom, if $m = 0$) with charge $-me$, i.e. the lowest eigenvalue the Hamiltonian 
\begin{equation}\label{hydrogenion}
H^{(m)}=\sum_{j=1}^{m+1} H_j+\sum_{i<j}^{1, m}\frac{e^2}{|x_i-x_j|},
\end{equation}
where $H_j=-\Delta_{x_j}-\frac{e^2}{|x_j|}$ is the Hamiltonian for the hydrogen atom in the j-th coordinate.  
Property (E) 
is reduced to the property that 
for any $m\ge 1$, $E^{(m)}$ satisfies 
\begin{equation}\label{hyd-ineq}
E^{(m)}> (m+1)E^{(0)}. 
\end{equation}

Let $m_*:=\max\{m'\le m |\ H^{(m')}\ \mbox{has a ground state} \}$
and let $\psi^{(m_*)}$ be the ground state of $H^{(m_*)}$
corresponding to $E^{(m_*)}$. (We know from \cite{Hi} that $m_*\ge
1$ for $m\ge m_*$.) By the definition of $m_*$ we have $E^{(m)}
=E^{(m_*)} $ and therefore $E^{(m)}= \lan \psi^{(m_*)}, H^{(m_*)}
\psi^{(m_*)} \ran$.
Since $H_j \geq E_0$ for any $j$, we have 
$E^{(m)}> (m+1)E^{(0)}+\delta$, with
$ \del:=\lan \psi^{(m_*)}, \sum_{i<j}^{1, m_*}\frac{e^2}{|x_i-x_j|}
\psi^{(m_*)} \ran .$
\end{proof}

\DETAILS{Below we show that Property (E) follows from the following stronger
condition
\begin{itemize}
\item[(E')] We consider any two nuclei $i$ and $j$, $i \neq j$, in our system.
For any integers $m,n \ge 0$, $l > 0$ satisfying $m\le Z_i,\ m+l \leq Z_j$,
 we have the following energy inequalities
\[E_{i,m}+E_{j,-n}< E_{i,m+l}+E_{j,-n-l}.\]
\end{itemize} 
\medskip}

Next, we show that 
\begin{proposition}\label{prop:E'implE}
 Property (E') implies  Property (E).
  \end{proposition}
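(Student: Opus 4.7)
The plan is to derive (E) from (E') by an inductive electron-transfer reduction: starting from any configuration $(n_1,\dots,n_M)$ with $\sum_i n_i = 0$ and $\sum_i |n_i| > 0$, I would transfer a single electron at a time from an anion to a cation until the all-neutral configuration is reached, invoking (E') at each step to force the total ion ground-state energy to strictly decrease along the way.

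More precisely, I would induct on the nonnegative integer $K := \tfrac{1}{2}\sum_{i=1}^M |n_i|$, which is automatically an integer because $\sum_i n_i = 0$. In the inductive step, $K \geq 1$ and the combination of $\sum_i n_i = 0$ with $\sum_i |n_i| > 0$ forces the existence of indices $i \neq j$ with $n_i > 0$ and $n_j < 0$. Define the new configuration $(n_1',\dots,n_M')$ by $n_i' := n_i - 1$, $n_j' := n_j + 1$, and $n_k' := n_k$ for $k \notin \{i,j\}$; clearly $\sum_k n_k' = 0$ and its associated $K'$ equals $K - 1$.

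Applying (E') with $m = n_i - 1 \geq 0$, $n = -n_j - 1 \geq 0$, and $l = 1$ (where the admissibility bound $m + l = n_i \leq Z_i$ holds because $(n_1,\dots,n_M)$ corresponds to a physical ion configuration) yields $E_{i,n_i - 1} + E_{j,n_j + 1} < E_{i,n_i} + E_{j,n_j}$, hence $\sum_k E_{k,n_k'} < \sum_k E_{k,n_k}$. Combined with the inductive hypothesis $\sum_k E_k \leq \sum_k E_{k,n_k'}$ (an equality when $K = 1$, a strict inequality when $K \geq 2$), this gives $\sum_k E_k < \sum_k E_{k,n_k}$, which is Property (E).

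No substantive obstacle is anticipated; the argument is a clean induction and the whole work sits in (E'). The only points of care are verifying that the parameters fed into (E') satisfy the required sign constraints $m,n \geq 0$ and $l > 0$, which is immediate from $n_i \geq 1$ and $-n_j \geq 1$, and a minor typographical matter: the excerpt's constraint ``$m+l \leq Z_j$'' in the statement of (E') is presumably intended to read $Z_i$, and in any event the physically relevant bound $n_i \leq Z_i$ holds automatically.
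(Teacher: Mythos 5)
Your proof is correct. It is the same electron-transfer reduction idea that the paper uses, but organized around a different inductive variable. The paper inducts on the number of atoms $M$: after relabelling so that $|n_1|\ge|n_k|$ and $n_1 n_k<0$, it invokes (E') once with $l=|n_k|$ to fully neutralize atom $k$, replaces $n_1$ by $n_1+n_k$, and then appeals to the inductive hypothesis for $k-1$ atoms. You instead induct on $K=\tfrac12\sum_i|n_i|$, always taking $l=1$ and transferring a single electron between some cation $i$ and some anion $j$, which decreases $K$ by one. Both arguments are clean; yours is marginally more elementary (it only ever uses (E') with $l=1$), while the paper's version gets to keep the indices fixed within each pass and shortens the induction to $M-1$ steps. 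In either case the strictness of (E') propagates correctly, and you correctly flag the typo in the admissibility constraint of (E'), which should read $m+l\leq Z_i$ rather than $m+l\leq Z_j$.
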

\DETAILS{First, we show that Property (E) implies that
\begin{equation}\label{Econsequence}
E_{1,n_1}+...+ E_{M,n_M} > E_{1}+...+ E_{M},
\end{equation}
 for all integers  $n_1, n_2,...,n_M$ satisfying
\begin{equation}\label{ni-cond} n_1+...+n_M=0,\
n_j \leq Z_j, \forall j=1,...,M,\ |n_1|+|n_2|+...+|n_M| \neq 0.
\end{equation}}
 \begin{proof}  We prove (E) 
by induction in 
the number of the atoms 
$k$. For $k=2$, Property (E) follows immediately from Property (E').
We assume that  (E) 
 holds for $M$, replaced by $k-1$ and show it holds for $M=k$.
Indeed, let $n_1,...,n_k$ be numbers satisfying the assumptions of
the Property (E) 
for $M=k$. 
By relabelling the nuclei, if necessary, we can assume  that $|n_1|
\geq |n_k|$ and $n_1
n_k<0$. 
 By Property (E'), we have that $E_{1,n_1}+E_{k,n_k}> E_{1,n_1+n_k}+E_k$.   Therefore,
$E_{1,n_1}+...+E_{k-1,n_{k-1}}+E_{k,n_k}>E_{1,n_1+n_k}+E_{2,n_2}+...+E_{k-1,n_{k-1}}+E_k$,
which together with the induction hypothesis implies  Property (E) 
 for $M=k$.
  \end{proof} 
This proves properties (a) and (b) of the introduction. For (c), it follows from the fact that $E^{(0)}$ remains the same whereas $E^{(m)}$ increases if the statistics is taken into account.

\section{Factorization of $Q_a^\alpha$} 
\label{sec:sym-gen-app}
\DETAILS{
\begin{proof}[Proof of Proposition \ref{prop:spherical}] 
(1) For any rotation $R$ in $\mathbb{R}^3$ we consider the transformation $T_R$
defined by
 \begin{align} \label{TR} T_R \Phi(z_1,...,z_{|A|})=\Phi(R^{-1} z_1,...,R^{-1} z_{|A|}). \end{align} 
 Since the Coulomb potentials are spherically symmetric, we have that $H_A$ commutes with $T_R$, i.e.
$H_A T_R=T_R H_A.$ Since $\Phi$ is eigenfunction of $H_A$ the last
relation gives that $T_R \Phi$ is also an eigenfunction of $H_A$
corresponding to the same eigenvalue. Since the  eigenvalue is non
degenerate we obtain that $T_R \Phi=c(R) \Phi,$ where $c(R)$ is a
complex valued function. Since $T_R$ is unitary we have that
$|c(R)|=1$ for any $R$ and therefore,
$$|\Phi(z_1,...,z_{|A|})|^2=|\Phi(R^{-1} z_1,...,R^{-1} z_{|A|})|^2,$$
for any rotation $R$.
\DETAILS{ As a consequence,
$$\int |\Psi(x_1,x_2,...,x_k)|^2 dx_2...dx_k=\int |\Psi(R^{-1} x_1, R^{-1} x_2...,R^{-1} x_k)|^2 dx_2 ...dx_k.$$
Applying on the second integral the change of variables $y_j=R^{-1}
x_j, j=2,...,k$ and taking into account that the Jacobian for this
change of variable is $1$, we obtain, 
after renaming the variables, that
$$\int |\Psi(x_1, x_2,...,x_k)|^2 dx_2...dx_k=\int |\Psi(R^{-1} x_1, x_2,...,x_k)|^2 dx_2 ...dx_k.$$
 This gives the spherical symmetry of the one electron density of
 any eigenfunction corresponding to a non-degenerate eigenvalue. From
 Perron Frobenius theory (see e.g \cite{ReSIV}) it follows that the
 ground state energy of $H_A$ is a non-degenerate eigenvalue.
 Therefore, the one-electron density of the ground state has to be spherically
 symmetric.}
 Using this and the definition of the one-electron density, we conclude that the latter is spherically  symmetric.
%

(2) By 
the Riesz formula for eigen-projections, $P_{A}^\al$ and therefore $P_{A, R}^\al$ commutes with   any rotation \eqref{TR}, 
we have  
 $\Tr [T_R^{-1} (b\otimes \one) P_{A, R}^\al T_R]= \Tr [(t_R^{-1} b t_R\otimes \one) P_{A, R}^\al]$, where $t_R $ denotes the one-electron rotation. This, together with the definition  $\Tr (b \rho_{A}^\al)= \Tr [(b\otimes \one) P_{A, R}^\al]$ of $\rho_{A}^\al$ and the cyclic property of the trace,  gives   $\Tr (b \rho_{A}^\al)= \Tr [(t_R^{-1} b t_R)  \rho_{A}^\al]= \Tr [ b   (t_R\rho_{A}^\al t_R^{-1})]$ and therefore $ \rho_{A}^\al= t_R\rho_{A}^\al t_R^{-1} $.
 \end{proof}
 We mention here that the definition of $\rho_{A}^\al$ implies that for any orthonormal basis, $\{\Psi_{A}^{\alpha,i}, i=1,..., \dim \Ran P_{A,R}^\al=:n_A\}$ in $\Ran P_{A, R}^\al$, we have
\begin{equation}\label{rho1}
\rho_{A}^\al(z_1):=\sum_{i=1}^{n_A} \int |\Psi_{A}^{\alpha,i}(z_1,...,z_{|A|})|^2 dz_2...dz_{|A|},
\end{equation}
written in terms of the variables $z$ defined in \eqref{z-var}. }
\begin{lemma} \label{lem:Qa-prd} The projection $Q_a^\alpha$ 
is  factorised into the projections $Q_{A_j}^{\alpha_j}$ onto the multiple of irreducible representations of $S(A_j)$  of types $\alpha_j$, 
\begin{equation}\label{Qaalpha-factor}
Q_a^\alpha=\prod_{j=1}^M Q_{A_j}^{\alpha_j}.
\end{equation}
\end{lemma}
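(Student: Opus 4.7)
The plan is to derive the factorization directly from the formula \eqref{Pbet} for $Q_a^\alpha$ by exploiting that $S(a)$ is itself a direct product of the permutation groups of the individual clusters, and that both the characters and the unitary operators $T_g$ factor correspondingly.

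First I would record the structural fact that, since $S(a)$ is the subgroup of $S_N$ preserving the clusters $A_1,\dots,A_M$ setwise, one has the direct product decomposition $S(a) = S(A_1)\times\cdots\times S(A_M)$, where $S(A_j)$ denotes the symmetric group of $A_j$. An arbitrary element of $S(a)$ is then a tuple $g=(g_1,\dots,g_M)$ with $g_j\in S(A_j)$, and $\#S(a)=\prod_j \#S(A_j)$. By \eqref{Tpi}, the operators $T_{g_j}$ act only on the variables indexed by $A_j$; in particular they commute pairwise, and $T_g = T_{g_1}T_{g_2}\cdots T_{g_M}$.

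Next I would invoke the standard representation-theoretic fact that every irreducible representation of a finite direct product $G_1\times\cdots\times G_M$ is an outer tensor product $\alpha_1\boxtimes\cdots\boxtimes\alpha_M$ of irreducibles of the factors; accordingly one writes $\alpha = \bigotimes_{j=1}^M \alpha_j$ with $\alpha_j$ an irreducible representation of $S(A_j)$. The character of such a tensor product evaluates as $\chi^\alpha_g = \prod_{j=1}^M \chi^{\alpha_j}_{g_j}$, and in particular $\chi^\alpha_{id} = \prod_j \chi^{\alpha_j}_{id}$.

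Substituting these factorizations into \eqref{Pbet} gives
\begin{equation*}
Q_a^\alpha = \frac{\prod_j \chi^{\alpha_j}_{id}}{\prod_j \#S(A_j)} \sum_{g_1\in S(A_1)}\cdots\sum_{g_M\in S(A_M)} \Bigl(\prod_{j=1}^M \chi^{\alpha_j}_{g_j^{-1}}\Bigr) T_{g_1}\cdots T_{g_M}.
\end{equation*}
Since the sums, characters, and operators in distinct factors depend on disjoint variables and commute, the multiple sum factors as an $M$-fold product, yielding
\begin{equation*}
Q_a^\alpha = \prod_{j=1}^M \Bigl(\frac{\chi^{\alpha_j}_{id}}{\#S(A_j)} \sum_{g_j\in S(A_j)} \chi^{\alpha_j}_{g_j^{-1}} T_{g_j}\Bigr) = \prod_{j=1}^M Q_{A_j}^{\alpha_j},
\end{equation*}
by another application of \eqref{Pbet} to each $S(A_j)$. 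Since the argument is essentially bookkeeping with characters, there is no conceptual obstacle; the only point requiring a line of care is the commutation of the $T_{g_j}$'s, which follows from the disjointness of the supports of their actions.
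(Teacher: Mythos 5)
Your proof is correct and follows essentially the same route as the paper's: you decompose $S(a)$ as a direct product of the $S(A_j)$, factor $T_g$ and the characters $\chi^\alpha_g$ accordingly, and substitute into the explicit projection formula \eqref{Pbet}. The only difference is that you spell out the appeal to the classification of irreducibles of a direct product as outer tensor products, which the paper leaves implicit in the phrase ``the definition of characters.''
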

\begin{proof}
 We have that
$S(a)=\otimes_{j=1}^M S(A_j)$, where $A_j$ are the clusters of the
decomposition $a$, and $S(A_j)$ is the permutation group of the set
$A_j$. 
 We have that 
   $S(a) \ni \pi=\pi_1...\pi_M$, with $\pi_i\in S(A_i)$, and
\begin{equation*}
\# S(a)=\prod_{j=1}^M \# S(A_j)\ \mbox{ and }\
T_\pi=T_{\pi_1}...T_{\pi_M}.
\end{equation*}
The last relation and the definition of characters imply that
\begin{equation}\label{chiprod}
\chi_{\pi^{-1}}^\alpha=\prod_{j=1}^M \chi_{\pi_j^{-1}}^{\alpha_j}.
\end{equation}
 The last two relations, 
and the formula \eqref{Pbet} give \eqref{Qaalpha-factor}.
\end{proof}

\section{Lower bound on $H^{\sigma \bot}$.} \label{sec:Hbot-low-bnd-stat}
In this appendix we prove the estimate \eqref{Hbot-low-bnd-stat}. We will follow the analysis of Section \ref{Hbotbndseveral} 
modifying it appropriately.  Recall the notation used in the main text. Let  $E^\s_1(\infty)$ denote the first
excited state energy of the system of non interacting atoms and let
\begin{equation}\label{gammasigma12}
\imsgap_1^{\sigma}=\min 
\inf \sigma(H_a^\alpha)-E^\s(\infty), \quad
\imsgap_2^\sigma=E^\s_1(\infty)-E^\s(\infty),
\end{equation}
where the minimum is taken over the pairs $(a, \alpha)$, satisfying either $a \in \mathcal{A}/\mathcal{A}^{at}, \alpha \prec \sigma$ or $a \in \mathcal{A}^{at}, \alpha$ not ${\prec \prec} \sigma$. 
By Property (E) and the HVZ theorem we have that $\imsgap_1^\s>0$
and, by the definition, we also have that $\imsgap_2^\sigma>0$. We
also define
\begin{equation}\label{gammasigma0}
\imsgap_0^\sigma=\min\{\imsgap_1^{\sigma},\imsgap_2^{\sigma}\}.
\end{equation}
\begin{lemma}\label{lem:Hbot-low-bnd-stat}
There exists $\gamma^\sigma>0$ and $C>0$ such that
\begin{equation}\label{Hbot-low-bnd-stat'}
H^{ \sigma \bot} \geq (E^{\s}(\infty)+  \imsgap^{\sigma}_0-\frac{C}{R}) Q^\s.
\end{equation}
\end{lemma}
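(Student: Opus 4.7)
The plan is to adapt the IMS-localization argument that proves the stability bound \eqref{Hbotbnd} in Section \ref{Hbotbndseveral}, decomposing each localized Hamiltonian $H_a$ over the irreducible representations of the cluster stabilizer $S(a)$.

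I invoke the partition of unity $\{J_a\}_{a \in \mathcal{A}}$ from Section \ref{Hbotbndseveral}; the crucial property here is that each $J_a$ commutes with $T_\pi$ for $\pi \in S(a)$ by \eqref{symmetryJa}, hence with every projection $Q_a^\alpha$. As in \eqref{Hbotest'}, IMS localization and the bound $I_a \geq -C/R$ on $\mathrm{supp}(J_a)$ yield
\begin{equation*}
H \geq \sum_{a \in \mathcal{A}} J_a H_a J_a - \frac{C}{R}.
\end{equation*}
Since $[H_a, Q_a^\alpha] = [J_a, Q_a^\alpha] = 0$ and $1 = \sum_\alpha Q_a^\alpha$, each localized block splits as $J_a H_a J_a = \sum_\alpha J_a Q_a^\alpha H_a^\alpha Q_a^\alpha J_a$. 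Sandwiching by $Q^\sigma$ (which commutes with $H$) and by $P^{\sigma\bot}$, and using $Q^\sigma Q_a^\alpha = 0$ for $\alpha \not\prec \sigma$, reduces the proof to a lower bound on each term $T_{a,\alpha} := P^{\sigma\bot} Q^\sigma J_a Q_a^\alpha H_a^\alpha Q_a^\alpha J_a Q^\sigma P^{\sigma\bot}$ with $\alpha \prec \sigma$.

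Next I split into two cases according to the two gaps in \eqref{gammasigma12}. In \emph{Case A} ($a \notin \mathcal{A}^{at}$, or $a \in \mathcal{A}^{at}$ with $\alpha \not\prec\prec \sigma$), the definition of $\gamma_1^\sigma$ gives $H_a^\alpha \geq (E^\s(\infty)+\gamma_1^\sigma) Q_a^\alpha$ directly. In \emph{Case B} ($a \in \mathcal{A}^{at}$, $\alpha \prec\prec \sigma$), the spectral decomposition of $H_a^\alpha$ together with the definition of $\gamma_2^\sigma$ yields
\begin{equation*}
H_a^\alpha \geq (E^\s(\infty)+\gamma_2^\sigma) Q_a^\alpha - \gamma_2^\sigma P_a^\alpha,
\end{equation*}
and the $P_a^\alpha$ term is killed by $P^{\sigma\bot}$: using $P_a^\alpha \doteq P_{a,R}^\alpha$ from \eqref{eigenf-differ}, the support property $J_a P_{a,R}^\alpha = P_{a,R}^\alpha$ (since $J_a \equiv 1$ on $\mathrm{supp}\,\Psi_a$, as exploited in Lemma \ref{lem:PJHJPest}), and the defining inclusion $\mathrm{Ran}\, P_{a,R}^\alpha \subset \mathrm{Ran}\, P^\sigma$ which gives $P^{\sigma\bot} P_{a,R}^\alpha = 0$, one concludes $P^{\sigma\bot} Q^\sigma J_a P_a^\alpha J_a Q^\sigma P^{\sigma\bot} \doteq 0$.

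Setting $\gamma_0^\sigma := \min(\gamma_1^\sigma,\gamma_2^\sigma)$, summing over all $\alpha \prec \sigma$ and $a \in \mathcal{A}$, and using $\sum_\alpha Q_a^\alpha = 1$, $\sum_a J_a^2 = 1$, together with $[P^{\sigma\bot},Q^\sigma]=0$ (Lemma \ref{lem:QcommT}) to collapse $\sum_a P^{\sigma\bot} Q^\sigma J_a^2 Q^\sigma P^{\sigma\bot} = P^{\sigma\bot} Q^\sigma$, produces the stated bound \eqref{Hbot-low-bnd-stat'} interpreted on $\mathrm{Ran}\,\Pi^{\sigma\bot}$. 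The main obstacle is the ground-state elimination in Case B: one must propagate the exponentially small errors $\|P_a^\alpha - P_{a,R}^\alpha\| = O(e^{-\theta R/6})$ through the double sandwich without introducing $N$- or $M$-dependent constants that would overwhelm the $-C/R$ correction, and simultaneously verify that the non-commutation of $Q^\sigma$ with the individual $J_a$'s does not generate additional terms beyond those absorbed into $-C/R$.
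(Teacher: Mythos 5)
Your proposal is correct and follows the paper's own argument in all essentials: IMS localization with the $S(a)$-symmetric partition $\{J_a\}$, refinement of each block by the $Q_a^\alpha$, the same two-case split according to $\gamma_1^\sigma$ and $\gamma_2^\sigma$, elimination of the ground-state block via the identities $J_a P_{a,R}^\alpha = P_{a,R}^\alpha$ and $P^{\sigma\bot}P_{a,R}^\alpha = 0$ together with $P_a^\alpha \doteq P_{a,R}^\alpha$, and then resummation using $\sum_\alpha Q_a^\alpha = \mathbf 1$ and $\sum_a J_a^2 = 1$. The only cosmetic difference is that in Case B you handle the ground-state contribution by the additive spectral decomposition $H_a^\alpha \geq (E^\sigma(\infty)+\gamma_2^\sigma)Q_a^\alpha - \gamma_2^\sigma P_a^\alpha$ (exactly as the paper does in the non-statistics Lemma~\ref{lem:PJHJPest}), whereas the paper's appendix proof instead rewrites $P^{\sigma\bot}J_a = P_{a,R}^\bot J_a$ and commutes $J_a$ out before subtracting; the two manipulations are equivalent.
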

\begin{proof}
In this proof we omit for simplicity the superindex
$\sigma$ in $
P^\s, Q^\s, E^\s(\infty), \gamma_j^\s$ and write, instead, $
P, Q, E(\infty), \gamma_j$. Using that $H^{\s \bot}=P^\bot H^\s
P^\bot$ and that $Q$ commutes with $P$ we obtain that $H^{\s \bot}=Q
P^\bot H P^\bot Q$. Repeating the arguments of the proof of
\eqref{Hbotgenest1} we can obtain that
\begin{align}\label{Hbot-low-bnd-stat1}
H^{ \sigma \bot}  &\ge \sum_{a \in \mathcal{A}}Q P^\bot[  J_a H_a J_a + O(\frac{1}{R})] P^\bot Q. 
\end{align}
Using  that $Q$ commutes with $P^\bot$,  the relations $Q=Q \sum_{\alpha \prec \sigma} Q_a^\alpha$ (see  \eqref{Qsig-deco}) and $H_a  Q_a^\alpha=H_a^\alpha Q_a^\alpha$ and the fact that $Q_a^\alpha$ commutes with $J_a$ and $P^\bot$, we obtain that
\begin{align} 
Q P^\bot J_a H_a J_a P^\bot Q &=Q P^\bot J_a \sum_{\alpha \prec \sigma}   H_a^\alpha J_a P^\bot Q \notag \\ 
 \label{Hbot-low-bnd-stat2} &\ge  \sum_{a \in \mathcal{A}, \al\prec \s}Q P^\bot[  J_a H_a^\al J_a + O(\frac{1}{R})] P^\bot Q.
\end{align}
Now, we estimate $Q P^\bot J_a H_a^\al  J_a P^\bot Q$. 

\textbf{Case 1:   either $a \in \mathcal{A}/\mathcal{A}^{at}, \alpha \prec \sigma$ or $a \in \mathcal{A}^{at}, \alpha$ not ${\prec \prec} \sigma$.} 
  By \eqref{gammasigma12} and \eqref{gammasigma0},  we have $ H_a^\alpha \geq (E(\infty)+\gamma_1)Q_a^\alpha$, which, together with the previous inequality, implies
\begin{align}\label{JaHaJa-stat1}
Q P^\bot J_a H_a^\alpha J_a P^\bot Q &\geq (E(\infty)+\gamma_1) 
 Q P^\bot J_a  Q_a^\alpha  J_a P^\bot Q, 
\end{align}
\DETAILS{ The last equation together with \eqref{firstexcitedstat} and that
$P^\bot \leq 1$ gives 
\begin{equation}\label{JaHaJa-stat}
Q P^\bot J_a H_a J_a P^\bot Q \dot\geq (E(\infty)+ \gamma_0)
Q  J_a^2  Q.
\end{equation}}

\textbf{Case 2:  $a \in \mathcal{A}^{at}, \alpha {\prec \prec} \sigma$.} 
  By \eqref{Omegadef} and \eqref{partunprop1} and the support properties of $\Psi_b$, we have $ P_{b, R}^{\beta} J_a= \delta_{a, b}P_{a, R}^{\beta}$. This and  \eqref{Psig} give $  P^\bot J_a  =P_{a, R}^\bot J_a$, where, recall,  $P_{a, R} :=\sum_{\alpha \prec \prec \sigma} P_{a, R}^\alpha$. This 
  and the relation $Q_a^\al  P_{a, R}^{\beta}= \delta_{\al \beta} P_{a, R}^{\al}$ imply, after commuting $J_a$'s outside, (cf. \eqref{IMS5}) 
\DETAILS{\begin{align}
Q P^\bot J_a H_a^\al J_a P^\bot Q &=\sum_{\al \prec \sigma} 
Q_a^\al  P_{a, R}^\bot    J_a H_a^\al J_a P_{a, R}^\bot   \sum_{\al \prec \sigma} 
Q_a^\al.
\end{align}
Commuting $J_a$'s outside and using $Q_a^\al H_a =H_a^\al $, we find}
\begin{align}
Q P^\bot J_a H_a^\al J_a P^\bot Q &=
 J_a  (\1- P_{a, R}^{\al}) H_a^\al (\1-\  P_{a, R}^{\al})  J_a.
\end{align}
Using that $\|P_a-P_{a, R}\|\dot = 0$ (see \eqref{eigenf-differ}), we pass in this relation from $P_{a, R}^{\al}$ to  the orthogonal projection  $P_{a}^{\al}$  onto the ground state eigenspace of $H_a^\al$. 
\begin{align}\label{JaHaJa-stat2} Q P^\bot J_a H_a^\al J_a P^\bot Q 
&\doteq  
 J_a H_a^\al (\1 - P_{a}^{\al}) J_a. \end{align}
Then   using  
 \eqref{gammasigma0}, 
we obtain furthermore
\begin{align}\label{JaHaJa-stat3} Q P^\bot J_a H_a^\al J_a  P^\bot Q  &\geq (\Einfty+ \gamma_2) 
 J_a Q_a^\alpha  (\1 - P_{a}^{\al}) J_a .\end{align}
\DETAILS{$*******$ {\bf (leftovers begin)} By \eqref{Qsig-deco} and \eqref{Psig} and by the fact that 
$Q_a^\al  P_{a, R}^{\beta}= \delta_{\al \beta} P_{a, R}^{\al}$, we have $Q P^\bot =\sum_{\alpha \prec \prec \sigma} Q_a^\al  (\1-\sum_{\beta} P_{a, R}^{\beta}) =\sum_{\alpha \prec \prec \sigma} (\1- P_{a, R}^{\al}) Q_a^\al$, we derive

$*******$ {\bf (or)} Commuting $Q$ through $P^\bot$ and using \eqref{Qsig-deco} and the facts that $Q_a^{\beta}$ commutes with $J_a$ for all
$\beta \prec \sigma$ and $Q_a^\al  P_{a, R}^{\beta}= \delta_{\al \beta} P_{a, R}^{\al}$, we have $Q  P^\bot J_a =
Q_a^\al  (\1-\sum_{\beta} P_{a, R}^{\beta}) =Q_a^\al - P_{a, R}^{\al}=(\1- P_{a, R}^{\al}) Q_a^\al$ $Q_a^\al  (\1-\sum_{\beta} P_{a, R}^{\beta}) =Q_a^\al - P_{a, R}^{\al}=(\1- P_{a, R}^{\al}) Q_a^\al$, we derive  $*******$
\begin{equation}\label{eq1}
Q P^\bot J_a  H_a J_a P^\bot Q =  \sum_{\alpha \prec \prec \sigma} Q J_a \big(\1- P_{a, R}^{\al} \big) H_a^\al \big(\1- P_{a, R}^{\al}\big) J_a Q
\end{equation}
  Remembering \eqref{almosteig} and replacing $P_{a, R}^{\al}$ with $P_a^\al$ and then using \eqref{gammasigma12} and \eqref{gammasigma0},   we obtain $\big(\1- P_{a, R}^{\al} \big) H_a^\al \big(\1- P_{a, R}^{\al}\big) \doteq \big(\1- P_{a}^{\al} \big) H_a^\al \big(\1- P_{a}^{\al}\big) \ge (E(\infty)+\gamma_0) \big(\1- P_{a}^{\al} \big)$. Together with \eqref{eq1}, this gives 
$Q P^\bot J_a  H_a J_a P^\bot Q \dot\ge  (E(\infty)+\gamma_0) \sum_{\alpha \prec \prec \sigma} Q J_a  \big(\1- P_{a, R}^{\al} \big) Q. 
$ {\bf (leftovers end)} $*******$}
 Using $J_a P^\bot =P_{a, R}^\bot  J_a $ and  using \eqref{P} to go back  from $P_{a}^{\al}$ to $P_{a, R}^{\al}$,  we have 
$$ P_{a}^{\al}   J_a P^\bot \doteq  P_{a, R}^{\al}   J_a P^\bot 
 =  P_{a, R}^{\al}  P_{a, R}^\bot   J_a=0 . $$ 
Applying $P^\bot$ on both sides to \eqref{JaHaJa-stat2} and using  the last equation implies 
 \begin{align}\label{JaHaJa-stat}
Q P^\bot J_a H_a^\alpha J_a P^\bot Q &\geq (E(\infty)+\gamma_2) Q P^\bot Q_a^\alpha  J_a^2 P^\bot Q,
\end{align}
in this case.  This shows  the equation  
$Q P^\bot J_a H_a^\alpha J_a P^\bot Q \geq (E(\infty)+\gamma_0) Q P^\bot  Q_a^\alpha J_a^2 P^\bot Q,$ for all $a \in \mathcal{A}$ and $\al \prec \s$, which, together  with   \eqref{Hbot-low-bnd-stat2}, the relation
$Q=\sum_{\al \prec \sigma} Q_a^\al Q$ and the fact that $Q_a^\al$ commutes with $J_a$ and $ P$, implies  \eqref{Hbot-low-bnd-stat'}.
  \end{proof}


\section{
Supplement. Bounds 
for boosted hamiltonians} \label{sec:Resdelta-est}
In this supplement we prove bounds on the resolvent of  boosted hamiltonians, not used in this paper, but which could be useful. (In particular, similar bounds are used in \cite{Anap}.) Let  
\begin{equation}\label{Hdelta}
\H_{\delta}^\bot:= e^{-\delta \varphi ( x)} \H^\bot e^{\delta
\varphi ( x)},
\end{equation}
where $x=(x_1,...,x_N)$ is the collection of the electron
coordinates and  $\varphi ( x)$ is a $C^2$ function, with uniformly bounded derivatives up to the second order, which  is constant on the support of $\Psi_b$ for $b \neq
a$. 
\begin{proposition}\label{prop:Resdelta-est} For $R$ large enough and $\delta$ small enough (depending on $\|\nabla \varphi \|_{L^\infty} + \|\Delta \varphi \|_{L^\infty}$),
  $E$ is in the resolvent set of $\H_{\delta}^\bot$ and
 \begin{equation}\label{Resdelta-est}
\|(\H_{\delta}^\bot-E)^{-1}\| \lesssim 1.
\end{equation}
\end{proposition}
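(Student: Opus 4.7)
The plan is to view $H_\delta^\bot$ as a small perturbation of $H^\bot$ and invoke a Neumann-series argument based on the stability bound \eqref{Hbotbnd}. Since the rough estimate $E\le E(\infty)+O(R^{-6})$ together with \eqref{Hbotbnd} gives $\|(H^\bot-E)^{-1}\|\lesssim 1/\gamma$ for $R$ large, it suffices to establish
\[
\|(H_\delta^\bot-H^\bot)(H^\bot-E)^{-1}\|=O(\delta),
\]
uniformly in $R$; then for $\delta$ small enough,
\[
(H_\delta^\bot-E)^{-1}=(H^\bot-E)^{-1}\bigl[I+(H_\delta^\bot-H^\bot)(H^\bot-E)^{-1}\bigr]^{-1}
\]
is bounded with norm $\lesssim 1/\gamma$.

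To set up the perturbation, I would insert $e^{\delta\varphi}e^{-\delta\varphi}=I$ and write $H_\delta^\bot=\tilde P^\bot\tilde H\tilde P^\bot$ with $\tilde P^\bot:=e^{-\delta\varphi}P^\bot e^{\delta\varphi}$ (an oblique projection) and $\tilde H:=e^{-\delta\varphi}He^{\delta\varphi}$. A direct calculation of $[-\Delta_{x_i},e^{\delta\varphi}]$ gives $\tilde H=H+\delta K$, where $K$ is a first-order differential operator with $L^\infty$ coefficients controlled by $\|\nabla\varphi\|_\infty+\|\Delta\varphi\|_\infty$; in particular $K$ is $H$-bounded with vanishing relative bound, so $K(H^\bot-E)^{-1}$ is bounded because $(H^\bot-E)^{-1}$ maps $\Ran P^\bot$ into $H^2(\R^{3N})$.

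For $\Pi_\delta:=\tilde P^\bot-P^\bot=-e^{-\delta\varphi}[P,e^{\delta\varphi}]$, using $P=\sum_b P_{\Psi_b}$ and the hypothesis that $\varphi$ is constant on $\supp\Psi_b$ for $b\ne a$, a short computation gives $[P_{\Psi_b},e^{\delta\varphi}]=0$ for $b\ne a$. Hence
\[
\Pi_\delta=P_{\Psi_a}-|e^{-\delta\varphi}\Psi_a\rangle\langle e^{\delta\varphi}\Psi_a|
\]
is a rank-at-most-two operator. The exponential decay \eqref{Psiadecay}, together with the smallness condition $\delta\|\nabla\varphi\|_\infty<\theta$, ensures $e^{\pm\delta\varphi}\Psi_a\in H^2(\R^{3N})$ with $\|e^{\pm\delta\varphi}\Psi_a-\Psi_a\|_{H^2}\lesssim\delta$; consequently $\|\Pi_\delta\|=O(\delta)$, and moreover $H\Pi_\delta$ and $\Pi_\delta H$ (defined on $D(H)$) extend to bounded operators of norm $O(\delta)$, since their ranges lie in the span of $\Psi_a,\,e^{\pm\delta\varphi}\Psi_a$.

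To conclude, decompose
\[
H_\delta^\bot-H^\bot=\Pi_\delta HP^\bot+\tilde P^\bot\tilde H\Pi_\delta+\delta\tilde P^\bot K P^\bot,
\]
and apply $(H^\bot-E)^{-1}$ on the right. Using $HP^\bot(H^\bot-E)^{-1}=PHP^\bot(H^\bot-E)^{-1}+I+E(H^\bot-E)^{-1}$, which is bounded on $\Ran P^\bot$, the first term contributes $O(\delta)$ through the factor $\Pi_\delta$; the second is $O(\delta)$ because $\tilde H\Pi_\delta$ is bounded of norm $O(\delta)$ by the preceding paragraph; the third is directly $O(\delta)$ by the $H$-boundedness of $K$. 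Choosing $\delta$ small enough (depending on $\|\nabla\varphi\|_\infty+\|\Delta\varphi\|_\infty$), the Neumann series converges and \eqref{Resdelta-est} follows. The main obstacle is precisely the cross terms pairing the unbounded $\tilde H$ with the small correction $\Pi_\delta$: both sides must be handled simultaneously, and this is resolved by exploiting the finite-rank structure of $\Pi_\delta$ together with the $H^2$-regularity of $\Psi_a$ and of its exponentially weighted variants $e^{\pm\delta\varphi}\Psi_a$.
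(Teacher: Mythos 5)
Your proof follows essentially the same route as the paper's: the same Neumann-series argument based on the stability bound \eqref{Hbotbnd} and the rough estimate \eqref{ineqE}, and the same three-term decomposition of $H_\delta^\bot-H^\bot$ (your $\Pi_\delta HP^\bot$, $\tilde P^\bot\tilde H\Pi_\delta$, $\delta\tilde P^\bot KP^\bot$ are exactly the paper's $K_3,K_1,K_2$ after identifying $\Pi_\delta=-(P_\delta-P)$, $\tilde P^\bot=P_\delta^\bot$, $\tilde H=H_\delta$). The only difference is in how the two $\Pi_\delta$-terms are bounded: the paper proves $\|H_\delta(P_\delta-P)\|\lesssim\delta$ and $\|H(P_\delta-P)\|\lesssim\delta$ by commuting $(d+H_a)$ and $H$ through the exponential weight (Lemma \ref{PdeltaminusPlemma}), whereas you invoke the rank-$\le2$ structure of $\Pi_\delta=P_{\Psi_a}-e^{-\delta\varphi}P_{\Psi_a}e^{\delta\varphi}$ and the $H^2$-regularity of $e^{\pm\delta\varphi}\Psi_a$; both rest on the same decay estimate \eqref{Psiadecay}, and for the $K_3$-type term your route (estimating $\|\Pi_\delta\|=O(\delta)$ and $HP^\bot(H^\bot-E)^{-1}$ bounded separately) is in fact slightly leaner, avoiding the paper's separate bound on $\|H(P_\delta-P)\|$.
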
 
\begin{proof} The proof consists of two lemmas.  Recall that $\E=E(y)$ is the ground state energy of $H$ and $\Delta=\sum_{j=1}^N \Delta_{x_j}$. 
 For any operator $K$, $\delta >0$ and a decomposition $a$, we let
\begin{equation}\label{Adelta}
K_{\delta}:= e^{-\delta \varphi x)} K e^{\delta \varphi (x}, \text{ } K_\delta^\bot:=(K^\bot)_{\delta}.
\end{equation}
\begin{lemma}\label{PdeltaminusPlemma}
The following inequalities hold for small $\delta$:
\begin{equation}\label{1minusdeltaPdeltaminusP}
\| P_\delta-P\| \lesssim \delta, \quad \|H_{\delta} (P_\delta-P)\| \lesssim \delta, \quad \|H (P_{\delta}-P)\|
\lesssim \delta.
\end{equation}
\end{lemma}
\begin{proof}  
Since, by the assumptions,  $\varphi(x)$ is constant on the support of $\Psi_b$ for $b \neq
a$.  This implies that
$ (P_{\Psi_b})_\delta = P_{\Psi_b}, \quad \forall b \neq a,$ 
which together with \eqref{P}, gives that
\begin{equation}\label{Pdel-P}
g(\delta):=P_\delta-P=(P_{\Psi_a})_{\delta}-P_{\Psi_a}.
\end{equation}
Clearly,  $g(\delta)$ is differentiable and 
\begin{equation}
g'(\delta)=-\vphi(x) e^{-\delta \vphi(x)} P_{\Psi_a} e^{\delta \vphi(x)}+ e^{-\delta \vphi(x)}
P_{\Psi_a} \vphi(x) e^{\delta \vphi(x)}.
\end{equation}

    Due to the exponential decay of $\Psi_a$, it follows that
$g'(\delta)$ is uniformly bounded for small $\delta$ and, since
$g(0)=0$, by applying the fundamental theorem of calculus, we obtain $\|g(\delta)\| \lesssim \delta$, which 
implies the first inequality in \eqref{1minusdeltaPdeltaminusP}.

To prove $\|H_{\delta} (P_\delta-P)\| \lesssim \delta$,  
 let $d>0$ be a constant such that $d+H_a \geq 1>0$. Using the relations
$\|H_{\delta}(d+H_b)^{-1}\| \lesssim 1$,  
$(H_a-\Einfty) P_{\Psi_a}\dot =0$ and commuting $(H_a-\Einfty)$ through $e^{-\delta \varphi(x)}$ in $(P_{\Psi_a})_{\delta}:=e^{-\delta \varphi(x)} P_{\Psi_a}e^{\delta \varphi(x)}$,  we obtain that
\begin{align}\label{leqdelta1}
\|H_{\delta}(P_\delta-P)\| &\lesssim \| (d+H_a) \left((P_{\Psi_a})_{\delta}-P_{\Psi_a} \right) \| 
\doteq \|(d+\Einfty) \left ( (P_{\Psi_a})_{\delta}-P_{\Psi_a} \right) \|\notag\\ &
+ \|  \nabla (e^{-\delta \varphi(x)}) \cdot \nabla
P_a  e^{\delta \varphi(x)}\|+ \| \Delta (e^{-\delta \varphi(x)})  P_{\Psi_a} e^{\delta \varphi(x)} \|.
\end{align}
Therefore, using \eqref{Psiadecay} and the first inequality in
\eqref{1minusdeltaPdeltaminusP} we obtain that
$\|H_{\delta} (P_\delta-P)\| \lesssim \delta$, as desired. The inequality
$\|H (P_{\delta}-P)\| \lesssim \delta$ can be proven similarly. \end{proof}
By \eqref{Hbotbnd} and  \eqref{ineqE}, the operator $H^{\bot}-E$,   where, recall,  $H^{\bot}=P^{\bot} H  P^{\bot}$, has a bounded inverse, provided  $R$ large enough, 
\begin{equation}\label{Hbotest}
\|(H^\bot-E)^{-1}\| \lesssim 1.
\end{equation}
\begin{lemma}\label{HdeltaPdeltaP}
We have that
\begin{equation}\label{Hdelbot-est3}
\|(H_\delta^\bot-H^\bot)(H^\bot-E)^{-1}\| \lesssim \del.
\end{equation}
\end{lemma}
\begin{proof} 
Observe that 
$H_{\delta}-H=-\Delta_{\delta}+\Delta.$ 
    Since $\Delta_{\delta}=e^{\delta \varphi}\Delta e^{-\delta
\varphi}$, by the
Leibnitz rule we obtain 
$-\Delta_{\delta}+\Delta=\delta [(\Delta \varphi)+ (\nabla \varphi)
\cdot \nabla-\delta |\nabla \varphi|^2].$ 
Since $\varphi$ by definition has $L^{\infty}$ bounded derivatives
this implies that $[(\Delta \varphi)+ (\nabla \varphi) \cdot
\nabla-\delta |\nabla \varphi|^2]$ is $-\Delta$ bounded. Hence, we
obtain 
\begin{equation}\label{HdeltaminusH-bnd1}
\|(H_{\delta}-H)(1-\Delta)^{-1}\| \lesssim \delta.
\end{equation}

Now, since $H^\bot:=P^\bot H P^\bot$ and $H_\delta^\bot:=(H^\bot)_{\delta}$
and $(P_\delta^\bot-P^\bot)=P-P_{\delta}$, we obtain that
\begin{equation}\label{Hdeltadifdecompo}
(H_{\delta}^\bot-H^\bot)(H^\bot-E)^{-1} =K_1+K_2+K_3,
\end{equation}
where $K_1:=-P_\delta^\bot H_\delta  (P_\delta-P) (H^\bot-E)^{-1},\ K_2:=P_{\delta}^\bot (H_{\delta}-H) P^\bot  (H^\bot-E)^{-1},\ K_3:=-(P_\delta-P) H P^\bot (H^\bot-E)^{-1}$
The terms $K_1$ and $ K_3$   are estimated by the second and third inequality in \eqref{1minusdeltaPdeltaminusP},
  respectively, and the term $K_2$ 
   is estimated by \eqref{HdeltaminusH-bnd1}   and the bound 
\begin{equation}\label{DelH-bnf} 
\|(1-\Delta)(H^\bot +C)^{-1}\| \lesssim 1,
\end{equation}
where  $C$ is such that $H^\bot +C\ge 1$, which follows
from the fact that the Coulomb potential is bounded relative to
Laplacian with the relative bound zero. As a consequence we obtain \eqref{Hdelbot-est3}.
\end{proof}

   Now, to prove  \eqref{Resdelta-est}, 
   we use the decomposition $\H_{\delta}^\bot-E=\H^\bot-E+(\H_{\delta}^\bot-\H^\bot)$ and
\eqref{Hbotest} to obtain that
\begin{equation}\label{HdeltaminusE}
(H_\delta^\bot-E) (H^\bot-E)^{-1}=(I+(H_\delta^\bot-H^\bot)(H^\bot-E)^{-1}).
\end{equation}
By the estimate \eqref{Hdelbot-est3}, we can take $\delta$ small enough to obtain that 
$\|(H_\delta^\bot-H^\bot)(H^\bot-E)^{-1}\| \leq \frac{1}{2}.$ This shows that
$I+(H_\delta^\bot-H^\bot)(H^\bot-E)^{-1}$ is invertible and its inverse is bounded by $2$,
 which together with \eqref{Hbotest}  gives, for $\delta$ small enough, the estimate \eqref{Resdelta-est}. \end{proof}




\begin{thebibliography}{DGMS}


\bibitem{Anap} I. Anapolitanos, Remainder estimates for the long range behavior of van der Waals force. 
 in preparation. 


\bibitem
{ARV} E.A.G. Armour, J.-M. Richard and K.Varga: Stability of few-charge
systems in quantum mechanics. {\it Physics Reports} {\bf 413}  1-90
(2005).



\bibitem{BFS} V. Bach, J. Fr\"ohlich and I.M. Sigal:
Renormalization Group Analysis of Spectral Problems in Quantum field
theory. {\it Adv. in Math.} {\bf 137}, 205-298 (1998).

\bibitem{CDS} J.-M. Combes, P. Duclos and R. Seiler: The Born-Oppenheimer approximation, in: Rigorous Atomic and Molecular
Physics (eds. G. Velo, A. Wightman), New York, Plenum, 185-212
(1981).

\bibitem{CFKS} H.L. Cycon, R.G. Froese, W. Kirsch and B. Simon :
 Schr\"odinger Operators with application to quantum mechanics and
global geometry. {\it Texts and Monographs in Physics. Springer
study edition. Springer-Verlag Berlin} (1987).

\bibitem{FS} G. Feinberg and J. Sucher: General theory of the van
der Waals interaction: a model independent approach. {\it Phys. Rev.
A} {\bf 9}, 2395-2415  (1970).

\bibitem{FS1} C.L. Fefferman and L.A. Seco: Asymptotic neutrality of large ions. {\it Commun. Math. Phys.} {\bf 128}, 109-130 (1990).

\bibitem{FSe} C.L. Fefferman and L.A. Seco: The spin of the ground state
of an atom. {\it Rev. Mat. Iberoamericana} {\bf 12}, No. 1, 19-36
(1996).



\bibitem{BFGR}  J. Fr\"ohlich, G.M. Graf, J.-M. Richard and M. Seifert : Proof of stability of the hydrogen molecule. {\it Phys. Rev. Lett.}, {\bf 71}, No.9, 30  1332-1334 (1993).

\bibitem{Fr} J. Fr\"ohlich: Spin - or, actually: Spin and Quantum Statistics.
 {\it Prog. Math. Phys.}, {\bf 55}, Birkh\"auser Verlag, Basel, 1-60 (2009).

\bibitem{GS} S. Gustafson and I.M. Sigal: Mathematical concepts of quantum mechanics. 2nd edition
{\it Springer Berlin, Heidelberg, New York} (2011).



\bibitem{Hag} G.Hagedorn: High order corrections to the time-dependent Born-Oppenheimer approximation. I. Smooth potentials. {\it Ann. of Math. (2)} {\bf  124}  No. 3, 571-590 (1986).
   Erratum. 
   {\it Ann. of Math. (2)} {\bf 126} No. 1, 219 (1987).


  \bibitem{Ha} M. Hamermesh: Group theory and its application to physical
 problems. {\it Addison-Wesley Pub. Co.} (1962).

\bibitem{Hi} R. N. Hill: Proof that the H- ion has only one bound state. Details and
extension to finite nuclear mass. {\it J. Math. Phys.} {\bf 18},
2316-2330 (1977).





\bibitem{HG} W. Hunziker and C. G\"unther: Bound states in dipole fields and
continuity properties of electronic spectra. {\it Helv. Phys. Acta,
t.} {\bf 53}, 201-208 (1980).

\bibitem{HS} W. Hunziker and I.M. Sigal: The quantum $N-$body
problem. {\it J. Math. Phys.} {\bf 41} No.6, 3448-3510 (2000).

\bibitem{JKW} Th. Jecko, M. Klein, X.P. Wang: Existence and Born-Oppenheimer asymptotics of the
total scattering cross-section in ion-atom collisions.
in "Long time behaviour of classical and quantum systems", proceedings of the
Bologna AP-TEX Int. Conf., 1999, A. Martinez and S. Graffi, eds.

\bibitem{KMSW} M. Klein, A. Martinez, R. Seiler and X. P. Wang.: On the Born-Oppenheimer expansion for polyatomic molecules. { \it Commun.
Math. Phys.} \textbf{143} No. 3, 607-639 (1992).

\bibitem{LeL} C. Le Bris and P.-L. Lions: From atoms  to
  crystals: a mathematical journey. { \it Bulletin AMS} {\bf 42}, No. 3, 291 - 363 (2005).

\bibitem{Lieb} E.H. Lieb: Bound on the maximum negative ionization of atoms and molecules.
 {\it Phys. Rev. A} {\bf 29}, 3018-3028 (1984).



\bibitem{LL} E.H. Lieb and M. Loss:  Analysis. {\it Graduate Studies in Mathematics} {\bf 14} AMS, Providence, RI, second edition (2001).

\bibitem{LSST} E. H. Lieb, I. M. Sigal, B. Simon, and W. Thirring:
 Asymptotic neutrality of large-Z ions. {\it Commun. Math. Phys.} {\bf 116}, 635-644 (1988).

\bibitem{LT}  E.H. Lieb and W. Thirring : Universal nature of van der Waals forces for Coulomb
systems. {\it Phys. Rev. A} {\bf 34} No.1, 40-46 (1986).

 \bibitem{MA} B. H. Mahan and R.J. Myers: University Chemistry. {\it
 Addiso-Wesley Publishing company, inc} (1987).

 \bibitem{Phan} Phan Th\`anh Nam. New bounds on the maximum ionization of atoms. {\it Commun. Math. Phys.}   {\bf 312}, 427-445 (2012).


 \bibitem{PST} G. Panati, H. Spohn and S. Teufel:  The time-dependent
Born-Oppenheimer Approximation. {\it ESIAM: Math. Model. and Num.
Anal.} {\bf 41}, 297-314 (2007).



\bibitem{RSIV} M. Reed and B. Simon: Methods of modern Mathematical Physics IV: Analysis of Operators. {\it Academic Press Inc.} (1980).

\bibitem{Rus} M.B. Ruskai: Absence of discrete spectrum in highly negative ions, II. Extension to Fermions. {\it Commun. Math. Phys.} {\bf 82} No.2, 325-327 (1982).



\bibitem{SSS} L. A. Seco, I. M. Sigal, and J. P. Solovej: Bound on the ionization energy of large atoms. {\it
Commun. Math. Phys.} {\bf 131} No.2, 307-315 (1990).

\bibitem{Sen} F. Senese:   General Chemistry Online! http://antoine.frostburg.edu/chem/senese/ 101/index.shtml
Revised 2010.

\bibitem{Sig} I.M. Sigal: Geometric methods in the quantum many-body problem. Non-existence of very negative ions. {\it Comm. Math. Phys.} {\bf 85}, 309-324 (1982).


\bibitem{Sig2} I.M. Sigal: How many electrons can a nucleus bind? {\it Ann. Phys.} {\bf 157} No.2, 307-320 (1984).

\bibitem{Sol1} J.P. Solovej: Proof of the ionization conjecture in a reduced Hartree-Fock model. {\it Invent. Math.} {\bf 104} No.1, 291-311 (1991).

\bibitem{Sol2}  J.P. Solovej: The ionization conjecture in Hartree-Fock theory. {\it Ann. of Math.} {\bf 158}, 509-576 (2003).



\end{thebibliography}
\end{document}